\definecolor{darkgreen}{rgb}{0, .5, 0}
\definecolor{darkred}{rgb}{.5, 0, 0}
\theoremstyle{plain}
\newtheorem{theorem}{Theorem}[section]
\newtheorem*{mocktheorem*}{Mock Theorem}
\newtheorem{proposition}[theorem]{Proposition}
\newtheorem{corollary}[theorem]{Corollary} 
\newtheorem{lemma}[theorem]{Lemma} 
\newtheorem{example}[theorem]{Example}
\theoremstyle{definition} 
\newtheorem{definition}[theorem]{Definition}
\numberwithin{equation}{section}
\newcommand{\E}{{\mathbb{E}}}
\providecommand{\R}{{\mathbb{R}}}
\newcommand{\dd}{{\rm d}}
\providecommand{\N}{{\mathbb N}}
\newcommand{\1}{\ensuremath{\mathbf{1}}}
\newtheorem{remark}[theorem]{Remark}
\def\P{{\mathbb P}}
\begin{document}
\singlespacing
\title{\LARGE\bf Financial Contagion in a Generalized Stochastic Block Model}
\author{Nils Detering\thanks{Department of Statistics and Applied Probability, University of California, Santa Barbara, CA 93106, USA. Email: detering@pstat.ucsb.edu}, Thilo Meyer-Brandis\thanks{Department of Mathematics, University of Munich, Theresienstra\ss{}e 39, 80333 Munich, Germany. Emails: meyerbra@math.lmu.de, kpanagio@math.lmu.de and ritter@math.lmu.de}, Konstantinos Panagiotou\footnotemark[2], Daniel Ritter\footnotemark[2]  }

\maketitle

\begin{abstract}
\noindent One of the most defining features of the global  financial network is its inherent complex and intertwined structure. From the perspective of systemic risk it is important to understand the influence of this network structure on default contagion. Using sparse random graphs to model the financial network, asymptotic methods turned out powerful to analytically describe the contagion process and to make statements about resilience. So far, however, they have been limited to so-called {\em rank one} models in which informally the only network parameter is the degree sequence (see \cite{Cont2016,Detering2016} for example) and the contagion process can be described by a one dimensional fix-point equation. These networks fail to account for a pronounced block structure such as core/periphery or a network composed of different connected blocks for different countries. We present a much more general model here, where we distinguish vertices (institutions) of different types and let edge probabilities and exposures depend  on the types of both, the receiving and the sending vertex plus additional parameters. Our main result allows to compute explicitly the systemic damage caused by some initial local shock event, and we derive a complete characterisation of resilient respectively non-resilient financial systems. This is the first instance that default contagion is rigorously studied in a model outside the class of rank one models and several technical challenges arise. Moreover, in contrast to previous work, in which networks could be classified as resilient or non resilient, independent of the distribution of the shock, information about the shock becomes important in our model and a more refined resilience condition arises. Among other applications of our theory we derive resilience conditions for the global network based on subnetwork conditions only.

\end{abstract}

\medskip
\noindent\textit{Keywords:} systemic risk, financial contagion, inhomogeneous random graphs, weighted random graphs, directed random graphs, stochastic block model, core-periphery, assortative random graphs, counterparty dependent exposures

\section{Introduction}
Modern financial networks of counterparty exposures are characterized by a huge intrinsic complexity. One example for this are cross-border relations that have emerged over the last decades in the course of globalization \cite{Chinazzi2013,Degryse2010,Halaj2013,Minoiu2013}. While the possibility of making business on an international level certainly provides benefits to single institutions by facilitating diversification and giving access to different 
markets \cite{Aghion2010,Aoki2010,Artis2007,Artis2012,Dell'ariccia2008,Faria2007}, the convoluted trans-border dependencies can also pose a great threat to the global system and increase \emph{systemic risk}. 
The financial crisis in the years 2007/08 is a striking example of how an initially locally confined shock -- the burst of the US housing bubble -- had a catastrophic impact on the economy worldwide. In addition to cross-border exposures connecting various subsystems, modern financial networks also exhibit a distinctive tiered structure -- usually referred to as \emph{core-periphery structure} -- that can arise in many different shapes; see \cite{Boss2004} for Austria, \cite{Cont2013} for Brazil, \cite{Craig2014} for Germany, \cite{Fricke2015} for Italy, \cite{Veld2014} for the Netherlands, \cite{Langfield2014} for the UK, and \cite{Alves2013} for the European interbank network. It is an important research question to model and understand how these complex network configurations influence the stability of financial systems.

\begin{figure}[t]
\centering
    \includegraphics[width=0.52\textwidth]{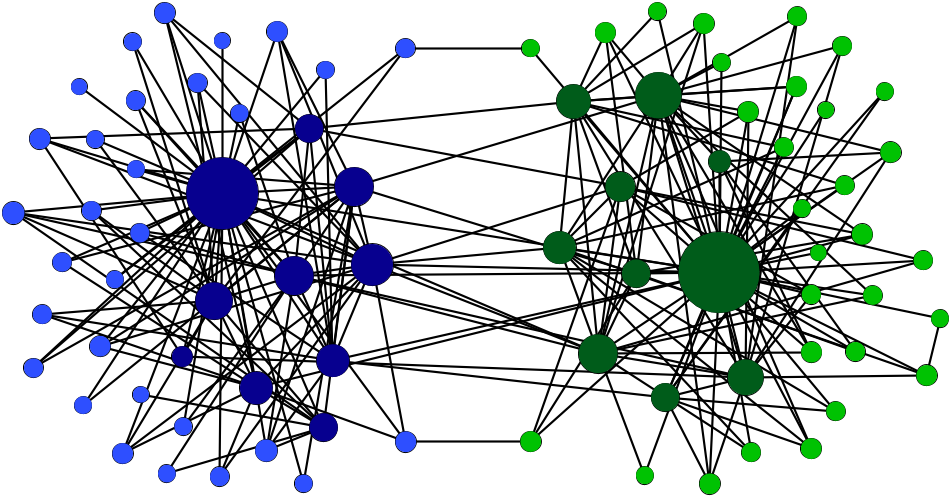}
\caption{A sample network consisting of two cores (darkblue resp.~darkgreen) and the associated peripheries (lightblue resp.~lightgreen). Vertex sizes correspond to the respective degrees. For simplicity the network is depicted undirected and the strength of  links is omitted. }
\label{fig:core:periphery:network}
\end{figure}

One important direct channel of systemic risk that we focus on here is \emph{default contagion}. Instead of considering a concrete system configuration, we use a random graph setting to generate a whole ensemble of possible financial networks and it is assumed that the observed network is a typical member of this family. This can be achieved by calibrating the random graph's distribution to macroscopic statistical properties of the observed network, such as 
the distribution of the degrees and the balance sheet exposures. Our results in this setting are then derived in terms of these global statistics only and hold for essentially the whole ensemble of random configurations. 
In particular, members of this family can vary significantly on a microscopic level, rendering this approach robust to local changes of the network or statistical uncertainties. Furthermore, under the premise that over time the network may change locally but keeps its global statistics, a fact that is supported empirically by \cite{Cont2013} in the context studied here, the results are applicable also for future systems; this is a desirable property from a regulator's point of view. Additionally, as the results for contagion in random graphs are typically derived in an asymptotic way for a large number of institutions, there is a naturally emerging notion of resilience of a network that does not depend on any arbitrarily chosen parameters such as confidence values. 

Within the literature on financial networks the random graphs perspective was first introduced in~\cite{Gai2010} and carried out for the configuration model in \cite{Cont2016}. In \cite{Hurd2016, Hurd2017} it has been extended to a version of the configuration model which allows to capture assortativity of the network. The works \cite{Detering2015a,Detering2016} analyze default contagion on inhomogeneous random graphs.
Other papers that investigate how the network structure influences contagion use for example matrix majorisation in \cite{Capponi2016} or Baysian methods in \cite{Chong2016}.

\vspace{-7pt}
\paragraph{Contribution} Our paper provides a systematic analysis of the effect of complex network structures on contagion and system stability by means of random graphs. We quantify the effect that different subsystems such as core/periphery, geographical location, institutional form (bank, insurance, etc.), have on contagion in the global system and we make combined effects of global network structures and local heterogeneity explicit. Our model setup accounts for the strong complexity observed in global, tiered financial systems in contrast to previous work which has depicted only the effect of heterogeneity, measured in terms of the degree distribution of the individual institutions. 

\vspace{2pt}
\emph{Block Model}~~To describe different subsystems (blocks) 
to every vertex we assign a type and vertices with the same type form a block. Institutions of the same type could be similar in terms of business strategy or they could reside in the same country for example. In addition we assign a vector of weights to each vertex (institution) in the graph that describes the tendency to develop random edges to vertices of a certain type and thus controls the local characteristics (degree) of the vertex. Figure~\ref{fig:core:periphery:network} illustrates a sample from our model with only two types (blue and green) and different weights (and as a result degrees) for vertices of the same type. Our model is a tradeoff to reflect the above mentioned characteristics of financial networks but at the same time preserve analytical tractability. In its undirected version our network skeleton model can be embedded in a general class of random graphs proposed in \cite{Bollobas2007} based on possibly uncountable number of types but without weights. While we believe that in the most general model setting described in \cite{Bollobas2007} a finite dimensional analytic description for the contagion processes is not possible, the combination of types and weights allows us to derive a full analytic description of the process with a multi but finite dimensional system of differential equations. To our knowledge this is the first paper that investigates such kind of process in a model outside the popular class of rank 1 models (or variants thereof) which lead to one dimensional fixpoint equations for the contagion process (see \cite{Cont2016,Detering2015a,Detering2016}). The resulting multivariate setting poses significant new challenges on the analytical side especially for deriving stopping criteria that ensure that contagion activity in one subsystem is not reinitiating activity in a part that has slowed down already. We mention that -- especially in the statistical physics literature -- one often refers to \emph{fitness} of a vertex instead of type, see \cite{Caldarelli2002,Gandy2017,GANDY2019193,Servedio2004} for example. 
As our model description strongly depends on the choice of vertex types (blocks), for calibration purposes one needs to draw on methods for detection of communities and core-periphery structure, which are well developed. See for instance \cite{Blondel2008,Clauset2004,Copic2009,Fortunato2016,Zhang2014,Zhao2011} respectively \cite{Holme2005,Rombach2017}. 

\vspace{2pt}
\emph{Exposure Distributions} The standard setting in the random-graph approach to default contagion is that exposure distributions only depend on the creditor banks. Just thinking of the example of a core-periphery network shows the limitations of this assumption, as liabilities between core-banks should be significantly higher than liabilities of a core-bank to a periphery-bank. In the present work we waive this assumption thus allowing for realistic exposure distributions. While this might seem like a minor additional feature, it in fact increases mathematical complexity significantly as variations in the banks exposures cannot be encoded in vertex features anymore as in previous instances in which the contagion process could be reduced to a threshold model (\cite{Cont2016,Detering2016}). The exposures now become an integral part of the contagion process and it is one of the technical contributions of this article to provide in a first instance an analysis of such a setting.

\vspace{2pt}
\emph{Analytic Results on the Final Default Fraction and Resilience}~~Our main result in this paper is a full analytic desciption of the default contagion process in our model. Our first main result Theorem \ref{thm:general:weights} identifies  bounds for the final fraction of defaulted institutions in a system hit by some arbitrarily specified initial shock.
The following informal statement summarizes the main insights from Theorem \ref{thm:general:weights}.
\begin{mocktheorem*}
Consider a financial network of size $n\in\N$ that was hit by some exogenous shock so that some of the banks are initially defaulted. Then under certain regularity assumptions there exist constants $0\leq k\leq K\leq 1$ such that

\[ k + o_p(1) \leq \frac{\emph{number of finally defaulted institutions}}{n} \leq K + o_p(1), \]
where $o_p(1)$ denotes a term that 
vanishes as $n$ becomes large. The constants $k,K$ can be computed explicitly in terms of the global network statistics. In most relevant settings lower and upper bound equal ($k=K$).
\end{mocktheorem*}
For large network sizes $n$, the theorem hence determines explicitly the fraction of finally defaulted institutions in the system. 
As it only depends on the global network statistics the result is robust to statistical uncertainties on an institution level and stays true also for future points in time as long as macro-statistical properties of the system do not change~much.


From a regulator's viewpoint, more than merely describing the contagion process for certain shock events, it is important to understand whether today's financial system is able to absorb possible future shocks. A common approach to this question in risk management is to choose certain parameters for the initial stress and quantiles of the final damage that are deemed admissible. Our approach for large networks, however, gives rise to a \emph{parameter-free} alternative: we call a system \emph{non-resilient} if the final fraction of defaulted institutions is lower bounded by a positive constant no matter how small the initial shock was. 

Building on our description of the final default fraction, a second main contribution of this paper is thus to derive explicit criteria to decide whether a certain system is (non-)resilient. In contrast to the previous literature (see ~\cite{Cont2016,Detering2015a,Detering2016}) where a system was classified either as resilient or non-resilient independent of the shock that is applied it turns out that in the model studied here the distribution of the initial shock becomes relevant. Different shocks may target different subsystems and the effect varies greatly; this allows us to quantify the effect of clusters and tiers for system stability more accurately than previous work and we can investigate which network characteristics promote the spread of distress and how global cascades can be contained. 

\vspace{2pt}
\emph{Applications}~~Recall that the leading question we want to study in this paper is what impact the complex intertwining of modern financial networks -- such as the global connectedness and the tiered structures -- has on their stability. We demonstrate by several applications how the features of our model effect the course of the contagion process. First, we consider the influence of a non-resilient subsystem on a lager global system. We find that not only this non-resilient subpart may make the system as a whole non-resilient, but it can also cause severe damage to a priori resilient parts of the network. More crucially, we also find that a system consisting only of resilient subsystems may become non-resilient due to cross-type (e.\,g.~cross-border) linkages. Therefore, we further derive criteria for resilience of the global system based on conditions for the subsystems and the connectivity between them. Finally, we consider the reshuffling of exposures in a network where originally all exposures only depend on the creditor -- the exposed party -- such that the new network's exposures respect a core-periphery structure. We show that by this procedure an initially resilient system can become non-resilient, which stresses the importance of creditor and debtor dependent exposures in a model for default contagion as otherwise contagion effects are underestimated.


\vspace{-7pt}
\paragraph{Outline} The paper is structured as follows: In Section~\ref{sec:default:fin}, we introduce the random graph and describe the model for default contagion. We then derive asymptotic results on the final default fraction in Section~\ref{sec:asymptotic:results}. In Section~\ref{sec:resilience} we discuss (non-)resilience and the corresponding criteria. We then present our applications in Section~\ref{sec:applications} and perform simulations for finite networks. Finally, we provide proofs of our results in Section~\ref{sec:proofs}.

\section{Default Contagion on a Random Multi-type Network}\label{sec:default:fin}
We describe a model for a financial network consisting of $n\in\N$ vertices (institutions) and random directed edges between them. We usually think of an institution $i\in[n]:=\{1,\ldots,n\}\subset\N$ as a bank in an interbank network and of a directed edge going from institution $i\in[n]$ to $j\in[n]$ as a financial exposure of $j$ which emanates from $i$, for example by an outstanding interbank-loan from $i$ to $j$. Our model accounts for two more features. First, we assign exposures to the edges representing the amount of the loan. As will be clear from the construction in the following, the assignment of exposures depends on both the creditor- and the debtor-institution. Second, we assign different types to the institutions in the network. This allows to describe more involved network structures such as core-periphery networks -- a two-type network in our terminology -- and (dis-)assortative structures.

\subsection{Vertex Types}
We begin by assigning to each institution $i\in[n]$ a type $\alpha_i\in[T]$, where $T\in\N$ is the fixed number of types. In the prominent case of a core-periphery network we choose $T=2$ and a bank $i\in[n]$ shall be a core bank if $\alpha_i=1$ resp.~a periphery bank if $\alpha_i=2$. Alternatively banks in one country or region could be assigned the same type. Hence the financial network is partitioned into sets of institutions of different types, which we also call blocks.

\subsection{Vertex Weights and Random Exposures}
Next, we fix $R\in\N$ and we construct a random network with exposures in $[R]$. To this end, assign to each institution $i\in[n]$ a set $\{w_i^{-,r,\alpha},w_i^{+,r,\alpha}\}_{1\leq r\leq R, 1\leq \alpha\leq T}$ of non-negative vertex-weights. Here the weight $w_i^{-,r,\alpha}$ describes the tendency of bank $i$ to develop incoming edges with exposure $r$ from institutions of type $\alpha$. Similarly, $w_i^{+,r,\alpha}$ describes the tendency of $i$ to form outgoing edges with exposure $r$ to institutions of type $\alpha$. To formalize this, 
let $X_{i,j}^r$ be the indicator random variable which is $1$ if there is an edge with exposure $r$ going from $i$ to $j$ and $0$ otherwise and let $X_{i,j}^r\sim\mathrm{Be}(p_{i,j}^r)$ be a Bernoulli random variable with  expectation
\begin{equation}\label{eqn:edge:prob}
p_{i,j}^r := \begin{cases} \min\{R^{-1}, n^{-1}\}w_i^{+,r,\alpha_j}w_j^{-,r,\alpha_i},&i\neq j,\\0,&i=j.\end{cases}
\end{equation}
To avoid multiple edges with different exposures between the institutions, we assume $\{X_{i,j}^r\}_{1\leq r\leq R}$ to be mutually exclusive in the sense that $\sum_{1\le r\le R} X_{i,j}^r\leq 1$. Also, we assume that edges between different pairs of institutions are independent, i.\,e.~$X_{i_1,j_1}^{r_1}\perp X_{i_2,j_2}^{r_2}$ for all $r_1,r_2\in[R]$ if $i_1\neq i_2$ or $j_1\neq j_2$. In particular, $X_{i,j}^{r_1}\perp X_{j,i}^{r_2}$ for all $r_1,r_2\in[R]$, $i\neq j$. This can for example be achieved by introducing a sequence of independent random variables $U_{i,j}$, each distributed uniformly on the interval $[0,1]$, and letting $X_{i,j}^r=\1\big\{U_{i,j}\in\big[\sum_{s\leq r-1}p_{i,j}^s,\sum_{s\leq r}p_{i,j}^s\big)\big\}$. The upper bound $R^{-1}$ in \eqref{eqn:edge:prob} then ensures that $\sum_{s\leq R}p_{i,j}^s\leq1$. 


\subsection{Capital and Default Contagion}
We assign to each institution $i\in[n]$ an initial amount of capital (equity) $c_i\in\N_0\cup\{\infty\}$. We call an institution solvent if $c_i>0$ and insolvent if $c_i=0$ and we denote by $\mathcal{D}_0:=\{i\in[n]\,:\,c_i=0\}$ the set of initially defaulted institutions. The initial default shall be due to some exogenous event such as a stock market crash. Because of the interconnections in the network the default of the institutions in $\mathcal{D}_0$ will spread through the network. This happens since the defaulted banks cannot (fully) repay their loans to their creditors. As first suggested in \cite{Gai2010} it is a reasonable assumption that defaulted debtors cannot repay any of their debts since processing their default may take months or even years while financial contagion is a short term process. In fact one can generalize our model to the case of a fixed constant recovery rate simply by adjusting the capitals. The default contagion process can then be described as follows. In round $k\geq 1$ of the default cascade the set of defaulted institutions is 

\vspace*{-0.1cm}
\begin{equation}
\label{eqn:default:contagion}
\mathcal{D}_k := \Bigg\{ i\in[n]\,:\,c_i\leq\sum_{r \in [R]} r\sum_{j\in\mathcal{D}_{k-1}}X_{j,i}^r\Bigg\}.
\end{equation}

\vspace*{-0.1cm}
\noindent In particular, $\mathcal{D}_0\subseteq\mathcal{D}_1\subseteq\cdots$ and the chain of default sets stabilizes at round $n-1$ the latest. We hence denote the final default set by $\mathcal{D}_n:=\mathcal{D}_{n-1}$. Note that the only randomness in this process stems from the random links in the network. Once a network configuration has been fixed, the whole default contagion sequence is fully determined.
The following two remarks address interesting possible extensions of our setting. 


\begin{remark} \label{RemSysImp}
In the following we use the fraction of finally defaulted institutions $n^{-1}\vert\mathcal{D}_n\vert$ 
as a measure for the severity of a given shock in the network.
It makes sense, however, to not only consider the
finally defaulted institutions but to introduce a parameter $s_i\in\R_{+,0}$ of general systemic importance for each bank $i\in[n]$ and use $\mathcal{S}_n:=n^{-1} \sum_{i\in\mathcal{D}_n}s_i$ to measure the severity of the shock. Since the systemic importance values $s_i$ do not influence the process \eqref{eqn:default:contagion}, all our results can be extended to this more general setting under mild assumptions on $\{s_i\}_{i\in[n]}$. See Remark \ref{rem:systemic:importance} for the general setting.
\end{remark}

\begin{remark}
To model realistic financial networks with general exposure values, it would be a priori necessary to choose $R$ very large and our model would become high-dimensional. Instead of considering an exposure $r$ between two institutions, however, one can also interpret it as a more general factor of impact. It is then possible to model unbounded exposure distributions also with a considerably small choice of $R$. More precisely, we can assign to each institution $i\in[n]$ a list of exchangeable random variables and take the sum over $r$-many of them to compute the actual exposure of an edge with impact $r$. Under minor assumptions on the distribution of these exchangeable random variables, the main results of this paper can be generalized to this setting by a conditioning argument. See \cite{Detering2016} for a precise discussion of this idea in the case $R=T=1$.

\end{remark}


\subsection{Regular Vertex Sequences}
In the previous subsections we introduced several parameters and in particular, any random ensemble is described by the weight sequences $\bm{w}^{-,r,\alpha}:=(w_1^{-,r,\alpha},\ldots,w_n^{-,r,\alpha})$ and $\bm{w}^{+,r,\alpha}:=(w_1^{+,r,\alpha},\ldots,w_n^{+,r,\alpha})$ for $r\in[R]$ and $\alpha\in[T]$, 
the capital sequence $\bm{c}:=(c_1,\ldots,c_n)$, and the 
type sequence $\bm{\alpha}:=(\alpha_1,\ldots,\alpha_n)$. That is, much of the information about the system -- in particular the structure of the network configuration -- is contained in the empirical distribution function

\vspace*{-0.5cm}
\[ F_n(\bm{x},\bm{y}, \ell, m) := n^{-1}\sum_{i\in[n]}~\prod_{\substack{r\in [R],\alpha\in [T]}}\1\left\{w_i^{-,r,\alpha}\leq x^{r,\alpha},w_i^{+,r,\alpha}\leq y^{r,\alpha}\right\} \1\left\{c_i\leq \ell, \alpha_i\leq m\right\}, \]

\vspace*{-0.27cm}
\noindent for $\bm{x},\bm{y}\in\R^{[R]\times[T]}$, $\ell\in\N_0\cup\{\infty\}$ and $m\in[T]$. 

The setting described so far puts us in the position to model a system with a given number~$n$ of institutions. However, as already described in the introduction, our main focus is in studying how the structure in the underlying network affects the contagion process and, more generally, the (in-)stability of the system as a whole. Towards this aim we proceed as follows. Instead of restricting our attention  to a single system configuration, we consider an ensemble of systems that are similar, where the structural similarity is measured precisely in terms of the joint empirical distribution of all parameters that we consider. In particular, we assume that we have a collection of systems with a varying number $n$ of
institutions with the property that the sequence $(F_n)_{n \in \mathbb{N}}$ of empirical distributions converges.

\begin{definition}[Regular vertex sequence]\label{def:regular:vertex:sequence}
A sequence $(\bm{w}^{-,r,\alpha}(n),\bm{w}^{+,r,\alpha}(n), \bm{c}(n), 
\bm{\alpha}(n))_{n\in\N}$ of model parameters for different network sizes $n\in\N$ is called a \emph{regular vertex sequence} if the following conditions are satisfied. Note that we do not specify the ranges of $r$ and $\alpha$ to keep the notation concise. Here and in the following we always mean $r\in[R]$ and $\alpha\in[T]$.
\begin{enumerate}[(a)]
\item \textbf{Convergence in distribution:} 
Let $(W_n^{-,r,\alpha},W_n^{+,r,\alpha},C_n, A_n)$ be a random vector distributed according to the empirical distribution function $F_n$. Then there exists a distribution function $F$ such that $F_n(\bm{x},\bm{y},\ell,m)\to F(\bm{x},\bm{y},\ell,m)$ for all points $(\bm{x},\bm{y},\ell,m)$ at which $F_{\ell,m}(\bm{x},\bm{y}):=F(\bm{x},\bm{y},\ell,m)$ is continuous. Denote by $(W^{-,r,\alpha},W^{+,r,\alpha},C,A)$ a random vector distributed according to 
$F$.
\item \textbf{Convergence of average weights:} For $n\to\infty$  
and all $1\leq r\leq R$, $1\leq\alpha\leq T$:
\[ \E[W_n^{-,r,\alpha}] \to \E[W^{-,r,\alpha}]<\infty \quad\text{and}\quad \E[W_n^{+,r,\alpha}] \to \E[W^{+,r,\alpha}]<\infty. \]
\end{enumerate}
\end{definition}

To understand the asymptotic degree sequence of our model let us fix $r\in [R]$ and only look at links with exposure size $r$. We look at a vertex $i\in [n]$ of type $\beta$ with out-weight to vertices of type $\alpha$ equal to $w_i^{+,r,\alpha}$. Then its out-degree towards vertices of type $\alpha$ is a sum of independent Bernoulli trials, whose expectation converges to $w_i^{+,r,\alpha} \E [W^{-,r,\alpha}\1\{A=\beta\}]$ due to property (b) of Definition~\ref{def:regular:vertex:sequence}. With this observation similar ideas for a Poisson coupling of the degree sequence as used in \cite[Theorem 3.3]{Detering2015a} can be applied to prove the following result about the degree sequence in our financial system.
\begin{proposition}
Consider a financial system described by a regular vertex sequence and let $D_n^{\pm,r,\alpha}$ be the $r$-out/in-degree with respect to banks of type $\alpha$ of some bank in the network chosen uniformly at random. 
Then for each $(r,\alpha)\in[R]\times[T]$, as $n\to\infty$, in distribution
\begin{equation}\label{eqn:degree:convergence}
D_n^{\pm,r,\alpha} \to \mathrm{Poi}\Bigg(W^{\pm,r,\alpha}\sum_{\beta\in[T]}\zeta^{r,\beta,\alpha}_\mp\1\{A=\beta\}\Bigg),
\end{equation}
where $\zeta^{r,\beta,\alpha}_\mp:=\E\left[W^{\mp,r,\beta}\1\{A=\alpha\}\right]$.
In particular, the $r$-out/in-degree with respect to banks of type $\alpha$ of some uniformly chosen bank of type $\beta$ converges in distribution to $\mathrm{Poi}\big(W^{\pm,r,\alpha}\zeta_\mp^{r,\beta,\alpha}\big)$. 
\end{proposition}

\section{Asymptotic Results}\label{sec:asymptotic:results}
This section presents results for the final default fraction $n^{-1}\vert\mathcal{D}_n\vert$ triggered by some set of initial defaults, which is one of the main contributions of this paper, see Subsection~\ref{subsecFinDef}. In Subsection~\ref{subsecPrel}, we first introduce some functions that will play an important role. For the sake of readability, most of the -- mainly technical -- proofs of this section are moved to the appendix.

\subsection{Preliminaries}\label{subsecPrel}
Denote in the following $V:=[R]\times[T]^2$. Let $\psi_\ell(x_1,\ldots,x_R) := \P\left(\sum_{s\in[R]}sX_s\geq \ell\right)$ for independent
Poisson random variables $X_s \sim \mathrm{Poi} (x_s)$.  In the following, functions $f^{r,\alpha,\beta}:\R_{+,0}^V\to\R$, $(r,\alpha,\beta)\in V$,  and $g:\R_{+,0}^V\to\R_{+,0}$ will play a central role. They are given by
\begin{equation*}
\begin{gathered}
f^{r,\alpha,\beta}(\bm{z}) = \E\Bigg[W^{+,r,\alpha}\psi_C\Bigg(\sum_{\gamma\in[T]}W^{-,1,\gamma}z^{1,\beta,\gamma},\ldots,\sum_{\gamma\in[T]}W^{-,R,\gamma}z^{R,\beta,\gamma}\Bigg)\1\{A=\beta\}\Bigg] - z^{r,\alpha,\beta},\\
g(\bm{z}) = \sum_{\beta\in[T]}\E\Bigg[\psi_C\Bigg(\sum_{\gamma\in[T]}W^{-,1,\gamma}z^{1,\beta,\gamma},\ldots,\sum_{\gamma\in[T]}W^{-,R,\gamma}z^{R,\beta,\gamma}\Bigg)\1\{A=\beta\}\Bigg].
\end{gathered}
\end{equation*}

Let us provide some intuition about the meaning of these functions. The contagion process described in (\ref{eqn:default:contagion}) can be restated in an equivalent sequential form where in each step only the effect of one defaulted institution on the system is explored. Defaulted institutions can then have two status: {\em explored} meaning their effect on the system has already been accounted for and {\em unexplored} meaning that their effect on the system has not yet been accounted for. We look at the case $T=R=1$ first. Then there is only one function $f=f^{1,1,1}$ of one variable $z=z^{1,1,1}$ which shall denote the total out-weight of defaulted and explored institutions divided by $n$. For an institution $i\in [n]$ with in-weight $w_i^{-,1,1}$ we know that in the limit ($n\rightarrow \infty$) the number of neighbors is given by $ \mathrm{Poi} (w_i^{-,1,1} \zeta ) $ with $\zeta= \zeta ^{1,1,1}_+$ the total out-weight divided by $n$, i.e. $\zeta:=\E\left[W^{+,1,1}\right] = n^{-1} \sum_{i \in [n]} w_i^{+,1,1} + o(1)$. Replacing $\zeta$ by $z$ the number of defaulted and explored neighbours  should intuitively be given by $\mathrm{Poi} (w_i^{1,1} z )$. Continuing with this heuristics, the probability that $i$ has defaulted (no matter if already exposed or unexposed) is $\mathbb{P} (\mathrm{Poi} (w_i^{-,1,1} z ) \geq c_i)= \psi_{c_i} (w_i^{1,1} z)$ and summing up over $i\in [n]$, the out-weight of all defaulted institutions as a result of exposing defaulted institutions with out-weight roughly equal $nz$ is given by $\E [ W^{+,1,1} \psi_{c_i} (W^{-,1,1} z) ]$. If $\E [ W^{+,1,1} \psi_{c_i} (W^{-,1,1} z) ]=z$ or equivalently $f(z)=0$ then there are no unexposed defaulted institutions and the process might come to a standstill.

In the multivariate case the variable $z^{r,\beta,\gamma}$ equals the sum of the out-weights of defaulted and already explored type $\gamma$ vertices for building edges with exposure $r$ to type $\beta$ vertices. For given $\bm{z}\in \R_{+,0}^V$ the function value $f^{r,\alpha,\beta}(\bm{z})$ quantifies the total out-weight of defaulted but unexplored type $\beta$ vertices for building edges with exposure $r$ towards vertices of type $\alpha$. Similarly as above when $f^{r,\alpha,\beta}(\bm{z})=0$ the total out-weight of defaulted and unexplored type $\beta$ vertices for building edges with exposure $r$ to type $\alpha$ vertices is zero  meaning that infection from $\beta$ vertices to $\alpha$ vertices via exposure of size $r$ has slowed down but there can still be contagion activity in the graph. Only when $f^{r,\alpha,\beta}(\bm{z})=0$ for all $(r,\alpha,\beta)\in V$ the process might come to a standstill. Let $\bm{z}$ be such a joint root. Then, in the limit ($n\rightarrow \infty$), a vertex $i\in [n]$ of type $\beta$ with capital $c_i$ has r-in-degree from type $\gamma$ vertices approximately Poisson distributed with mean $w_i^{-,r,\gamma} \hat{z}^{r,\beta,\gamma}$. Vertex $i$ thus defaults with probability $\psi_{c_i} (\sum_{\gamma \in [T]} w_i^{-,1,\gamma} \hat{z}^{1,\beta,\gamma} ,\ldots,\sum_{\gamma \in [T]} w_i^{-,R,\gamma} \hat{z}^{R,\beta,\gamma} )$. If we now choose the vertex $i\in [n]$ uniformly instead, then its default probability is given by $g(\hat{\bm{z}})$. 

Before making this intuition rigorous and proving our main results we begin by investigating some basic but important properties of these functions.
\begin{lemma}\label{lem:properties:f}
The functions $f^{r,\alpha,\beta}(\bm{z})$, $(r,\alpha,\beta)\in V$, and $g(\bm{z})$ are continuous at all $\bm{z}\in\R_{+,0}^V$. Further, each function $f^{r,\alpha,\beta}(\bm{z})$ is monotonically increasing in all of its coordinates except $z^{r,\alpha,\beta}$.
\end{lemma}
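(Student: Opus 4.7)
The plan is to first establish the relevant analytic properties of the auxiliary function $\psi_\ell$ and then lift them to $f^{r,\alpha,\beta}$ and $g$ by a standard dominated-convergence argument.

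First I would show that, for each fixed $\ell\in\N_0\cup\{\infty\}$, the map $\psi_\ell:\R_{+,0}^R\to[0,1]$ is continuous and non-decreasing in each coordinate. For $\ell=\infty$ we have $\psi_\infty\equiv 0$. For finite $\ell$ one writes
\[
\psi_\ell(x_1,\ldots,x_R)=1-\sum_{\substack{(k_1,\ldots,k_R)\in\N_0^R\\ \sum_s s k_s<\ell}}\prod_{s\in[R]}\frac{x_s^{k_s}}{k_s!}e^{-x_s},
\]
which is a \emph{finite} sum of continuous functions of $(x_1,\ldots,x_R)$, since the index set is bounded by $k_s\le\ell$. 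Monotonicity is seen by the canonical coupling: if $x_s'\ge x_s$, then an independent $\mathrm{Poi}(x_s'-x_s)$ summand realises $\mathrm{Poi}(x_s')$ as $\mathrm{Poi}(x_s)+\mathrm{Poi}(x_s'-x_s)$, so $\sum_s sX_s$ is stochastically increasing in $x_s$ and hence so is the probability $\P(\sum_s sX_s\ge\ell)$.

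Second, for continuity of $f^{r,\alpha,\beta}$ note that for each realisation of the random vector $(W^{\pm,s,\gamma},C,A)$ the argument of $\psi_C$ at position $s$, namely $\sum_{\gamma}W^{-,s,\gamma}z^{s,\beta,\gamma}$, is a continuous (in fact linear) function of $\bm z$. Combined with the continuity of $\psi_\ell$ established above, the integrand
\[
W^{+,r,\alpha}\,\psi_C\Bigl(\textstyle\sum_\gamma W^{-,1,\gamma}z^{1,\beta,\gamma},\ldots,\sum_\gamma W^{-,R,\gamma}z^{R,\beta,\gamma}\Bigr)\1\{A=\beta\}
\]
is almost surely continuous in $\bm z$. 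Since $\psi_C\le 1$, it is dominated by $W^{+,r,\alpha}$, which has finite expectation by part (b) of Definition~\ref{def:regular:vertex:sequence}. Dominated convergence then gives continuity of the expectation, and subtracting $z^{r,\alpha,\beta}$ preserves it. Continuity of $g$ is entirely analogous, with $\psi_C\le 1$ itself serving as an integrable majorant and the finite sum over $\beta\in[T]$ preserving continuity.

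Third, for the monotonicity of $f^{r,\alpha,\beta}$ the crucial structural observation is that the arguments of $\psi_C$ depend only on those coordinates $z^{s,\alpha',\gamma}$ for which the middle index equals $\beta$, and they do so in a linearly non-decreasing way because all $W^{-,s,\gamma}\ge 0$. Combined with the coordinatewise monotonicity of $\psi_\ell$ from the first step, the integrand above is, pointwise in $\omega$, non-decreasing in every $z^{s,\beta,\gamma}$ and constant in every $z^{s,\alpha',\gamma}$ with $\alpha'\ne\beta$. Taking expectation preserves monotonicity, so the first summand of $f^{r,\alpha,\beta}$ is non-decreasing in \emph{every} coordinate of $\bm z$. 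The subtracted linear term $-z^{r,\alpha,\beta}$ only affects the single coordinate $(r,\alpha,\beta)$, which is precisely the one excluded from the claim. The only technical point worth verifying carefully is that the dominating random variable $W^{+,r,\alpha}$ is indeed integrable under the limit law $F$; this is built directly into Definition~\ref{def:regular:vertex:sequence}(b), so no further work is needed.
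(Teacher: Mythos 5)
Your proof is correct and takes essentially the same route as the paper's (dominated convergence for continuity, monotonicity of Poisson tail probabilities for monotonicity), just spelling out the details that the paper leaves to the reader.
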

\begin{proof}
Continuity 
follows from Lebesgue's dominated convergence theorem noting that the integrands are continuous in $\bm{z}$ and bounded by the integrable random variable $W^{+,r,\alpha}$ resp.~by $1$. Monotonicity of $f^{r,\alpha,\beta}$ follows directly from the monotonicity of the Poisson-probabilities.
\end{proof}

%

Let now $S:=\bigcap_{(r,\alpha,\beta)\in V}\{\bm{z}\in\R_{+,0}^V\,:\,f^{r,\alpha,\beta}(\bm{z})\geq0\}$. Since $f^{r,\alpha,\beta}(\bm{z})<0$ for any $\bm{z}\in\R_{+,0}^V$ with $z^{r,\alpha,\beta}>\E[W^{+,r,\alpha}\1\{A=\beta\}]$ and $S$ is an intersection of closed sets, $S$ is in fact compact. Note that clearly $\bm{0}\in S$. In general $S$ might consist of several disjoint, compact, connected components. Let in the following $S_0$ denote the component (i.\,e.~the largest connected subset) of $S$ containing $\bm{0}$. Since $S$ is a compact subset of $\R_{+,0}^V$, so is $S_0$. Define now $\bm{z}^*\in\R_{+,0}^V$ by $(z^*)^{r,\alpha,\beta}:=\sup_{\bm{z}\in S_0}z^{r,\alpha,\beta}$. The following lemma shows that in fact $\bm{z}^*\in S_0$ and it can hence be thought of as the maximal point of $S_0$. Further it identifies $\bm{z}^*$ as a joint root of all the functions $f^{r,\alpha,\beta}$, $(r,\alpha,\beta)\in V$, and shows the existence of a smallest joint root $\hat{\bm{z}}$. It will turn out later (see in particular Theorem \ref{thm:general:weights}) that the final fraction $n^{-1}\vert\mathcal{D}_n\vert$ is intimately related to these two (typically coinciding) joint roots of the functions $f^{r,\alpha,\beta}$, $(r,\alpha,\beta)\in V$. 

\begin{lemma}\label{lem:existence:hatz}
There exists a smallest joint root $\hat{\bm{z}}\in\R_{+,0}^V$ of all the functions $f^{r,\alpha,\beta}$, $(r,\alpha,\beta)\in V$, in the sense that $\hat{\bm{z}}\leq\bar{\bm{z}}$ componentwise for all joint roots $\bar{\bm{z}}$. Further, $\bm{z}^*$ as defined above is a joint root of the functions $f^{r,\alpha,\beta}$, $(r,\alpha,\beta)\in V$, and both $\hat{\bm{z}}\in S_0$ and $\bm{z}^*\in S_0$.
\end{lemma}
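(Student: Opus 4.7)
The plan is to recast the problem in terms of the operator $T : \R_{+,0}^V \to \R_{+,0}^V$ defined by $T^{r,\alpha,\beta}(\bm{z}) := f^{r,\alpha,\beta}(\bm{z}) + z^{r,\alpha,\beta}$, so that joint roots of the $f^{r,\alpha,\beta}$ coincide with fixed points of $T$. The central structural properties are that $T$ is continuous (Lemma~\ref{lem:properties:f}), coordinate-wise monotone non-decreasing (because $\psi_C$ is non-decreasing in each argument and the weights are non-negative), and bounded above componentwise by $\E[W^{+,r,\alpha}\1\{A=\beta\}]$. For the smallest joint root I would iterate $T$ starting from $\bm{0}$: the sequence $\bm{z}_k := T^k(\bm{0})$ is monotone and bounded, hence converges to some $\hat{\bm{z}}$; continuity makes $\hat{\bm{z}}$ a fixed point, and for any other joint root $\bar{\bm{z}} \geq \bm{0}$, monotonicity gives $\bar{\bm{z}} = T^k(\bar{\bm{z}}) \geq T^k(\bm{0}) = \bm{z}_k$ for every $k$, so $\bar{\bm{z}} \geq \hat{\bm{z}}$ in the limit.

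To see that $\hat{\bm{z}} \in S_0$, I would prove inductively that $\bm{z}_k \in S_0$ for every $k \geq 0$ and then use closedness of $S_0$. For the inductive step, for any $\tilde{\bm{z}}$ on the straight segment from $\bm{z}_{k-1}$ to $\bm{z}_k = T(\bm{z}_{k-1})$ one has $\bm{z}_{k-1} \leq \tilde{\bm{z}} \leq \bm{z}_k$ componentwise, so monotonicity of $T$ yields $T(\tilde{\bm{z}}) \geq T(\bm{z}_{k-1}) = \bm{z}_k \geq \tilde{\bm{z}}$, i.e., $f^{r,\alpha,\beta}(\tilde{\bm{z}}) \geq 0$ for every $(r,\alpha,\beta) \in V$. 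Hence the entire segment lies in $S$ and, being attached to $\bm{z}_{k-1} \in S_0$, lies in $S_0$ as well.

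The main obstacle is $\bm{z}^* \in S_0$, since its coordinates are suprema over $S_0$ that need not be attained at a common point. My plan is to first use compactness of $S_0$ to pick, for each $(r,\alpha,\beta) \in V$, a witness $\bm{z}^{(r,\alpha,\beta)} \in S_0$ attaining the supremum in that coordinate, so that $\bm{z}^* = \max_{(r,\alpha,\beta) \in V} \bm{z}^{(r,\alpha,\beta)}$ componentwise. The key step is to show that $\max(\bm{z}_1,\bm{z}_2) \in S_0$ whenever $\bm{z}_1,\bm{z}_2 \in S_0$: assuming w.l.o.g.\ $z_1^{r,\alpha,\beta} \geq z_2^{r,\alpha,\beta}$, the componentwise maximum agrees with $\bm{z}_1$ in the $(r,\alpha,\beta)$-coordinate while dominating it elsewhere, so Lemma~\ref{lem:properties:f} gives $f^{r,\alpha,\beta}(\max(\bm{z}_1,\bm{z}_2)) \geq f^{r,\alpha,\beta}(\bm{z}_1) \geq 0$, whence $\max(\bm{z}_1,\bm{z}_2) \in S$. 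Since $\max$ is continuous, the image of the connected product $S_0 \times S_0$ is a connected subset of $S$ containing $\bm{0}$, and therefore contained in $S_0$. Iterating this pairwise-max stability over the finite family of witnesses yields $\bm{z}^* \in S_0$.

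Finally, to check that $\bm{z}^*$ is a joint root, I would argue by contradiction: if $f^{r,\alpha,\beta}(\bm{z}^*) > 0$ for some index, then perturbing to $\bm{z}^* + \delta \bm{e}^{r,\alpha,\beta}$ for sufficiently small $\delta > 0$ preserves $f^{r',\alpha',\beta'} \geq 0$ for all $(r',\alpha',\beta') \neq (r,\alpha,\beta)$ by the monotonicity from Lemma~\ref{lem:properties:f}, and keeps $f^{r,\alpha,\beta} > 0$ by continuity; the same bounds hold uniformly on the connecting segment, which therefore lies in $S$. This produces a point in $S_0$ with strictly larger $(r,\alpha,\beta)$-coordinate than $\bm{z}^*$, contradicting the definition of $\bm{z}^*$.
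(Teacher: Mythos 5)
Your proposal is correct, and it takes a genuinely different -- and in several places slicker -- route than the paper's own argument. For the smallest joint root, the paper invokes the Knaster--Tarski fixed point theorem for existence and then runs a coordinate-by-coordinate increment scheme (cyclically raising one coordinate at a time to the next root) to build a sequence in $S_0$ converging to $\hat{\bm{z}}$. You instead run the simpler monotone iteration $\bm{z}_k := T^k(\bm{0})$ with $T = f + \mathrm{id}$, which gives existence, minimality, and the connecting segments to $S_0$ in one stroke; the observation that $T(\tilde{\bm{z}}) \geq T(\bm{z}_{k-1}) = \bm{z}_k \geq \tilde{\bm{z}}$ for every $\tilde{\bm{z}}$ on the segment $[\bm{z}_{k-1},\bm{z}_k]$ makes membership in $S_0$ immediate. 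For $\bm{z}^*$, the paper introduces the $\epsilon$-relaxed sets $S_0(\epsilon)$, constructs the smallest joint $(-\epsilon)$-level point $\hat{\bm{z}}(\epsilon)$, passes to the limit as $\epsilon \to 0+$ using the nested-compact-connected-sets argument, and then shows $S_0 \subset [\bm{0},\hat{\bm{z}}(\epsilon)]$ by contradiction. Your alternative -- first establishing that $S_0$ is closed under componentwise maximum, because $\max(\bm{z}_1,\bm{z}_2)$ agrees with the dominating $\bm{z}_i$ in each coordinate and dominates it elsewhere, and because the continuous image of the connected set $S_0 \times S_0$ under $\max$ is connected, contains $\bm{0}$, and lies in $S$ -- is a structurally cleaner lattice-type argument that avoids the $\epsilon$-machinery entirely; taking witnesses by compactness and iterating the finitely many pairwise maxima then puts $\bm{z}^*$ in $S_0$ directly. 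Your final perturbation contradiction (push $\bm{z}^*$ by $\delta\bm{e}^{r,\alpha,\beta}$ if $f^{r,\alpha,\beta}(\bm{z}^*)>0$ to produce a point in $S_0$ with larger $(r,\alpha,\beta)$-coordinate) is also a pleasant replacement for the paper's limiting argument $f^{r,\alpha,\beta}(\bm{z}^*) = \lim_{\epsilon\to 0+}(-\epsilon)=0$. The one thing worth making explicit is that the paper does establish compactness of $S_0$ just before the lemma (so your witnesses $\bm{z}^{(r,\alpha,\beta)}$ indeed exist), and that the monotonicity of $T$ you rely on is precisely the content of Lemma~\ref{lem:properties:f} plus the fact that $T^{r,\alpha,\beta}$ does not depend on coordinates $z^{\cdot,\beta',\cdot}$ with $\beta'\neq\beta$; both are readily available, so no gap remains.
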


The lemma identifies $\bm{z}^*$ as 
the maximal joint root of $f^{r,\alpha,\beta}$, $(r,\alpha,\beta)\in V$, in $S_0$. 
However, if $S_0\subsetneq S$, then there will exist joint roots $\tilde{\bm{z}}\not\in S_0$ such that $\bm{z}^*\leq\tilde{\bm{z}}$ componentwise.

Often $\hat{\bm{z}}$ and $\bm{z}^*$ will coincide and then Theorem \ref{thm:general:weights} below will show that the final default fraction $n^{-1}\vert\mathcal{D}_n\vert$ converges to $g(\hat{\bm{z}})$ in probability. But in some pathological situations this is not the case and Theorem \ref{thm:general:weights} will yield a lower bound on $n^{-1}\vert\mathcal{D}_n\vert$ in terms of $\hat{\bm{z}}$ and an upper bound in terms of $\bm{z}^*$. Figures \ref{fig:one:joint:root} and \ref{fig:two:joint:roots} show two-dimensional examples of $f$. In both examples, we chose $R=2$ and $T=1$. In the first example, we further chose all weights to be $1$ and the capital of each bank to be $3$ with probability $80\%$ respectively $0$ with probability $20\%$. The functions $f^1(z^1,z^2):=f^{1,1,1}(z^1,z^2)$ and $f^2(z^1,z^2):=f^{2,1,1}(z^1,z^2)$, where $z^1:=z^{1,1,1}$ and $z^2:=z^{2,1,1}$, then have a unique joint root, i.\,e.~$\hat{\bm{z}}=\bm{z}^*$. In the second example, we chose all weights to be $2$ and the capital of each bank to be $3$ with probability $\approx 94.14\%$ respectively $0$ with probability $\approx 5.86\%$. In this case, there exist two distinct joint roots $\hat{\bm{z}}\neq\bm{z}^*$ in $S_0$. At $\hat{\bm{z}}$, the root sets of $f^1$ and $f^2$ do not cross each other but only touch.

\begin{figure}[t]
    \hfill\subfigure[]{\includegraphics[width=0.4\textwidth]{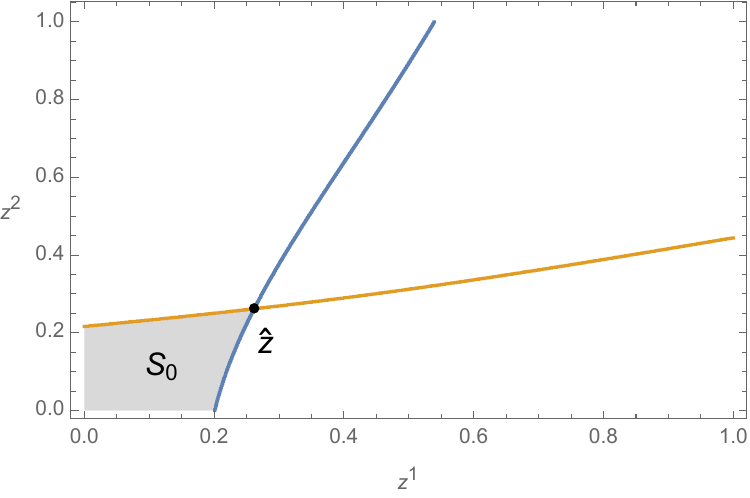}\label{fig:one:joint:root}}
    \hfill\subfigure[]{\includegraphics[width=0.4\textwidth]{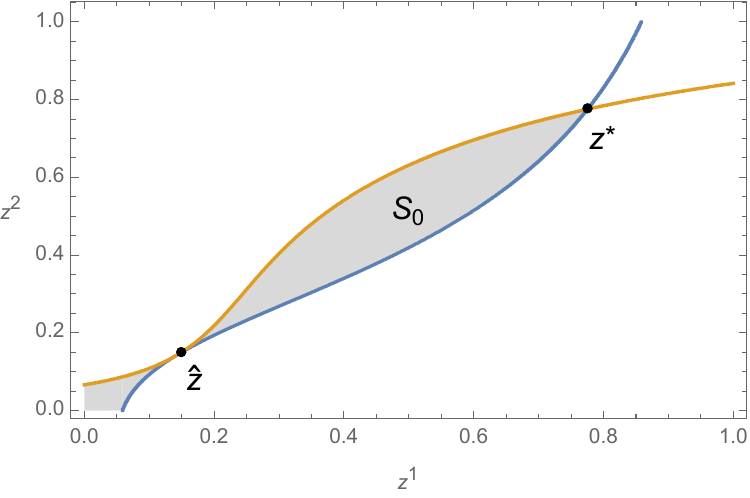}\label{fig:two:joint:roots}}\hfill
    
    \vspace*{-0.25cm}
\caption{Plot of the root sets of the functions $f^1(z^1,z^2)$ (blue) and $f^2(z^1,z^2)$ (orange) for two different example networks.}
\end{figure}

The next lemma provides two sufficient criteria to check if a joint root, such as $\hat{\bm{z}}$, equals $\bm{z}^*$. These depend on the (weak) directional derivatives of $f^{r,\alpha,\beta}$, $(r,\alpha,\beta)\in V$, and are hence natural extensions of the stable fixed point assumption in previous literature such as \cite{Cont2016,Detering2015a,Detering2016}. 

\begin{lemma}\label{lem:sufficient:criteria:z:star}
If $\bar{\bm{z}}\in S_0$ is a joint root of the functions $f^{r,\alpha,\beta}$, $(r,\alpha,\beta)\in V$, then $\bar{\bm{z}}=\bm{z}^*$ if one of the following holds:
\begin{enumerate}[(a)]
\item There exists $\bm{v}\in\R_+^V$ such that for all $(r,\alpha,\beta)\in V$ the directional derivatives $D_{\bm{v}}f^{r,\alpha,\beta}(\bar{\bm{z}})$ exist and $D_{\bm{v}}f^{r,\alpha,\beta}(\bar{\bm{z}})<0$.
\item There exist $\bm{v}\in\R_+^V$, $\kappa<1$ and $\Delta>0$ such that for every $\delta\in(0,\Delta)$,

\vspace*{-0.6cm}
\begin{align*}
\kappa v^{r,\alpha,\beta} &\geq \sum_{r'\in[R]}\E\Bigg[W^{+,r,\alpha}\Bigg(\sum_{\beta'\in[T]}v^{r',\beta,\beta'}W^{-,r',\beta'}\Bigg)\1\{A=\beta\}\\
&\hspace{1.5cm} \times\P\Bigg(\sum_{s\in[R]}s\mathrm{Poi}\Bigg(\sum_{\gamma\in[T]}W^{-,s,\gamma}\left(\bar{z}^{s,\beta,\gamma}+\delta v^{s,\beta,\gamma}\right)\Bigg)\in\{C-r',\ldots,C-1\}\Bigg)\Bigg].
\end{align*}
\end{enumerate}
\end{lemma}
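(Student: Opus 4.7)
The plan is to argue by contradiction: suppose $\bar{\bm{z}}\in S_0$ is a joint root of the $f^{r,\alpha,\beta}$'s satisfying (a) or (b) but $\bar{\bm{z}}\neq\bm{z}^*$, and derive a contradiction. The common first milestone is to show that $\bar{\bm{z}}+\epsilon\bm{v}\notin S$ for every sufficiently small $\epsilon>0$. Under (a) this is immediate from the definition of the directional derivative together with $f^{r,\alpha,\beta}(\bar{\bm{z}})=0$: each $f^{r,\alpha,\beta}(\bar{\bm{z}}+\epsilon\bm{v})$ becomes strictly negative for $\epsilon$ small. Under (b) I would use $f(\bar{\bm{z}})=0$ to write
\[
f^{r,\alpha,\beta}(\bar{\bm{z}}+\delta\bm{v}) = \E\big[W^{+,r,\alpha}\bigl(\psi_C(\text{args at }\bar{\bm{z}}+\delta\bm{v})-\psi_C(\text{args at }\bar{\bm{z}})\bigr)\1\{A=\beta\}\big] - \delta v^{r,\alpha,\beta},
\]
and bound the expectation by $\delta$ times the right-hand side of (b) (hence by $\delta\kappa v^{r,\alpha,\beta}$) via a Poisson thinning coupling: write each $X_s\sim\mathrm{Poi}(\text{arg at }\bar{\bm{z}}+\delta\bm{v})$ as $X_s=Y_s+\Delta Y_s$ with independent $Y_s\sim\mathrm{Poi}(\text{arg at }\bar{\bm{z}})$ and $\Delta Y_s\sim\mathrm{Poi}(\delta\sum_\gamma W^{-,s,\gamma}v^{s,\beta,\gamma})$, and classify the crossing event $\{\sum_s sY_s<C\le\sum_s sX_s\}$ by the index $r'$ carried by the $\Delta Y$'s responsible for the overshoot -- matching precisely the sum over $r'$ appearing in (b). Combining with the $-\delta v^{r,\alpha,\beta}$ term yields $f^{r,\alpha,\beta}(\bar{\bm{z}}+\delta\bm{v})\leq\delta(\kappa-1)v^{r,\alpha,\beta}<0$.

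Next I would strengthen this local exclusion beyond the single direction $\bm{v}$. Because $\bm{v}$ is strictly positive componentwise, the inequality $D_{\bm{v}}f^{r,\alpha,\beta}(\bar{\bm{z}})<0$ is (modulo differentiability of $F:=f+\mathrm{id}$ at $\bar{\bm{z}}$) equivalent to $J\bm{v}<\bm{v}$ componentwise, where $J$ is the nonnegative matrix of partial derivatives of $F$ at $\bar{\bm{z}}$. Pairing with a nonnegative left Perron eigenvector then gives $\rho(J)<1$. Consequently, for every nonzero $\bm{w}\geq 0$, $(I-J)\bm{w}$ must have at least one strictly positive coordinate -- otherwise iterating $\bm{w}\leq J\bm{w}\leq J^k\bm{w}\to 0$ would force $\bm{w}=0$. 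This yields $f^{r,\alpha,\beta}(\bar{\bm{z}}+t\bm{w})<0$ in some coordinate for all small $t>0$, so there exists $\varepsilon>0$ with
\[
S\cap B_\varepsilon(\bar{\bm{z}})\subseteq\{\bm{z}\in\R_{+,0}^V:\bm{z}\leq\bar{\bm{z}}\text{ componentwise}\}.
\]
Under (b), the same conclusion is obtained by applying the finite-difference bound, via monotonicity of $F$ in its non-diagonal coordinates, to general nonnegative directions.

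The final step is the globalization. By Lemma~\ref{lem:existence:hatz} we have $\bm{z}^*\in S_0$ with $\bm{z}^*\geq\bar{\bm{z}}$ componentwise. Since $S_0$ is path-connected (being carved from a compact subset of $\R^V$ by finitely many continuous inequalities), there is a continuous path $\gamma:[0,1]\to S_0$ from $\bar{\bm{z}}$ to $\bm{z}^*$. By the local-shape result, on an initial segment $\gamma$ lies in $B_\varepsilon(\bar{\bm{z}})\cap S$, hence $\gamma(t)\leq\bar{\bm{z}}$; yet $\gamma(1)=\bm{z}^*\not\leq\bar{\bm{z}}$. A first-exit-time analysis of $\gamma$ from $\{\bm{z}\leq\bar{\bm{z}}\}$, combined with the monotone structure of the iteration $F^n$ on the invariant box $[\bar{\bm{z}},\bm{z}^*]$, yields the contradiction -- in direct analogy with the one-dimensional picture where a strictly decreasing $f$ at one root $\bar{z}$ forces any second root $z^*>\bar{z}$ into a different connected component of $\{f\geq 0\}$.

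The hardest part is precisely this last step: translating the purely local spectral-radius / tangent-cone information at $\bar{\bm{z}}$ into the global equality $\bar{\bm{z}}=\bm{z}^*$. The Perron--Frobenius reduction is technically clean but only gives the behaviour of $S$ in an infinitesimal neighborhood; propagating this to all of $S_0$ requires the careful interplay between path-connectedness of $S_0$ and the monotone dynamics of $F$ on $[\bar{\bm{z}},\bm{z}^*]$.
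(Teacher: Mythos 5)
Your opening move is correct: construct $\bm{z}_\epsilon:=\bar{\bm{z}}+\epsilon\bm{v}$ with every $f^{r,\alpha,\beta}(\bm{z}_\epsilon)<0$. Under (a) this matches the paper; under (b) the paper reaches the same bound via the representation $f^{r,\alpha,\beta}(\bar{\bm{z}}+n^{-1}\bm{v})=\int_0^{1/n}(\cdots)\,\dd\delta\le -n^{-1}(1-\kappa)v^{r,\alpha,\beta}$ (Fubini plus the explicit derivative of $\psi_C$), the integrand being the right side of condition (b) minus $v^{r,\alpha,\beta}$. Your Poisson-thinning coupling could deliver the same estimate but is only sketched, and the ``classify by the single overshooting index $r'$'' decomposition needs care when several of the increments $\Delta Y_s$ are nonzero simultaneously.

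The real gap comes next: you abandon the strict-inequality points $\bm{z}_\epsilon$ and instead try to describe $S$ locally near $\bar{\bm{z}}$ and then globalize along a path. This is both harder than necessary and not actually carried out. Concretely: the Perron--Frobenius reduction needs a full Jacobian of $f+\mathrm{id}$ at $\bar{\bm{z}}$, which hypothesis (a) does not supply (only one directional derivative is assumed, and in the general non-finitary case $f$ need not be differentiable); your inclusion $S\cap B_\varepsilon(\bar{\bm{z}})\subseteq\{\bm{z}\le\bar{\bm{z}}\}$ is argued only along rays $\bar{\bm{z}}+t\bm{w}$ with $\bm{w}\ge0$, not for mixed-sign directions; $S_0$ is only known to be connected, not path-connected; and the ``first-exit-time'' argument you invoke but do not complete would, at the exit point from $\{\bm{z}\le\bar{\bm{z}}\}$, only deliver $f^k=0$ in the binding coordinate (since $f^k(\bar{\bm{z}})=0$), not a strict contradiction. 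Everything after your Step 1 is replaceable by one observation the paper uses: since $\bm{v}\in\R_+^V$, each $\bm{z}_\epsilon$ has all coordinates strictly positive, so the same monotonicity-plus-connectedness boundary-crossing argument as in the proof of Lemma~\ref{lem:existence:hatz} gives $S_0\subseteq[\bm{0},\bm{z}_\epsilon]$, i.e.\ $\bm{z}^*\le\bm{z}_\epsilon$, and letting $\epsilon\to0$ yields $\bm{z}^*\le\bar{\bm{z}}$; combined with $\bar{\bm{z}}\in S_0\Rightarrow\bar{\bm{z}}\le\bm{z}^*$ this finishes the proof. The key is to run the boundary-crossing on the boxes $[\bm{0},\bm{z}_\epsilon]$, where the $f$'s are \emph{strictly} negative on the relevant face, rather than on $[\bm{0},\bar{\bm{z}}]$, where they merely vanish.
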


\subsection{The Main Result for the Final Default Fraction}\label{subsecFinDef}
We now provide an asymptotic formula for the final default fraction $n^{-1}\vert\mathcal{D}_n\vert$ in terms of function $g$ and the joint roots $\hat{\bm{z}}$ and $\bm{z}^*$. The one type and exposure $1$ case ($R=T=1$) of the following result was obtained in \cite[Theorem 2.3]{Detering2015a} under an additional differentiability assumption guaranteeing that $\bm{z}^*=\hat{\bm{z}}$. 
\begin{theorem}\label{thm:general:weights}
Consider a financial system described by a regular vertex sequence and let $\hat{\bm{z}}$ and $\bm{z}^*$ be the smallest respectively largest joint root in $S_0$ of the functions $\{f^{r,\alpha,\beta}\}_{(r,\alpha,\beta)\in V}$. Then
\[ g(\hat{\bm{z}}) + o_p(1) \leq n^{-1}\vert\mathcal{D}_n\vert \leq g(\bm{z}^*) + o_p(1). \]
In particular, if $\hat{\bm{z}}=\bm{z}^*$, then $n^{-1}\vert\mathcal{D}_n\vert = g(\hat{\bm{z}}) + o_p(1)$.
\end{theorem}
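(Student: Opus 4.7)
My plan is to prove Theorem~\ref{thm:general:weights} by coupling the default cascade with a sequential edge-revelation (``death'') process and then invoking martingale concentration to identify a deterministic trajectory whose stopping points are precisely the joint roots of the functions $f^{r,\alpha,\beta}$. This mirrors the strategy used in the one-type setting of \cite{Detering2015a}, but with the added complications of multi-dimensional vertex types and genuinely weighted edges.

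First, I would set up the death process: place the initially defaulted vertices $\mathcal{D}_0$ in a processing queue and leave all edges unrevealed. In each round, pick a vertex $i$ from the queue, reveal all its outgoing edges $\{X_{i,j}^r\}_{j,r}$ using~\eqref{eqn:edge:prob}, add the corresponding load $r$ to the running in-load counter of each endpoint $j$, and enqueue every $j$ whose in-load now exceeds $c_j$. Because each edge is revealed at most once, the unrevealed portion of the graph remains conditionally an inhomogeneous random graph with the prescribed edge probabilities. For each $(r,\alpha,\beta)\in V$, track $Z_n^{r,\alpha,\beta}(k)$, the total number of weight-$r$ edges out of processed type-$\alpha$ vertices towards type-$\beta$ vertices after $k$ processing steps.

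Second, I would show that $n^{-1}\bm{Z}_n(\cdot)$ concentrates around a deterministic curve $\bm{z}(\cdot)$. An Azuma--Hoeffding bound on the Doob martingale associated with $n^{-1} Z_n^{r,\alpha,\beta}$, together with the convergence of the empirical distribution from Definition~\ref{def:regular:vertex:sequence}, shows that the conditional one-step drift of $n^{-1}Z_n^{r,\alpha,\beta}$ equals, up to $o(1)$, the first summand of $f^{r,\alpha,\beta}(n^{-1}\bm{Z}_n)$. Here the key computation is that at state $\bm{z}$ the in-load accumulated at a fixed type-$\beta$ vertex $j$ is asymptotically a sum $\sum_{s} s X_s$ of independent Poissons with parameters $\sum_{\gamma} W_j^{-,s,\gamma} z^{s,\beta,\gamma}$ (by the Poisson approximation of the binomial edge variables together with the independence across $s$). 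Hence a type-$\beta$ vertex is defaulted with probability $\psi_C$ evaluated at those Poisson means, and summing over $\beta$ gives total default fraction $g(\bm{z}) + o_p(1)$. The process continues precisely while some drift coordinate $f^{r,\alpha,\beta}(\bm{z}) > 0$ and halts when all drifts vanish, i.e.\ at a joint root of $\{f^{r,\alpha,\beta}\}_{(r,\alpha,\beta)\in V}$.

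Third, to extract the two bounds: the trajectory $\bm{z}(\cdot)$ starts at $\bm{0}\in S_0$ and increases monotonically in every coordinate, so it must first hit a point where all drifts vanish simultaneously. For the \textbf{lower bound}, this first hitting point lies componentwise above $\hat{\bm{z}}$, and by monotonicity of $g$ (inherited from monotonicity of $\psi_C$) we obtain $n^{-1}|\mathcal{D}_n| \geq g(\hat{\bm{z}}) + o_p(1)$. For the \textbf{upper bound}, continuity of the $f^{r,\alpha,\beta}$ (Lemma~\ref{lem:properties:f}) and non-negativity of the drift along the path force the trajectory to stay inside the connected component $S_0$; hence its terminal value is a joint root of the $f^{r,\alpha,\beta}$ inside $S_0$ and thus componentwise bounded by $\bm{z}^*$ (Lemma~\ref{lem:existence:hatz}), giving $n^{-1}|\mathcal{D}_n| \leq g(\bm{z}^*) + o_p(1)$.

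The \textbf{main obstacle} will be making the ``deterministic trajectory'' rigorous in this multi-dimensional, multi-weight setting: unlike the scalar case of \cite{Detering2015a}, the direction in which $n^{-1}\bm{Z}_n$ moves depends on the type of the vertex processed next, so the limit curve is not a priori a smooth function of a single time parameter. I would handle this by processing in batches of size $n^{1-\epsilon}$, applying a uniform union bound over the batches, and controlling the error between $f^{r,\alpha,\beta}(n^{-1}\bm{Z}_n)$ and the true drift using the continuity part of Lemma~\ref{lem:properties:f}. The sufficient criteria of Lemma~\ref{lem:sufficient:criteria:z:star}, which correspond exactly to transversal crossing of the zero sets of the $f^{r,\alpha,\beta}$, will ensure that the random trajectory cannot ``get stuck'' at a spurious near-root and correctly identifies the extremal joint roots $\hat{\bm{z}}$ and $\bm{z}^*$ in the limit.
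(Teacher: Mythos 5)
Your overall architecture --- a sequential death process whose drift matches $f^{r,\alpha,\beta}$, concentration to a deterministic trajectory, and identification of its endpoint with a joint root --- is the same as the paper's proof of Proposition~\ref{prop:finitary:weights}. Using Azuma--Hoeffding on the Doob martingale with batched processing is a fine and essentially interchangeable substitute for the paper's invocation of Wormald's theorem. However, there are two substantive gaps between your proposal and a correct proof of Theorem~\ref{thm:general:weights}.

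First, your martingale-concentration step implicitly requires bounded vertex-weights: the bounded-differences constant in Azuma--Hoeffding scales with $\overline{w}$, which is finite only in the \emph{finitary} case. A general regular vertex sequence may have unbounded weights (e.g.\ the Pareto weights the paper explicitly wants to allow), and then the one-step increments of $n^{-1}Z_n^{r,\alpha,\beta}$ have no useful uniform bound, so the concentration argument breaks. The paper handles this by proving the result for finitary systems first (Proposition~\ref{prop:finitary:weights}, Theorem~\ref{thm:finitary:weights}) and then sandwiching a general system between two finitary ones via stochastic domination (Lemmas~\ref{lem:stochastic:domination} and~\ref{lem:convergence:g}). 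Some such truncation/domination mechanism --- or an alternative way of obtaining uniform integrability --- is indispensable and is missing from your outline.

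Second, your upper-bound argument does not actually deliver $g(\bm{z}^*)$. You argue that the deterministic trajectory $\bm{z}(\cdot)$ stays in $S_0$ and ends at a joint root $\le\bm{z}^*$, but the trajectory stops precisely when $\sum_\alpha\nu^\alpha(\tau)=0$, and the differential-equation method (or your martingale version) says nothing about the process after that point. The random process continues and its residual can a priori exceed $o_p(n)$: in the situation of Figure~\ref{fig:two:joint:roots}, where the zero sets of the $f$'s only \emph{touch} at $\hat{\bm{z}}$, the deterministic trajectory halts at $\hat{\bm{z}}$ yet a macroscopic fraction of further defaults may occur --- and indeed the theorem's upper bound is $g(\bm{z}^*)$, not $g(\hat{\bm{z}})$. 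You gesture at Lemma~\ref{lem:sufficient:criteria:z:star} (a transversal-crossing condition) to preclude getting ``stuck,'' but that lemma is a \emph{sufficient} condition for $\hat{\bm{z}}=\bm{z}^*$; the theorem must hold without it. The paper's resolution (proof of Theorem~\ref{thm:finitary:weights}) is to add an extra $\epsilon$-shock that restores a negative directional derivative at the relevant joint root, apply the stable-direction bound there, and then let $\epsilon\to0$ using the right-continuity and integrability of $\epsilon\mapsto\bm{z}^*(\epsilon)$ (Lemma~\ref{lem:z^*:epsilon}). Without this perturbation and its accompanying analysis of residual defaults (the geometric decay of $T_l^{r,\alpha,\beta}$ in the paper), your argument only gives an upper bound when a stable direction already exists, which is strictly weaker than the claimed statement.
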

See Section \ref{ssec:proof:main:general} for the proof.
\noindent 

\begin{remark}
Theorem \ref{thm:general:weights} determines $g(\hat{\bm{z}})$ as a lower bound on the fraction of all finally defaulted banks in the network. In fact, $g(\bm{z})$ is given by a sum over all the different types $\beta\in[T]$ in the network and it is thus no surprise that by small changes in the proofs of Theorems \ref{thm:finitary:weights} and \ref{thm:general:weights}, one derives that the number of finally defaulted banks of type $\beta$ is lower~bounded~by

\vspace*{-0.05cm}
\[ n\E\Bigg[\P\Bigg(\sum_{s\in[R]}s\mathrm{Poi}\Bigg(\sum_{\gamma\in[T]}W^{-,s,\gamma}\hat{z}^{s,\beta,\gamma}\Bigg)\geq C\Bigg)\1\{A=\beta\}\Bigg] + o_p(n). \]

\vspace*{-0.05cm}
\noindent The same reasoning allows to derive an upper bound in terms of $\bm{z}^*$.
\end{remark}

\begin{remark}\label{rem:systemic:importance}
As mentioned in Remark~\ref{RemSysImp}, in many situations it makes sense to measure the damage caused by a shock event to the whole system not simply by counting finally defaulted banks but to weight them according to some systemic importance value $s_i$, $i\in[n]$. If 
the empirical distribution of $\{s_i\}_{i\in[n]}$ converges in distribution and mean to a random variable $S$, then in analogy to Theorem \ref{thm:general:weights}, the total systemic importance of finally defaulted banks $\mathcal{S}_n:=\sum_{i\in\mathcal{D}_n}s_i$ is lower bounded by $n (g_S(\hat{\bm{z}})+o_p(1))$ and upper bounded by $n (g_S(\bm{z}^*)+o_p(1))$, where
\[ g_S(\bm{z}) := \sum_{\beta\in[T]}\E\Bigg[S\P\Bigg(\sum_{s\in[R]}s\mathrm{Poi}\Bigg(\sum_{\gamma\in[T]}W^{-,s,\gamma}z^{s,\beta,\gamma}\Bigg)\geq C\Bigg)\1\{A=\beta\}\Bigg]. \]
\end{remark}

\section{Resilient and Non-Resilient Networks}\label{sec:resilience}

In the previous section we derived results that allow us to determine the typical final default fraction  in large financial systems caused by an exogenous shock. Another important question from a regulator's point of view that we study in this section is whether a given system in an \emph{initially} unshocked state is likely to be resilient to small shocks or susceptible to default~cascades.

Note that for some fixed financial network $(W^{-,r,\alpha},W^{+,r,\alpha},C,A)$ all information about the initial shock stems from $C$ and by ``initially unshocked'' we mean that $c_i>0$ for all $i\in[n]$. We model small shocks to the system by an \emph{ex post infection} in the following sense: we introduce indicators $m_i\in\{0,1\}$, $i\in[n]$, 
with the meaning that (the initially solvent) bank $i$ becomes insolvent if $m_i=0$. This amounts to setting its capital to $c_im_i$. In analogy to Definition \ref{def:regular:vertex:sequence} we assume regularity of $\{m_i\}_{i\in[n]}$ (jointly with the rest of the parameters) and we denote by $M$ the limiting random variable of ex post infection. 
In particular, the financial system 
shall be described by the random vector $(W^{-,r,\alpha},W^{+,r,\alpha},C,A,M)$ with $\P(C=0)=0$ and $\P(M=0)>0$. Denote by $\mathcal{D}_n^M$ the random final default set that $M$ triggers and by $(f^M)^{r,\alpha,\beta}$, $g^M$ and $(\bm{z}^*)^M$ the analogues of $f^{r,\alpha,\beta}$, $g$ respectively $\bm{z}^*$ with $C$ replaced by $CM$. 

From a regulator's point of view a desirable property of a financial system is the ability to absorb small local shocks $M$ without larger parts of the system being harmed. In our asymptotic setting, we can even choose $M$ arbitrarily small and we call a system \emph{resilient} 
if the final default fraction $n^{-1}\vert\mathcal{D}_n^M\vert$ tends to $0$ as $\P(M=0)\to0$. If on the other hand $n^{-1}\vert\mathcal{D}_n^M\vert$ is lower bounded by some positive constant, we call the system \emph{non-resilient} (see Definition \ref{def:non:resilience} below). We say that a sequence of events $(E_n)_{n\in\N}$ holds with high probability (w.\,h.\,p.) if $\P(E_n)\to1$, as $n\to\infty$.
\begin{definition}[Resilience]\label{def:resilience}
A financial system is said to be \emph{resilient} if for each $\epsilon>0$ there exists $\delta>0$ such that for all $M$ with $\P(M=0)<\delta$ it holds $n^{-1}\vert\mathcal{D}_n^M\vert \leq \epsilon$ w.\,h.\,p.
\end{definition}

It will turn out that the resilience of the system strongly depends on the form of the set $S_0$ which was introduced in Subsection~\ref{subsecPrel}.
Our first result is a criterion guaranteeing resilience. 
\begin{theorem}[Resilience Criterion]\label{thm:resilience}
Consider a financial system described by a regular vertex sequence and assume that 
$S_0=\{\bm{0}\}$. Then 
the system is resilient.
\end{theorem}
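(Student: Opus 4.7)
The plan is to apply Theorem~\ref{thm:general:weights} to the ex post shocked system, which gives
\[
n^{-1}\vert\mathcal{D}_n^M\vert\leq g^M((\bm z^*)^M)+o_p(1),
\]
and to show that the right-hand side can be made smaller than any $\epsilon>0$ by choosing $\P(M=0)$ sufficiently small, thereby establishing resilience in the sense of Definition~\ref{def:resilience}. Two ingredients drive the argument: a uniform perturbation estimate between shocked and unshocked quantities, and a topological trapping argument that exploits $S_0=\{\bm 0\}$.

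\emph{Perturbation estimate.} Because $\psi_{CM}=\psi_C$ on $\{M=1\}$ while $0\leq\psi_{CM}-\psi_C\leq 1$ on $\{M=0\}$, one obtains the pointwise bounds
\[
0\leq (f^M)^{r,\alpha,\beta}(\bm z)-f^{r,\alpha,\beta}(\bm z)\leq \E[W^{+,r,\alpha}\1\{M=0\}],\qquad 0\leq g^M(\bm z)-g(\bm z)\leq\P(M=0),
\]
uniformly in $\bm z\in\R_{+,0}^V$. Using $W^{+,r,\alpha}\1\{M=0\}\leq W^{+,r,\alpha}\1\{W^{+,r,\alpha}>K\}+K\1\{M=0\}$ together with $\E[W^{+,r,\alpha}]<\infty$, the first right-hand side tends to $0$ as $\P(M=0)\to 0$, irrespective of how $M$ is coupled to the other variables.

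\emph{Topological trapping.} Since $S$ is a compact subset of a metric space, its connected components coincide with its quasicomponents. Under the assumption $S_0=\{\bm 0\}$, a standard compactness argument separating $\bm 0$ from the closed set $F:=S\setminus B(\bm 0,\eta)$ produces, for every $\eta>0$, a set $U$ which is clopen in $S$ with $\bm 0\in U\subseteq B(\bm 0,\eta)$. As $U$ and $S\setminus U$ are disjoint compacts, their distance $3\rho>0$ is positive, so the open $\rho$-neighborhood $G:=\{\bm z:d(\bm z,U)<\rho\}$ satisfies $G\cap S=U$ and $\partial G\cap S=\emptyset$. By continuity of the $f^{r,\alpha,\beta}$ (Lemma~\ref{lem:properties:f}) and compactness of $\partial G$,
\[
\mu\,:=\,-\sup_{\bm z\in\partial G}\min_{(r,\alpha,\beta)\in V} f^{r,\alpha,\beta}(\bm z)\;>\;0,
\]
and combining with the perturbation estimate yields $\min_{(r,\alpha,\beta)\in V}(f^M)^{r,\alpha,\beta}(\bm z)<-\mu/2$ on $\partial G$ for $\P(M=0)$ small enough. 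Hence $\partial G\cap S^M=\emptyset$.

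\emph{Conclusion.} Since $\bm 0\in S^M$ (as $(f^M)^{r,\alpha,\beta}(\bm 0)=\E[W^{+,r,\alpha}\1\{M=0,\,A=\beta\}]\geq 0$) and $S^M$ cannot cross the barrier $\partial G$, the connected component $S_0^M$ of $\bm 0$ in $S^M$ is trapped inside $G\subseteq B(\bm 0,\eta+\rho)$. In particular $(\bm z^*)^M$ lies in this ball, and letting $\eta\to 0$ gives $(\bm z^*)^M\to\bm 0$ as $\P(M=0)\to 0$. Continuity of $g$, the identity $g(\bm 0)=\P(C=0)=0$, and the second perturbation bound then yield $g^M((\bm z^*)^M)\to 0$, and Theorem~\ref{thm:general:weights} completes the proof. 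The hard part is the topological trapping step: converting the purely set-theoretic hypothesis $S_0=\{\bm 0\}$ into a quantitative barrier on which the $f^{r,\alpha,\beta}$ are uniformly bounded away from zero, which rests on the compact-Hausdorff equivalence of components and quasicomponents producing arbitrarily small clopen neighborhoods of $\bm 0$ in $S$.
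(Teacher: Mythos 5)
Your proof is correct and establishes the theorem, but it takes a genuinely different route from the paper. The paper proceeds by introducing an auxiliary, uniform infection: it sets $(f^\gamma)^{r,\alpha,\beta}(\bm z):=(1-\gamma)f^{r,\alpha,\beta}(\bm z)+\gamma(\zeta^{r,\alpha,\beta}-z^{r,\alpha,\beta})$, notes that $f^M<f^\gamma$ uniformly once $\P(M=0)$ is small enough, and then invokes the argument of Lemma~\ref{lem:z^*:epsilon} --- that $\bigcap_{\gamma>0}S_0^\gamma=S_0$, which rests on the fact that an intersection of a chain of connected compacta in a Hausdorff space is connected --- to conclude $\bm z^*(\gamma)\to\bm z^*=\bm 0$, and hence $(\bm z^*)^M\leq\bm z^*(\gamma)$ is small. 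You instead translate $S_0=\{\bm 0\}$ directly into a quantitative barrier: using the equality of components and quasicomponents in the compact metric space $S$, you manufacture a clopen neighbourhood $U$ of $\bm 0$, thicken it to an open $G$ whose boundary avoids $S$, and use compactness of $\partial G$ to get a uniform negativity margin $\mu$ that survives small perturbations of $f$, thereby trapping $S_0^M$ inside $G$. Both approaches rely on a nontrivial piece of point-set topology (chain intersection of continua vs.\ quasicomponents $=$ components), but yours makes the geometric role of the hypothesis $S_0=\{\bm 0\}$ more transparent and does not require re-running the proof of Lemma~\ref{lem:z^*:epsilon} for a second family of shocks. Two small remarks worth tightening: (i) you should cap $\rho$ at $\eta$ (e.g.\ replace $\rho$ by $\min(\rho,\eta)$, which preserves $\partial G\cap S=\emptyset$) so that $G\subseteq B(\bm 0,2\eta)$ genuinely shrinks to $\bm 0$; (ii) the edge case $S\setminus U=\emptyset$ makes $\rho$ undefined, so take any finite $\rho>0$ there --- in either case the argument goes through. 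Your perturbation bound $0\leq(f^M)^{r,\alpha,\beta}-f^{r,\alpha,\beta}\leq\E[W^{+,r,\alpha}\1\{M=0\}]$ and the uniform-integrability truncation are both correct and are essentially the same control the paper uses when choosing $\delta$ so that $f^M<f^\gamma$.
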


In particular for $R=T=1$, resilience is ensured if $0=\inf\{z>0\,:\,f^{1,1,1}(z)<0\}$ and Theorem~\ref{thm:resilience} extends the one dimensional setting studied in \cite[Theorem 2.7]{Detering2015a}. Moreover, by Lemma~\ref{lem:sufficient:criteria:z:star}, if for some $\bm{v}\in\R_+^V$, $D_{\bm{v}}f^{r,\alpha,\beta}(\bm{0})$ exists and is negative for each $(r,\alpha,\beta)\in V$, then $S_0=\{\bm{0}\}$ and Theorem \ref{thm:resilience} is applicable.

Figure \ref{fig:resilient} shows a two-dimensional example satisfying the condition in Theorem \ref{thm:resilience}. We chose $R=2$, $T=1$, $W^{\pm,1}=W^{\pm,2}=1$ and $C=3$ and let $z^1:=z^{1,1,1}$ and $z^2:=z^{2,1,1}$. It can be seen from the figure that a small shock (here $5\%$ of all banks) keeps the smallest joint root of $f^1(z^1,z^2)=f^{1,1,1}(z^1,z^2)$ and $f^2(z^1,z^2)=f^{2,1,1}(z^1,z^2)$ for the shocked system close to $(0,0)$. By continuity properties of $g^M$ as $\P(M=0)\to0$ and $g(0,0)=0$ it follows that the final default fraction $n^{-1}\mathcal{D}_n^M$ is small for shocks $M$ with $\mathbb{P} (M=0)$ small.

\begin{figure}[t]
	\hfill\includegraphics[width=0.5\textwidth]{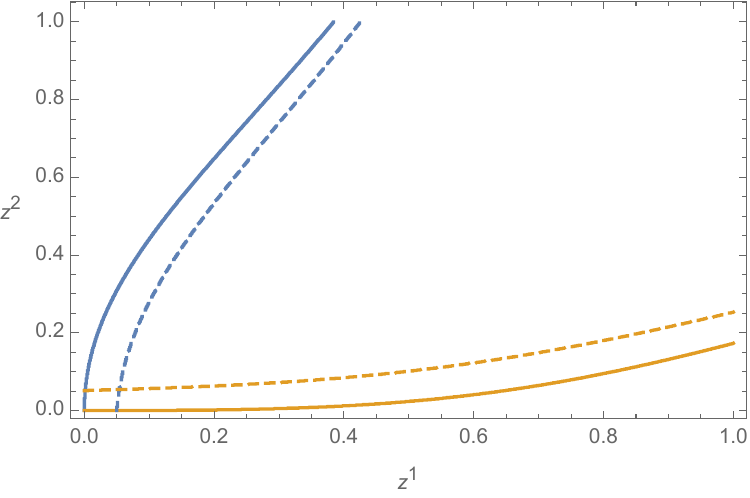}\hfill
		\caption{Plot of the root sets of the functions $f^1(z^1,z^2)$ (blue) and $f^2(z^1,z^2)$ (orange) for a financial system satisfying the condition in Theorem \ref{thm:resilience}. Solid: the unshocked functions. Dashed: the shocked functions.}\label{fig:resilient}
\end{figure}

On the other hand, to characterize non-resilient networks, a difficulty that arises is that the ex post shock $M$ possibly targets only certain subnetworks or only vertices with certain characteristics (weights). The subnetwork (for example core or periphery) that is targeted is encoded in the joint distribution of $(A,M)$ and it is not surprising that the effect of the ex-post infections depends on $\mathbb{P} (A=\alpha, M=0) = \mathbb{E} [\1\{A=\beta\}\1\{M=0\}]$. However, it turns out that this information is not sufficient in order to determine the impact of the shock $M$ completely and also the weights of targeted institutions matter, i.e. whether the targeted institutions build edges with exposure $r \in [R]$ to vertices of type $\beta \in [T]$. More precisely, define 
$$\tilde{V}:= \{ (r, \alpha , \beta ) \in V : \E[W^{+,r,\alpha}\1\{A=\beta\}]>0 \} $$
and note that $(r, \alpha , \beta ) \in \tilde{V}$ if vertices of type $\alpha$ build edges with exposure $r$ to vertices of type $\beta$. A shock $M$ such that $\E[W^{+,r,\alpha}\1\{A=\beta\}\1\{M=0\}]=0$ for all $(r,\alpha,\beta) \in \tilde{V}$ targets only lenders who are not borrowing themselves and it is not surprising that these shocks do not spread, i.e. the financial system is resilient with respect to these shocks. 
For all other shocks it turns out that they can be classified by checking for each $(r,\alpha,\beta) \in \tilde{V}$ if $\E[W^{+,r,\alpha}\1\{A=\beta\}\1\{M=0\}]$ is equal to $0$ or strictly greater than $0$ instead.
Consider for example a financial network consisting of banks of two types which are isolated of each other. Further, one of the two subnetworks shall be resilient, whereas the other one is non-resilient (in the sense of Definition \ref{def:non:resilience} (b)
). In order for the whole system to experience large damage, it is then necessary that $M$ does not only infect banks in the resilient subsystem but also in the non-resilient one. This explains why we have to differentiate between different choices for $M$ in the following to fully understand non-resilience in our model.

\begin{definition}[Non-Resilience]\label{def:non:resilience}
\begin{enumerate}[(a)]
\item Let $I\subseteq \tilde{V}:= \{ (r, \alpha , \beta ) \in V : \E[W^{+,r,\alpha}\1\{A=\beta\}]>0 \} $. A financial system is called \emph{non-resilient with respect to shocks on $I$} if there exists a constant $\Delta_I>0$ such that $n^{-1}\vert\mathcal{D}_n^M\vert \geq \Delta_I$ w.\,h.\,p.~for each shock $M$ with $\E[W^{+,r,\alpha}\1\{A=\beta\}\1\{M=0\}]>0$ for all $(r,\alpha,\beta)\in I$.
\item We call a financial system \emph{non-resilient} if it is non-resilient w.\,r.\,t.~shocks on 
some $I\subseteq \tilde{V}$.
\end{enumerate} 
\end{definition}
Clearly if a system is non-resilient with respect to shocks on $I$ it is also non-resilient with respects to shocks on $\tilde{I}$ for $I\subset \tilde{I}$. It follows that a system is non-resilient if and only if it is non-resilient w.\,r.\,t.~shocks on $\tilde{V}$.

\begin{remark}
Definition \ref{def:non:resilience} characterizes networks as non-resilient (with respect to shocks on $I$) if a lower bounded fraction of banks defaults. In many applications it might be desirable to distinguish between different parts of the global network and only call a system non-resilient if certain important parts experience lower bounded damage. This can be achieved by systemic importance values as introduced in Remark \ref{RemSysImp}: assign $s_i=0$ to banks $i$ in unimportant parts of the global network and replace $n^{-1}\vert\mathcal{D}_n^M\vert$ in Definitions \ref{def:resilience} and \ref{def:non:resilience} by $\mathcal{S}_n^M:=n^{-1}\sum_{i\in\mathcal{D}_n^M}s_i$. 
\end{remark}

Let us start by considering the special case that $M$ infects every part of the system (i.\,e.~a shock on $\tilde{V}$). This is the case for example if $M$ is independent of type $A$, vertex-weights $W^{\pm,r,\alpha}$ and capital $C$. We can then formulate a corollary of Theorem \ref{thm:non-resilience} stated below:
\begin{corollary}\label{cor:non:resilience:independent}
Consider a financial system described by a regular vertex sequence and  any ex post infection $M$ with $\E[W^{+,r,\alpha}\1\{A=\beta\}\1\{M=0\}]>0$ for all $(r,\alpha,\beta)\in \tilde{V}$. Then w.\,h.\,p.{} $n^{-1}\vert\mathcal{D}_n^M\vert \geq g(\bm{z}^*)$. If $\bm{z}^*\neq\bm{0}$ (i.\,e.~$S_0\neq\{\bm{0}\}$), then $g(\bm{z}^*)>0$ and the system is non-resilient.
\end{corollary}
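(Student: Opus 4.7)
The plan is to apply Theorem \ref{thm:non-resilience} (stated and proved later) in the special case $I=\tilde{V}$; equivalently, one combines the lower bound of Theorem \ref{thm:general:weights} applied to the shocked system $(W^{-,r,\alpha},W^{+,r,\alpha},CM,A)$ with a comparison argument between the shocked and unshocked fixed-point problems. A key preliminary observation is that $\psi_{CM}\geq\psi_C$ pointwise, with $\psi_{CM}(\bm{x})-\psi_C(\bm{x})=\1\{M=0\}(1-\psi_C(\bm{x}))$, which is strictly positive on $\{M=0\}$ since $C\geq 1$ a.s.\ gives $1-\psi_C(\bm{x})\geq e^{-\sum_s x_s}>0$. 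Combined with the hypothesis $\E[W^{+,r,\alpha}\1\{A=\beta,M=0\}]>0$ for every $(r,\alpha,\beta)\in\tilde{V}$, this yields the strict domination $(f^M)^{r,\alpha,\beta}(\bm{z})>f^{r,\alpha,\beta}(\bm{z})$ at every $\bm{z}\in\R_{+,0}^V$ and every $(r,\alpha,\beta)\in\tilde{V}$. In particular, the monotone maps $\Phi$ and $\Phi^M$ whose fixed points are the joint roots of $\{f^{r,\alpha,\beta}\}$ and $\{(f^M)^{r,\alpha,\beta}\}$ respectively satisfy $\Phi^M>\Phi$ strictly on $\tilde{V}$-coordinates.

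The central claim is that the smallest joint root $\hat{\bm{z}}^M$ of $(f^M)^{r,\alpha,\beta}$ dominates $\bm{z}^*$ componentwise. For any $\bm{z}\in S_0$ one has $\Phi(\bm{z})\geq\bm{z}$ by definition of $S_0$, hence $\Phi^M(\bm{z})>\bm{z}$ on $\tilde{V}$, which rules out fixed points of $\Phi^M$ inside $S_0$. Since $\hat{\bm{z}}^M$ arises as the limit of the monotone non-decreasing and bounded iteration $(\Phi^M)^k(\bm{0})$, this iteration must eventually exit $S_0$; using that $\bm{z}^*$ is the componentwise supremum of $S_0$ (Lemma \ref{lem:existence:hatz}) together with a connectedness argument along the iteration trajectory, the only way to exit is by crossing beyond $\bm{z}^*$. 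Intermediate joint roots of $f$ strictly below $\bm{z}^*$ (as in Figure \ref{fig:two:joint:roots}), where the zero sets of different $f^{r,\alpha,\beta}$ only touch tangentially, do not trap the iteration because they are destroyed by the strict perturbation $\Phi^M>\Phi$. Hence $\hat{\bm{z}}^M\geq\bm{z}^*$.

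Once $\hat{\bm{z}}^M\geq\bm{z}^*$ is in hand, the lower bound of Theorem \ref{thm:general:weights} applied to the shocked system yields
\[ n^{-1}\vert\mathcal{D}_n^M\vert\geq g^M(\hat{\bm{z}}^M)+o_p(1)\geq g^M(\bm{z}^*)+o_p(1)\geq g(\bm{z}^*)+o_p(1), \]
using monotonicity of $g^M$ in each coordinate and the pointwise inequality $g^M\geq g$. For the non-resilience claim, assume $\bm{z}^*\neq\bm{0}$ and pick a coordinate $(r_0,\alpha_0,\beta_0)$ with $(z^*)^{r_0,\alpha_0,\beta_0}>0$. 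The fixed-point relation $f^{r_0,\alpha_0,\beta_0}(\bm{z}^*)=0$ forces $\E[W^{+,r_0,\alpha_0}\psi_C(\cdot)\1\{A=\beta_0\}]=(z^*)^{r_0,\alpha_0,\beta_0}>0$, where $\psi_C$ is evaluated at the Poisson parameters $\sum_\gamma W^{-,s,\gamma}(z^*)^{s,\beta_0,\gamma}$; in particular $\psi_C(\cdot)>0$ with positive probability on $\{A=\beta_0\}$, making the $\beta_0$-term of $g(\bm{z}^*)$ strictly positive, so $g(\bm{z}^*)>0$ and the system is non-resilient per Definition \ref{def:non:resilience}(b). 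The main obstacle lies in the second paragraph: the topological/monotone-iteration step must rigorously exclude that $\hat{\bm{z}}^M$ is a fixed point of $\Phi^M$ with some component strictly below $\bm{z}^*$, a subtlety that is genuinely multi-dimensional and requires careful use of the connected-component structure of $S_0$.
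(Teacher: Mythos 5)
Your overall strategy matches the paper's: reduce the corollary to the lower bound of Theorem~\ref{thm:general:weights} applied to the shocked system, and prove the fixed-point inequality $\hat{\bm{z}}^M\geq\bm{z}^*$. Paragraph~1 (the strict domination $(f^M)^{r,\alpha,\beta}(\bm{z})>f^{r,\alpha,\beta}(\bm{z})$ for all $\bm{z}$ and $(r,\alpha,\beta)\in\tilde{V}$, via $1-\psi_C\geq e^{-\sum_s x_s}>0$ and the hypothesis on $M$) is correct, and the closing argument for $g(\bm{z}^*)>0$ is precisely the one the paper uses at the end of Theorem~\ref{thm:non-resilience}.

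However, the second paragraph --- the one carrying the whole argument --- has a genuine gap, which you yourself flag. The monotone iteration $(\Phi^M)^k(\bm{0})$ is increasing and converges to $\hat{\bm{z}}^M$, but its iterates do \emph{not} stay inside $S_0$: $\Phi^M(\bm{z})$ strictly exceeds $\bm{z}$ on $\tilde V$-coordinates only so long as $\bm{z}\in S_0$, and once the trajectory leaves $S_0$ nothing forces it to have swept past $\bm{z}^*$ rather than ``around'' it. Because $S_0$ is connected but need not be a lower set (a sub-cuboid), ``exit $S_0$'' and ``dominate $\bm{z}^*$ componentwise'' are not equivalent, and your remark that tangential intermediate roots are ``destroyed by the strict perturbation'' is an expectation, not a proof. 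The paper closes exactly this hole by factoring through the auxiliary object $\hat{\bm{z}}(\epsilon,\tilde{V})$ and the identity $\bm{z}_0(\tilde{V})=\bm{z}^*$ (Lemma~\ref{lem:z0:equals:z*}): one sets $\epsilon:=-\max_{(r,\alpha,\beta)\in\tilde{V}}f^{r,\alpha,\beta}(\hat{\bm{z}}^M)>0$, notes $\hat{\bm{z}}^M\in T(\epsilon,\tilde{V})$, and reruns the coordinate-by-coordinate monotone construction of Lemma~\ref{lem:existence:hatz} with $\hat{\bm{z}}^M$ as upper bound to obtain $\hat{\bm{z}}(\epsilon,\tilde{V})\leq\hat{\bm{z}}^M$; separately, a cuboid-boundary crossing argument using the compactness and connectedness of $S_0$ and the monotonicity from Lemma~\ref{lem:properties:f} yields $S_0\subset[\bm{0},\hat{\bm{z}}(\epsilon,\tilde{V})]$, hence $\bm{z}^*\leq\hat{\bm{z}}(\epsilon,\tilde{V})\leq\hat{\bm{z}}^M$. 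To make your proposal complete you would need to reproduce (or invoke) this two-step argument; the iteration-trajectory heuristic is not a substitute for it.

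Two minor points: the pointwise strict domination $(f^M)^{r,\alpha,\beta}>f^{r,\alpha,\beta}$ everywhere is stronger than needed --- the paper only uses $f^{r,\alpha,\beta}(\hat{\bm{z}}^M)<0$ at the single point $\hat{\bm{z}}^M$ --- and once $\hat{\bm{z}}^M\geq\bm{z}^*$ is established, the chain $n^{-1}\vert\mathcal{D}_n^M\vert\geq g^M(\hat{\bm{z}}^M)+o_p(1)\geq g(\bm{z}^*)+o_p(1)$ is fine but slightly more economical as in the paper, which directly compares $g^M(\hat{\bm{z}}^M)$ to $g(\bm{z}_0(I))$ and quantifies the positive gap $\delta$ in order to conclude ``w.h.p.''
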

By Theorem \ref{thm:resilience} and Corollary \ref{cor:non:resilience:independent} resilience of a financial system is hence completely characterized by $S_0$. See Figure \ref{fig:non:resilient} for an example where $S_0\neq\{\bm{0}\}$. In this example, we chose $R=2$, $T=1$, weights $W^{\pm,1}=W^{\pm,2}=3/2$ and capital $C=2$. 
The figure shows the jump of the smallest joint root from $\bm{0}$ to above $\bm{z}^*$ for any small shock (here $10\%$ of all banks). The discontinuity of the roof set of $f^2$ (orange line) shows that particularly the edges with exposure $r=2$ make the system non-resilient. 
\begin{figure}[t]
	\hfill\includegraphics[width=0.5\textwidth]{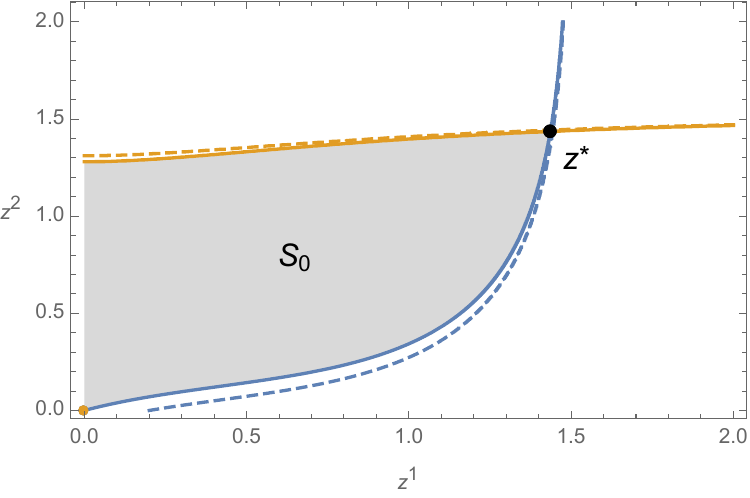}\hfill
		\caption{Plot of the root sets of the functions $f^1(z^1,z^2)$ (blue) and $f^2(z^1,z^2)$ (orange) for a financial system with 
		$S_0\neq\{\bm{0}\}$. Solid: the unshocked functions. Dashed: the shocked functions.
}\label{fig:non:resilient}
\end{figure}

We now aim to describe non-resilience with respect to shocks on $I\subsetneq \tilde{V}$. 
That is, we consider shocks $M$ such that $\E\left[W^{+,r,\alpha}\1\{A=\beta\}\1\{M=0\}\right]>0$ for $(r,\alpha,\beta)\in I$ but possibly $\E\left[W^{+,r,\alpha}\1\{A=\beta\}\1\{M=0\}\right]=0$ for 
$(r,\alpha,\beta)\in \tilde{V}\backslash I$. To this end, denote
\[ T(I) := \overline{\bigcap_{(r_1,\alpha_1,\beta_1)\in I}\left\{\bm{z}\in\R_{+,0}^V\,:\,f^{r_1,\alpha_1,\beta_1}(\bm{z})<0\right\} \cap \bigcap_{(r_2,\alpha_2,\beta_2)\in \tilde{V}\backslash I} \left\{\bm{z}\in\R_{+,0}^V\,:\,f^{r_2,\alpha_2,\beta_2}(\bm{z}) \leq 0\right\}} \]
and define $\bm{z}_0(I)$ 
by $z_0^{r,\alpha,\beta}(I) := \inf_{\bm{z}\in T(I)}z^{r,\alpha,\beta}$. Lemma \ref{lem:z0} shows that $\bm{z}_0(I)$ is the smallest joint root of 
$f^{r,\alpha,\beta}$, $(r,\alpha,\beta)\in V$, that is stable with respect to shocks in the $I$-coordinates. 
\begin{lemma}\label{lem:z0}
It holds 
$\bm{z}_0(I) \in S_0 \cap T(I)$ and it is thus a joint root of 
$f^{r,\alpha,\beta}$, $(r,\alpha,\beta)\in V$.
\end{lemma}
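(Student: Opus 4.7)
The plan is to prove the three properties $\bm{z}_0(I) \in T(I)$, that $\bm{z}_0(I)$ is a joint root of $f^{r,\alpha,\beta}$, and $\bm{z}_0(I) \in S_0$, in this order. Once $\bm{z}_0(I) \in S_0 \cap T(I)$ is established the joint-root conclusion follows automatically, because $T(I) \subseteq \{f^{r,\alpha,\beta} \le 0 \text{ for all } (r,\alpha,\beta) \in \tilde V\}$ (continuity plus closure), while $S_0 \subseteq S = \{f^{r,\alpha,\beta} \ge 0 \text{ for all } (r,\alpha,\beta) \in V\}$, so the two inclusions together force $f^{r,\alpha,\beta}(\bm{z}_0(I)) = 0$ on $\tilde V$; for $(r,\alpha,\beta) \notin \tilde V$ the identity $f^{r,\alpha,\beta}(\bm{z}) = -z^{r,\alpha,\beta}$ combined with membership in $S$ forces $z_0^{r,\alpha,\beta}(I) = 0$, giving $f^{r,\alpha,\beta}(\bm{z}_0(I)) = 0$ there as well.

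For $\bm{z}_0(I) \in T(I)$ I first establish a lattice property: if $\bm{z}_1, \bm{z}_2 \in T(I)$ then $\bm{z}_1 \wedge \bm{z}_2 \in T(I)$ (componentwise minimum). For each coordinate $(r,\alpha,\beta)$, pick $i \in \{1,2\}$ so that $z_i^{r,\alpha,\beta} = \min(z_1^{r,\alpha,\beta}, z_2^{r,\alpha,\beta})$; by monotonicity of $\psi_C$ in each argument this gives $f^{r,\alpha,\beta}(\bm{z}_1 \wedge \bm{z}_2) \le f^{r,\alpha,\beta}(\bm{z}_i)$, preserving both the strict inequalities on $I$ and the weak ones on $\tilde V\setminus I$ that define $T(I)$ prior to closure; continuity of $f$ then transfers the property to the full closure. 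Taking for each $(r,\alpha,\beta) \in V$ a sequence in $T(I)$ whose $(r,\alpha,\beta)$-entry converges to $z_0^{r,\alpha,\beta}(I)$ and forming the componentwise minimum yields a sequence in $T(I)$ converging to $\bm{z}_0(I)$, so closedness of $T(I)$ gives $\bm{z}_0(I) \in T(I)$.

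The joint root property then follows from a minimality argument: suppose $f^{r_0,\alpha_0,\beta_0}(\bm{z}_0(I)) < 0$ for some coordinate. By continuity this strict inequality persists under a small downward perturbation of $z^{r_0,\alpha_0,\beta_0}$, while by monotonicity of $\psi_C$ such a decrease cannot increase any $f^{r',\alpha',\beta'}$ with $(r',\alpha',\beta') \ne (r_0,\alpha_0,\beta_0)$. Combining this perturbation with a sequence satisfying the strict inequalities (whose limit is $\bm{z}_0(I)$) yields a point in $T(I)$ whose $(r_0,\alpha_0,\beta_0)$-coordinate is strictly below $z_0^{r_0,\alpha_0,\beta_0}(I)$, contradicting the defining infimum of $\bm{z}_0(I)$. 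Hence $f^{r,\alpha,\beta}(\bm{z}_0(I)) = 0$ for all $(r,\alpha,\beta) \in V$, and in particular $\bm{z}_0(I) \in S$.

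The hard part is the inclusion $\bm{z}_0(I) \in S_0$. My plan is to approximate $\bm{z}_0(I)$ by maximal joint roots $(\bm{z}^*)^{M_\varepsilon}$ of shocked systems with $\P(M_\varepsilon = 0) = \varepsilon$ targeting precisely the coordinates in $I$. Since $\psi_{CM_\varepsilon} \ge \psi_C$, one has $f^{M_\varepsilon} \ge f$ pointwise, strictly on the $I$-coordinates whenever $\psi_C < 1$, so each $(\bm{z}^*)^{M_\varepsilon}$ lies in $T(I)$ and hence dominates $\bm{z}_0(I)$ componentwise; Lemma~\ref{lem:existence:hatz} further places $(\bm{z}^*)^{M_\varepsilon} \in S_0^{M_\varepsilon}$. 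As $\varepsilon \downarrow 0$, compactness and uniform convergence $f^{M_\varepsilon} \to f$ produce a subsequential limit $\bm{z}^\infty \ge \bm{z}_0(I)$ that is a joint root of $f$. To conclude, one must establish a Hausdorff-type continuity $S_0^{M_\varepsilon} \to S_0$ forcing $\bm{z}^\infty \in S_0$, and then connect $\bm{z}_0(I)$ to $\bm{z}^\infty$ inside $S$, for instance via the Tarski-type lattice of fixed points of the monotone operator $F^{r,\alpha,\beta}(\bm{z}) := \E[W^{+,r,\alpha}\psi_C(\ldots)\1\{A=\beta\}]$ on the order-interval $[\bm{z}_0(I), \bm{z}^\infty]$. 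Justifying this Hausdorff continuity and the path-connectedness of the fixed-point interval is the main technical obstacle, and will rely on careful monotonicity arguments together with the global structure of $S$.
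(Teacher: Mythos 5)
Your lattice argument for $\bm{z}_0(I)\in T(I)$ is sound (the sublinearity of componentwise minimum under $f$ follows correctly from the monotonicity in Lemma~\ref{lem:properties:f}, and passing the lattice property to the closure is fine), and the perturbation argument for the joint-root property is also essentially correct. The real problem is the inclusion $\bm{z}_0(I)\in S_0$, which you yourself flag as unresolved, and the route you sketch will not close the gap.

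Two specific obstructions. First, you approximate with the \emph{maximal} shocked roots $(\bm{z}^*)^{M_\varepsilon}$. Each of these dominates $\bm{z}_0(I)$, but the limit $\bm{z}^\infty$ will in general \emph{strictly} dominate $\bm{z}_0(I)$ -- indeed Lemma~\ref{lem:z^*:epsilon} shows that for uniform shocks the maximal root converges to $\bm{z}^*$, not to any smaller root, and restricting the shock to $I$-coordinates still pins the limit to the top of a set rather than the bottom. So you end up proving $\bm{z}^\infty\in S_0$, which says nothing about $\bm{z}_0(I)$ unless the two coincide (and the interesting non-resilient examples are exactly those where $\bm{z}_0(I)\neq\bm{z}^*$). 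Second, the proposed remedy -- connecting $\bm{z}_0(I)$ to $\bm{z}^\infty$ ``via the Tarski-type lattice of fixed points'' -- conflates order-theoretic structure with topological connectedness. Tarski gives a complete lattice of fixed points, but that lattice is typically a disconnected subset of $\R^V$; the paper's own Figure~\ref{fig:two:joint:roots} shows two joint roots that merely touch tangentially, and there is no continuum of joint roots between them. Membership of the larger root in $S_0$ therefore gives no purchase on the smaller one.

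The paper avoids both problems by constructing the approximant \emph{bottom-up}. Replace $S_0$ by $S_0(\epsilon,I)$ (the component of $\bm{0}$ in $\bigcap_{(r,\alpha,\beta)\in V}\{f^{r,\alpha,\beta}\ge -\epsilon\1\{(r,\alpha,\beta)\in I\}\}$) in the coordinate-by-coordinate iteration from the proof of Lemma~\ref{lem:existence:hatz}. This produces the smallest point $\hat{\bm{z}}(\epsilon,I)$ of $T(\epsilon,I)$, and, crucially, it lies in $S_0(\epsilon,I)$ \emph{by construction}, since the iteration starts at $\bm{0}$ and stays inside the component. Letting $\epsilon\downarrow 0$, one shows $\lim_\epsilon\hat{\bm{z}}(\epsilon,I)\le\bar{\bm{z}}$ for every $\bar{\bm{z}}\in T(I)$ (each such $\bar{\bm{z}}$ is a limit of points in $T(\epsilon_k,I)$ for some $\epsilon_k>0$, which dominate $\hat{\bm{z}}(\epsilon_k,I)$), so the limit equals $\bm{z}_0(I)$; and $\bm{z}_0(I)\in\bigcap_{\epsilon>0}S_0(\epsilon,I)=S_0$ follows because a nested chain of connected compact sets in a Hausdorff space has connected compact intersection, containing $\bm{0}$ and lying in $S$. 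This bottom-up construction simultaneously yields $T(I)$-membership, minimality, and $S_0$-membership, so the ingredients you proved separately all fall out of it. The missing idea in your writeup is precisely this: the approximant must be produced \emph{inside} $S_0(\epsilon,I)$ from the start, rather than recovered from a top-down object and hoped to be reconnected to $\bm{0}$.
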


We can state a general theorem for non-resilience in terms of $\bm{z}_0(I)$, where $I$ denotes the set of coordinates impacted by 
$M$. 
\begin{theorem}[Non-Resilience Criterion]\label{thm:non-resilience}
Consider a financial system described by a regular vertex sequence and any ex post infection $M$ with $\E[W^{+,r,\alpha}\1\{A=\beta\}\1\{M=0\}]>0$ for all $(r,\alpha,\beta)\in I$, where $\emptyset\neq I\subseteq \tilde{V}$. Then w.\,h.\,p.~$n^{-1}\vert\mathcal{D}_n^M\vert \geq g(\bm{z}_0(I))$. If $\bm{z}_0(I)\neq\bm{0}$, then $g(\bm{z}_0(I))>0$ and the system is thus non-resilient with respect to shocks on $I$.
\end{theorem}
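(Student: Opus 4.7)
The plan is to apply Theorem \ref{thm:general:weights} to the shocked system (capital $CM$ replacing $C$) and then chain the resulting bound with monotonicity arguments to reduce to the unshocked quantity $g(\bm{z}_0(I))$. Concretely, Theorem \ref{thm:general:weights} applied to the shocked system yields
\[ n^{-1}\vert\mathcal{D}_n^M\vert \;\geq\; g^M(\hat{\bm{z}}^M) + o_p(1), \]
where $\hat{\bm{z}}^M$ is the smallest joint root of the $(f^M)^{r,\alpha,\beta}$, $(r,\alpha,\beta)\in V$. Since $CM\leq C$ a.s., one has $\psi_{CM}(\bm{u})\geq\psi_C(\bm{u})$ pointwise, and hence both $(f^M)^{r,\alpha,\beta}\geq f^{r,\alpha,\beta}$ and $g^M\geq g$ on $\R_{+,0}^V$. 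Since $g$ is coordinatewise monotone (Poisson tail probabilities are increasing in each rate), it will suffice to prove the key inclusion $\hat{\bm{z}}^M\in T(I)$; by the definition of $\bm{z}_0(I)$ as the coordinatewise infimum over $T(I)$, this gives $\bm{z}_0(I)\leq\hat{\bm{z}}^M$ componentwise, whence
\[ g^M(\hat{\bm{z}}^M) \;\geq\; g(\hat{\bm{z}}^M) \;\geq\; g(\bm{z}_0(I)). \]

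To verify $\hat{\bm{z}}^M\in T(I)$ I would check the two sign conditions separately. For $(r,\alpha,\beta)\in\tilde V\setminus I$, the identity $(f^M)^{r,\alpha,\beta}(\hat{\bm{z}}^M)=0$ combined with $f^{r,\alpha,\beta}\leq(f^M)^{r,\alpha,\beta}$ gives directly $f^{r,\alpha,\beta}(\hat{\bm{z}}^M)\leq 0$. For $(r,\alpha,\beta)\in I$ I would establish the strict inequality $f^{r,\alpha,\beta}(\hat{\bm{z}}^M)<0$ by exhibiting a strictly positive gap between $(f^M)^{r,\alpha,\beta}$ and $f^{r,\alpha,\beta}$ at $\hat{\bm{z}}^M$. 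Using $\P(C\geq 1)=1$, the elementary bound
\[ \psi_{CM}(\bm{u}) - \psi_C(\bm{u}) \;\geq\; \mathbf{1}\{M=0,\,C\geq 1\}\,\exp\!\Big(-\sum\nolimits_{s\in[R]} u_s\Big) \]
applied with $u_s=\sum_\gamma W^{-,s,\gamma}\hat{z}^{M,s,\beta,\gamma}$ (finite a.s. by Definition \ref{def:regular:vertex:sequence}(b)) combines with the hypothesis $\E[W^{+,r,\alpha}\mathbf{1}\{A=\beta\}\mathbf{1}\{M=0\}]>0$ to give $(f^M)^{r,\alpha,\beta}(\hat{\bm{z}}^M)-f^{r,\alpha,\beta}(\hat{\bm{z}}^M)>0$, and hence $f^{r,\alpha,\beta}(\hat{\bm{z}}^M)<0$. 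Therefore $\hat{\bm{z}}^M$ lies in the set whose closure is $T(I)$, and the desired inequality follows.

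For the non-resilience conclusion when $\bm{z}_0(I)\neq\bm{0}$, I would pick a coordinate $(r,\alpha,\beta)$ with $z_0^{r,\alpha,\beta}(I)>0$; since $\bm{z}_0(I)$ is a joint root of the $f^{r',\alpha',\beta'}$ (by Lemma \ref{lem:z0}), we have $\E\bigl[W^{+,r,\alpha}\psi_C(\cdots)\mathbf{1}\{A=\beta\}\bigr]=z_0^{r,\alpha,\beta}(I)>0$, which forces $\E[\psi_C(\cdots)\mathbf{1}\{A=\beta\}]>0$ (otherwise the weighted expectation would vanish as well). This summand is precisely one of those making up $g(\bm{z}_0(I))$, giving $g(\bm{z}_0(I))>0$, and thus $\Delta_I:=g(\bm{z}_0(I))$ serves as the constant in Definition \ref{def:non:resilience}.

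The step I expect to be the main obstacle is the strict-positivity argument $(f^M)^{r,\alpha,\beta}(\hat{\bm{z}}^M)>f^{r,\alpha,\beta}(\hat{\bm{z}}^M)$ for $(r,\alpha,\beta)\in I$: the crude lower bound on $\psi_{CM}-\psi_C$ by indicators times an exponential is the natural tool, but one must verify that the exponential factor is strictly positive on a set of positive measure within $\{M=0,A=\beta\}$, which in turn relies on the finite-mean assumption of Definition \ref{def:regular:vertex:sequence}(b) ensuring $\sum_s u_s<\infty$ a.s. Beyond this, only routine bookkeeping and invocation of the already-established Lemma \ref{lem:z0} is needed.
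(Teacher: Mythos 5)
Your overall route matches the paper's: apply Theorem~\ref{thm:general:weights} to the shocked system, show that $\hat{\bm{z}}^M$ lies in $T(I)$ by verifying the two sign conditions (strict negativity on $I$-coordinates coming from the hypothesis $\E[W^{+,r,\alpha}\1\{A=\beta\}\1\{M=0\}]>0$ and the finiteness of the Poisson rates), and conclude $\bm{z}_0(I)\le\hat{\bm{z}}^M$. The sign-condition check itself is sound and essentially reproduces the paper's key observation, just phrased via a pointwise bound $\psi_{CM}-\psi_C\ge\1\{M=0,C\ge1\}e^{-\sum_s u_s}$ rather than via the identity $(f^M)^{r,\alpha,\beta}(\hat{\bm{z}}^M)-f^{r,\alpha,\beta}(\hat{\bm{z}}^M)=\E[W^{+,r,\alpha}\P(\cdot\le C-1)\1\{A=\beta\}\1\{M=0\}]$.

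There is, however, a genuine gap at the final step. Your chain
\[
n^{-1}\vert\mathcal{D}_n^M\vert \;\geq\; g^M(\hat{\bm{z}}^M)+o_p(1)\;\geq\; g(\hat{\bm{z}}^M)+o_p(1)\;\geq\;g(\bm{z}_0(I))+o_p(1)
\]
proves only that for every fixed $\epsilon>0$ one has $n^{-1}\vert\mathcal{D}_n^M\vert\geq g(\bm{z}_0(I))-\epsilon$ w.\,h.\,p.\, whereas the theorem asserts the \emph{strict} inequality $n^{-1}\vert\mathcal{D}_n^M\vert\geq g(\bm{z}_0(I))$ w.\,h.\,p.\, To convert the $o_p(1)$ error into the claimed bound, one must exhibit a constant $\delta>0$ with $g^M(\hat{\bm{z}}^M)\geq g(\bm{z}_0(I))+\delta$, so that the $o_p(1)$ term is absorbed. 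The paper does exactly this: it sets $\delta:=g^M(\hat{\bm{z}}^M)-g(\bm{z}_0(I))\geq g^M(\bm{z}_0(I))-g(\bm{z}_0(I))=\sum_{\beta}\E[\P(\cdot\leq C-1)\1\{A=\beta\}\1\{M=0\}]>0$, the positivity holding because $\P(M=0)>0$, $\P(C=0)=0$, and the relevant Poisson sum is zero with positive probability. Your middle inequality $g^M(\hat{\bm{z}}^M)\geq g(\hat{\bm{z}}^M)$ is in fact strict by exactly the argument you already gave for $(f^M)^{r,\alpha,\beta}>f^{r,\alpha,\beta}$, but you write a weak inequality and never quantify the gap, so the w.\,h.\,p.\ conclusion does not follow as written. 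Aside from this, your verification of $\hat{\bm{z}}^M\in T(I)$ and the positivity argument for $g(\bm{z}_0(I))>0$ are correct.
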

By Theorem \ref{thm:non-resilience} we derive for shocks on $\tilde{V}$ that for any $\epsilon>0$ w.\,h.\,p.~$n^{-1}\vert\mathcal{S}_n^M\vert \geq g(\bm{z}_0(\tilde{V}))-\epsilon$ while in Corollary \ref{cor:non:resilience:independent} we claimed $n^{-1}\vert\mathcal{S}_n^M\vert \geq g(\bm{z}^*)-\epsilon$ w.\,h.\,p. By the following lemma the two are in fact equivalent.
\begin{lemma}\label{lem:z0:equals:z*}
It holds $\bm{z}_0(\tilde{V})=\bm{z}^*$.
\end{lemma}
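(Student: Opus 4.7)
The equality $\bm{z}_0(\tilde V) = \bm{z}^*$ reduces to two componentwise inequalities.

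For $\bm{z}_0(\tilde V) \leq \bm{z}^*$: Lemma~\ref{lem:z0} places $\bm{z}_0(\tilde V) \in S_0$, while $\bm{z}^*$ is by construction the coordinatewise supremum of $S_0$, so this inequality is immediate.

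For the reverse $\bm{z}^* \leq \bm{z}_0(\tilde V)$: I would show that every $\bm{z} \in T(\tilde V)$ satisfies $\bm{z} \geq \bm{z}^*$ componentwise, which yields $z_0(\tilde V)^{r,\alpha,\beta} = \inf_{\bm{z} \in T(\tilde V)} z^{r,\alpha,\beta} \geq (z^*)^{r,\alpha,\beta}$ for each coordinate. Fix $\bm{z} \in T(\tilde V)$ and introduce the monotone map $\tilde F^{r,\alpha,\beta}(\bm{z}) := f^{r,\alpha,\beta}(\bm{z}) + z^{r,\alpha,\beta}$ (so the joint roots of the $f^{r,\alpha,\beta}$ are exactly the fixed points of $\tilde F$, and $\tilde F$ is monotone non-decreasing by stochastic monotonicity of $\psi_C$). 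Continuity of the $f^{r,\alpha,\beta}$ together with the closure defining $T(\tilde V)$ give $f^{r,\alpha,\beta}(\bm{z})\leq 0$ for $(r,\alpha,\beta)\in\tilde V$; for $(r,\alpha,\beta)\in V\setminus\tilde V$ one has $W^{+,r,\alpha}\1\{A=\beta\}=0$ a.s.\ so $f^{r,\alpha,\beta}(\bm{z})=-z^{r,\alpha,\beta}\leq 0$ automatically. Thus $\tilde F(\bm{z})\leq \bm{z}$ componentwise. Forming $\bm{v} := \min(\bm{z},\bm{z}^*)$, monotonicity of $\tilde F$ combined with $\tilde F(\bm{z}^*)=\bm{z}^*$ and $\tilde F(\bm{z})\leq \bm{z}$ yields
\begin{equation*}
\tilde F(\bm{v}) \leq \min\bigl(\tilde F(\bm{z}),\tilde F(\bm{z}^*)\bigr) \leq \min(\bm{z},\bm{z}^*) = \bm{v},
\end{equation*}
so $\bm{v}$ is a super-fixed point. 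Iterating $\tilde F$ from $\bm{v}$ produces a componentwise non-increasing sequence that, by continuity of $\tilde F$, converges to a joint root $\bm{v}_\infty \leq \bm{v} \leq \bm{z}^*$.

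The crux is to show $\bm{v}_\infty = \bm{z}^*$, which forces $\bm{v} = \bm{z}^*$ componentwise and hence $\bm{z} \geq \bm{z}^*$. My plan is to approximate $\bm{z}$ by a sequence $\bm{z}^{(k)} \to \bm{z}$ with $f^{r,\alpha,\beta}(\bm{z}^{(k)}) < 0$ strictly on $\tilde V$, and transfer this strict negativity to the corresponding $\bm{v}^{(k)} := \min(\bm{z}^{(k)},\bm{z}^*)$; then a connectedness argument inside $S_0$ shows that the limit $\bm{v}_\infty$ lies in $S_0 \cap T(\tilde V)$. Combined with the fact that $\bm{z}_0(\tilde V)$ is the coordinatewise infimum over $T(\tilde V)$ and the characterization of Lemma~\ref{lem:z0}, this would force $\bm{v}_\infty = \bm{z}^*$.

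The main obstacle is precisely the last step: ruling out that the monotone iteration from $\bm{v}$ lands at a joint root of $\tilde F$ inside $S_0$ that is strictly smaller than $\bm{z}^*$ (such as $\hat{\bm{z}}$ in Figure~\ref{fig:two:joint:roots}). Such interior joint roots of $S_0$ cannot themselves belong to $T(\tilde V)$, because the path-connectedness of $S_0$ through them means that every sufficiently small neighborhood contains $S_0$-points where the $f^{r,\alpha,\beta}$ are simultaneously nonnegative, precluding the existence of approximating sequences with all $f^{r,\alpha,\beta} < 0$ strictly. Turning this topological observation into a precise exclusion that survives the limit $\bm{v}^{(k)} \to \bm{v}$ is the key technical hurdle.
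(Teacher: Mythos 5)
Your setup and the easy direction are fine: $\bm{z}_0(\tilde V)\in S_0$ by Lemma~\ref{lem:z0} and $\bm{z}^*$ is the coordinatewise supremum of $S_0$, so $\bm{z}_0(\tilde V)\le\bm{z}^*$. The monotone map $\tilde F^{r,\alpha,\beta}(\bm{z})=f^{r,\alpha,\beta}(\bm{z})+z^{r,\alpha,\beta}$ is indeed non-decreasing, $\tilde F(\bm{z})\le\bm{z}$ does hold on $T(\tilde V)$, the infimum $\bm{v}=\min(\bm{z},\bm{z}^*)$ does satisfy $\tilde F(\bm{v})\le\bm{v}$, and iteration from $\bm{v}$ converges monotonically to some joint root $\bm{v}_\infty\le\bm{v}\le\bm{z}^*$. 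Up to that point everything you wrote is correct.

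The gap you flag at ``the crux'' is, however, not just a missing technicality --- the strategy is circular. Take the most relevant test point $\bm{z}=\bm{z}_0(\tilde V)$, which by Lemma~\ref{lem:z0} lies in $T(\tilde V)$ and is already known to be $\le\bm{z}^*$. Then $\bm{v}=\min(\bm{z}_0(\tilde V),\bm{z}^*)=\bm{z}_0(\tilde V)$, which is itself a fixed point of $\tilde F$, so the iteration produces $\bm{v}_\infty=\bm{z}_0(\tilde V)$ immediately. The claim $\bm{v}_\infty=\bm{z}^*$ is then literally the lemma. Your proposed fix also does not close the loop: if you approximate by $\bm{z}^{(k)}$ with $f^{r,\alpha,\beta}(\bm{z}^{(k)})<0$ strictly on $\tilde V$, the strict inequality does \emph{not} transfer to $\bm{v}^{(k)}=\min(\bm{z}^{(k)},\bm{z}^*)$. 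On any coordinate $(r,\alpha,\beta)$ where the min is achieved by $\bm{z}^*$ (i.e.\ $(z^*)^{r,\alpha,\beta}<(z^{(k)})^{r,\alpha,\beta}$), monotonicity only yields $\tilde F^{r,\alpha,\beta}(\bm{v}^{(k)})\le\tilde F^{r,\alpha,\beta}(\bm{z}^*)=(z^*)^{r,\alpha,\beta}=(v^{(k)})^{r,\alpha,\beta}$, which is the weak inequality $f^{r,\alpha,\beta}(\bm{v}^{(k)})\le 0$, not $<0$; so there is no reason for $\bm{v}^{(k)}$ or its limit to lie in $T(\tilde V)$. And even if $\bm{v}_\infty$ did lie in $S_0\cap T(\tilde V)$, Lemma~\ref{lem:z0} and the infimum characterization only give $\bm{z}_0(\tilde V)\le\bm{v}_\infty\le\bm{z}^*$, which is not a contradiction.

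What is missing is the key topological input the paper uses, which is not a monotone-iteration argument at all: assume $\bm{z}_0(\tilde V)\lneq\bm{z}^*$; then some coordinate $(r,\alpha,\beta)\in\tilde V$ satisfies $z_0^{r,\alpha,\beta}(\tilde V)<(z^*)^{r,\alpha,\beta}$ (coordinates outside $\tilde V$ vanish on all of $S_0$), and by the $\epsilon\to 0$ construction of $\bm{z}_0(\tilde V)$ in the proof of Lemma~\ref{lem:z0} one can choose $\epsilon>0$ with $\hat z^{r,\alpha,\beta}(\epsilon,\tilde V)<(z^*)^{r,\alpha,\beta}$. Since $S_0$ is connected, contains $\bm{0}$, and also contains $\bm{z}^*\notin[\bm{0},\hat{\bm{z}}(\epsilon,\tilde V)]$, there exists $\tilde{\bm{z}}\in S_0$ on the upper boundary of the cuboid $[\bm{0},\hat{\bm{z}}(\epsilon,\tilde V)]$, i.e.\ $\tilde{\bm{z}}\le\hat{\bm{z}}(\epsilon,\tilde V)$ with equality in some $\tilde V$-coordinate. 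Then monotonicity of $f$ in the off-coordinates gives $f$ at $\tilde{\bm{z}}$ at most $f$ at $\hat{\bm{z}}(\epsilon,\tilde V)$, which equals $-\epsilon<0$, contradicting $\tilde{\bm{z}}\in S_0$. This boundary-crossing argument is the missing ingredient; the $\tilde F$-iteration you built cannot substitute for it, because the iteration only ever moves downward and can terminate at any joint root in $S_0$, not specifically at $\bm{z}^*$.
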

The identity $\bm{z}_0(I)=\bm{z}^*$ on the other hand does not necessarily imply $I=\tilde{V}$.

\section{Applications}\label{sec:applications}
The theory developed in the previous sections allows to investigate many interesting novel settings as compared to the present literature. In this section, we discuss some of them and highlight their implications. Further, we demonstrate the applicability of our asymptotic results also for finite networks of reasonable size by numerical simulations.

In the first example, we investigate the influence of a non-resilient subsystem in a global system. Unsurprisingly the global system turns out to be non-resilient as well and we can further show that even resilient network parts become non-resilient by their connections to the non-resilient subsystem, i.\,e.~every howsoever small infection that occurs only within the resilient part of the system finally spreads to a lower bounded fraction of the resilient subsystem. 

\begin{example}\label{ex:non-res:subsystem}
For simplicity assume $R=1$ and denote $z^{\alpha,\beta}:=z^{1,\alpha,\beta}$, $f^{\alpha,\beta}(\bm{z}):=f^{1,\alpha,\beta}(\bm{z})$ and $W^{\pm,\alpha}:=W^{\pm,1,\alpha}$ in the following. Consider then a $1$-type banking system, described by the random vector $(\tilde{W}^-,\tilde{W}^+,\tilde{C})$, where $\P(\tilde{W}^+>0)=1$ and $\P(\tilde{C}=0)=0$, and assume that it is non-resilient
. In the $1$-dimensional case this breaks down to the existence of $\tilde{z}_0>0$ such that
\[ \tilde{f}(z) := \E\left[\tilde{W}^+\P\left(\mathrm{Poi}\left(\tilde{W}^-z\right)\geq\tilde{C}\right)\right] - z \geq 0, \quad\text{for all }z\in[0,\tilde{z}_0]. \]

Now introduce a second (possibly resilient) subsystem to the network. That is, the system is now described by the random vector $(W^{\pm,1,1},W^{\pm,1,2},W^{\pm,2,1},W^{\pm,2,2},C,A)$, where $\P(C=0)=0$, $A\in\{1,2\}$ and $\alpha_i=1$ means that bank $i\in[n]$ is in the non-resilient subsystem, whereas $\alpha_i=2$ means that $i$ is part of the second subsystem. In order for the characteristics of the non-resilient subsystem to be preserved, we require that $W^{-,1}\vert_{A=1} \stackrel{d}{=} \tilde{W}^-$, $W^{+,1}\vert_{A=1} \stackrel{d}{=} \tilde{W}^+/\P(A=1)$ (to account for the changed number of banks; due to the multiplicative form in \eqref{eqn:edge:prob} it is sufficient to adjust either in- or out-weights by $\P(A=1)$) and $C\vert_{A=1}\stackrel{d}{=}\tilde{C}$. We derive that
\[ f^{1,1}(\bm{z}) = \E\left[W^{+,1}\P\left(\mathrm{Poi}\left(W^{-,1}z^{1,1}+W^{-,2}z^{1,2}\right)\geq C\right)\1\{A=1\}\right] - z^{1,1} \geq \tilde{f}\left(z^{1,1}\right) \geq 0 \]
for all $\bm{z}=(z^{1,1},z^{1,2},z^{2,1},z^{2,2})$ with $z^{1,1}\in[0,\tilde{z}_0]$ and in particular $z_0^{1,1}(I)\geq\tilde{z}_0>0$, where $I:=\{(1,1)\}$. 
An application of Theorem \ref{thm:non-resilience} then yields that the fraction of finally defaulted banks in the network is lower bounded by
\begin{align*}
g(\bm{z}_0(I)) &= \E\left[\P\left(\mathrm{Poi}\left(W^{-,1}z_0^{1,1}(I)+W^{-,2}z_0^{1,2}(I)\right)\geq C\right)\1\{A=1\}\right]\\
&\hspace{4.5cm}+ \E\left[\P\left(\mathrm{Poi}\left(W^{-,1}z_0^{2,1}(I)+W^{-,2}z_0^{2,2}(I)\right)\geq C\right)\1\{A=2\}\right]
\end{align*}
w.\,h.\,p.~for any ex post infection $M$ satisfying $\P(M=0, A=1)>0$ (i.\,e.~infecting some banks in the non-resilient subsystem). That is, if a small fraction of banks in the non-resilient subsystem defaults due to an external shock event, then this infection spreads to the whole system and the fraction of finally defaulted banks in the second subsystem is w.\,h.\,p.~lower bounded~by
\begin{equation}\label{eqn:lower:bound:second:subsystem}
\E\left[\left.\P\left(\mathrm{Poi}\left(W^{-,1}z_0^{2,1}(I)+W^{-,2}z_0^{2,2}(I)\right)\geq C\right)\,\right\vert\,A=2\right].
\end{equation}
In fact, if we assume that $W^{+,2}\vert_{A=1}>0$ almost surely and $\P(W^{-,1}>0,C<\infty,A=2)>0$ (that is, there are some banks in the second subsystem lending to banks in the non-resilient subsystem), then it must hold that
\[ z_0^{2,1}(I) \geq \E\left[W^{+,2}\P\left(\mathrm{Poi}\left(W^{-,1}z_0^{1,1}(I)\right)\geq C\right)\1\{A=1\}\right] >0 \]
and hence the lower bound \eqref{eqn:lower:bound:second:subsystem} is strictly positive. That is, every howsoever small infected fraction in the non-resilient subsystem spreads to a lower bounded fraction of finally defaulted banks in the second subsystem as well.

Now finally assume that $W^{+,1}\vert_{A=2}>0$ almost surely and $\P(W^{-,2}>0,C<\infty,A=1)>0$ (that is, there are some banks in the non-resilient subsystem lending to banks in the second subsystem). By considering the function
\[ f^{1,2}(\bm{z}) = \E\left[W^{+,1}\P\left(\mathrm{Poi}\left(W^{-,1}z^{2,1}+W^{-,2}z^{2,2}\right)\geq CM\right)\1\{A=2\}\right] - z^{1,2}, \]
we derive that $(\hat{z}^M)^{1,2}>0$ for any ex post infection $M$ such that $\P(M=0,A=2)>0$ (that is, infecting some banks in the second subsystem) and hence also $(\hat{z}^M)^{1,1}>0$ by the form of $f^{1,1}(\bm{z})$ (see above). By the same means as before, we hence conclude that in fact $\hat{\bm{z}}^M\geq\bm{z}_0(I)$ and so the lower bounds derived above still hold. In particular, this means that every howsoever small initial shock to the second (possibly resilient) subsystem causes the default of a lower bounded fraction of banks in the second subsystem. That is, by connecting to the non-resilient subsystem the a priori possibly resilient second subsystem becomes non-resilient as well.

\end{example}

To better understand the phenomenon in Example \ref{ex:non-res:subsystem}, we specify all parameters explicitly: Let
\begin{align*}
W^{-,1}\vert_{A=1} &= w_1, & W^{+,1}\vert_{A=1} &= 2w_1, & C\vert_{A=1} &= 1,\\
W^{-,2}\vert_{A=2} &= w_2, & W^{+,2}\vert_{A=2} &= 2w_2, & C\vert_{A=2} &= 2,
\end{align*}
for $w_1>1$ and $w_2>0$. The parameters $w_1$ and $w_2$ are then a measure for how strongly connected the respective subsystems are and it is easy to confirm that the type-1 subsystem is in fact non-resilient whereas the type-2 subsystem is resilient. Both subnetworks are Erd\"{o}s-R\'{e}nyi random graphs. In the first subnetwork every edge is contagious and because of $w_1>1$ there exists a giant component (a component of size $\lambda n$ for some $\lambda>0$). As soon as one vertex in the giant component defaults, the entire giant component defaults. The second subnetwork can be easily seen to be resilient by analysing the respective functional. Additionally, we assume that both subsystems have the same size and edges from subsystem $1$ to subsystem $2$ are as likely as edges from subsystem $2$ to subsystem $1$, i.e. $\P(A=1)=\P(A=2)=1/2$ and $W^{\pm,1}\vert_{A=2} = W^{\pm,2}\vert_{A=1} = w_3 >0$. The parameter $w_3$ is a measure for how strongly interconnected the different subsystems are. In particular, 
\begin{align*}
w_3\left(f^{1,1}(\bm{z})+z^{1,1}\right) &= 2w_1\left(f^{2,1}(\bm{z})+z^{2,1}\right), & w_3\left(f^{2,2}(\bm{z})+z^{2,2}\right) &= 2w_2\left(f^{1,2}(\bm{z})+z^{1,2}\right)
\end{align*}
and hence it must hold that $z_0^{2,1}(I) = (2w_1)^{-1}w_3\,z_0^{1,1}(I)$ resp.~$z_0^{1,2}(I) =(2w_2)^{-1}w_3\,z_0^{2,2}(I)$. The problem then reduces to $f^1(z^1,z^2) = 0$ and $f^2(z^1,z^2)=0$, where $z^1:=z^{1,1}$, $z^2:=z^{2,2}$ and
\begin{align*}
f^1(z^1,z^2) &:= w_1\left(1-e^{-w_1z^1-(2w_2)^{-1}w_3^2z^2}\right) - z^1,\\
f^2(z^1,z^2) &:= w_2\left(1-e^{-(2w_1)^{-1}w_3^2z^1-w_2z^2}\left(1+(2w_1)^{-1}w_3^2z^1+w_2z^2\right)\right) - z^2.
\end{align*}
Depending on the choice of the weights $w_i$, $i=1,2,3$, the system shows slightly different behaviour, as illustrated in Figures \ref{fig:ex1:1jointRoot}, \ref{fig:ex1:1jointRoot:hump} and \ref{fig:ex1:3jointRoots}. In all cases, 
$\bm{z}_0(I)=\bm{z}^*\neq\bm{0}$ which determines a strictly positive lower bound on the final default fraction as shown in Example \ref{ex:non-res:subsystem}. Further in all three cases the type $2$ subsystem is resilient. However, in \ref{fig:ex1:1jointRoot:hump} and \ref{fig:ex1:3jointRoots} the root sets for the function $f^2(z^1,z^2)$ shows a discontinuity even along the axis $z^1=0$. Let us elaborate on what this means. Along $z^1=0$ the function $f^2(0,z^2)$ describes contagion within the resilient subsystem $2$, not taking contagion between the subsystems into account. The discontinuity of set $S_0$ on the axis $z^1=0$ then means the following: For small shocks on subsystem $2$ contagion is being contained. As the shock gets larger, however, the gap in $S_0$ is overcome and the final fraction of defaulted type $2$ institutions experiences a jump above the upper end of $S_0$ on $z^1=0$. Considering the full picture again additionally to the small initial shock, subsystem $2$ experiences a shock by the positive fraction of defaulted type $1$ institutions. While in \ref{fig:ex1:3jointRoots}  their number is not sufficient to overcome the gap in $S_0$ and the final fraction of type $2$ institutions stays small (but positive), in \ref{fig:ex1:1jointRoot:hump} with larger parameter $w_3$ and hence denser connection between the subsystems, the shock on subsystem $2$ imposed by defaulted type 1 instututions is strong enough to overcome the gap and cause severe damage to subsystem $2$. The discontinuity phenomena has been observed in \cite{Janson2012} for the Erd\"{o}s-R\'{e}nyi random graph.

\begin{figure}[t]
    \hfill\subfigure[]{\includegraphics[width=0.32\textwidth]{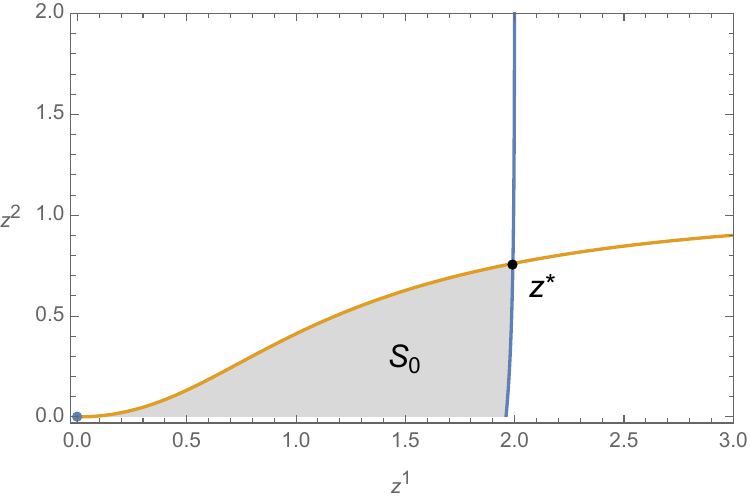}\label{fig:ex1:1jointRoot}}
    \hfill\subfigure[]{\includegraphics[width=0.32\textwidth]{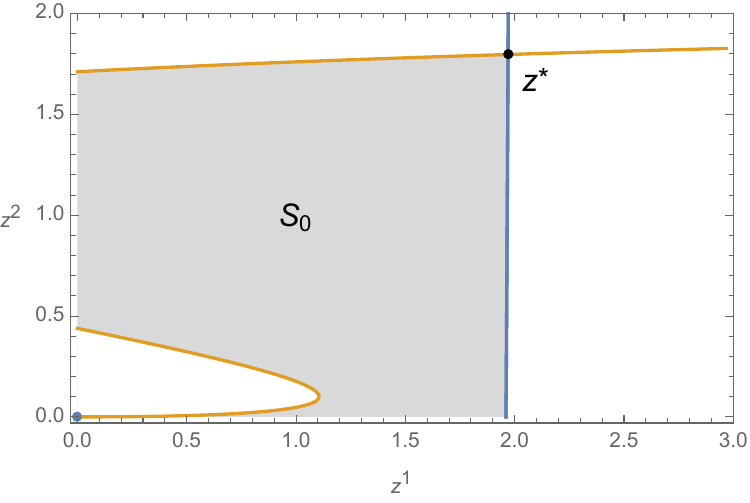}\label{fig:ex1:1jointRoot:hump}}
    \hfill\subfigure[]{\includegraphics[width=0.32\textwidth]{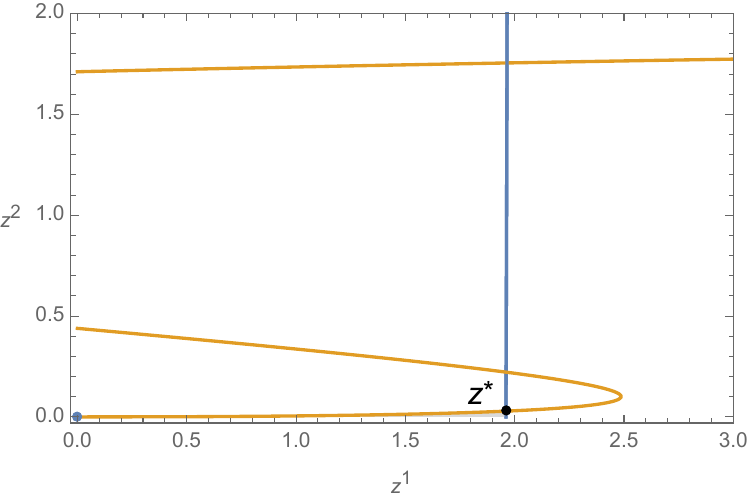}\label{fig:ex1:3jointRoots}}\hfill
\caption{Plot of the root sets of the functions $f^1(z^1,z^2)$ (blue) and $f^2(z^1,z^2)$ (orange) for the system with (a) $w_1=2$, $w_2=1$ and $w_3=2$, (b) $w_1=2$, $w_2=2$ and $w_3=3/4$ respectively (c) $w_1=2$, $w_2=2$ and $w_3=1/2$.
}\label{fig:Ex1}
\end{figure}

Since all the main results of this article and the derivations in Example \ref{ex:non-res:subsystem} are asymptotical, we demonstrate the applicability for finite networks numerically: For each of the scenarios (a)-(c) in Figure \ref{fig:Ex1} we performed $10^4$ simulations on networks of varying size $n\in\{100k\,:\,k\in[100]\}$ with $1\%$ initially defaulted banks. The outcomes are plotted in Figure \ref{fig:Convergence:Ex1} together with the theoretical asymptotic final default fraction (taking into account the initial default fraction of $1\%$). For case (a) except for 6 simulations at $n=100$ all results lie considerably close to the theoretical final fraction of $\approx87.98\%$ and their deviation becomes smaller the larger $n$ grows. For case (b) and $n<10^3$ some of the simulations ended in final default fractions around $55\%$. (Graphically these come from deviations of the hump (root set of $f^2$, orange) in Figure \ref{fig:ex1:1jointRoot:hump} such that it intersects with the root of $f^1$ (blue)). For all other simulations and especially for $n\geq10^3$, the simulation results clearly converge to the theoretical value of $\approx94.25\%$. For case (c), finally, some of the simulation outcomes for $n\leq500$ were close to $0$ and few around $\approx92.63\%$ (the value if one plugs in the largest of the three joint roots into $g$). The majority of the simulations (in particular for $n\geq4000$), however, resulted in final default fractions close to the theoretical value of $50.02\%$ and again deviations decrease as $n$ increases and we clearly confirm the theoretical convergence. Altogether we conclude that already for finite \mbox{networks of a few thousand vertices our asymptotic results are applicable with good accuracy.}

\begin{figure}[t]
\centering
    \hfill\subfigure[]{\includegraphics[width=0.49\textwidth]{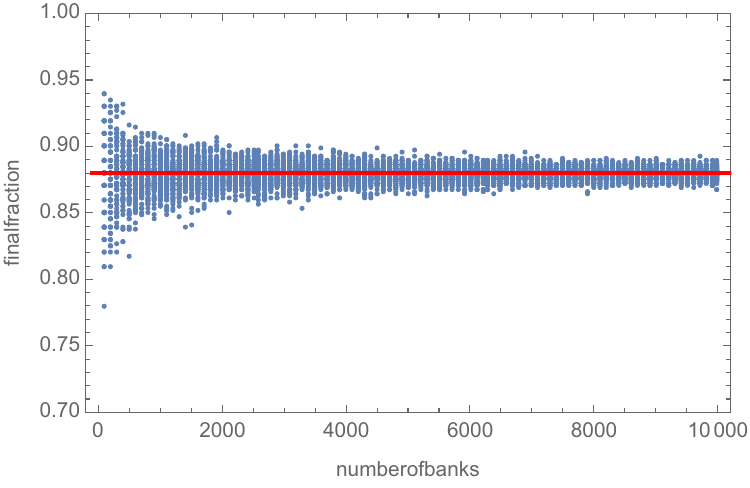}\label{fig:ex1:convergence:1jointRoot}}
    \hfill\subfigure[]{\includegraphics[width=0.49\textwidth]{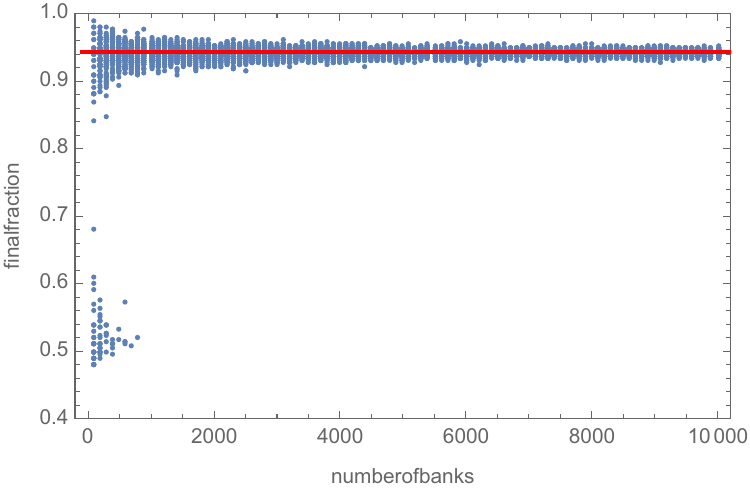}\label{fig:ex1:convergence:1jointRoot:hump}}
    \hfill\\
    \hfill\subfigure[]{\includegraphics[width=0.49\textwidth]{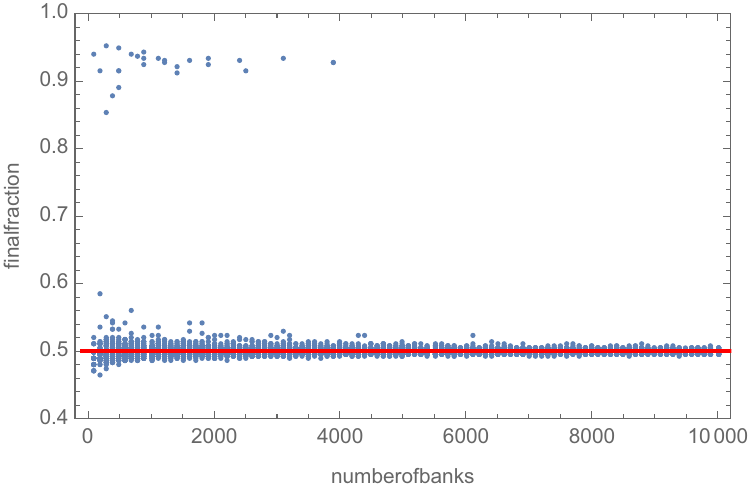}\label{fig:ex1:convergence:3jointRoots}}\hfill
\caption{Plot of the simulation results on networks of varying size (blue) and the theoretical asymptotic final default fraction (red) for the system with (a) $w_1=2$, $w_2=1$ and $w_3=2$, (b) $w_1=2$, $w_2=2$ and $w_3=3/4$ respectively (c) $w_1=2$, $w_2=2$ and $w_3=1/2$.
}\label{fig:Convergence:Ex1}
\end{figure}

In particular, what we learn from Example \ref{ex:non-res:subsystem} is that in order to ensure resilience of a particular subsystem, one needs to completely prohibit links to other non-resilient subsystems. It is, however, also possible that two subsystems which are resilient on their own form a non-resilient global system once connected to each other. It is therefore an interesting regulatory question how to ensure also resilience of a global system composed of various resilient subsystems. In general for our model the answer to this question is provided by Theorem \ref{thm:resilience}. However, in the following example we state a more intuitive criterion.

\begin{example}\label{ex:res:global:system}
Again, for simplicity assume that $R=1$. Consider a financial network consisting of $T$ subnetworks (types) which shall satisfy the following $1$-dimensional resilience conditions: For each $\epsilon>0$ there exists $z_\epsilon>0$ such that for all $z\in(0,z_\epsilon)$ and $\alpha\in[T]$ it holds
\begin{equation}\label{eqn:capital:requirements:subsystems}
\epsilon>\E\left[W^{+,\alpha}W^{-,\alpha}\P\left(\mathrm{Poi}\left(W^{-,\alpha}z\right)=C-1\right)\1\{A=\alpha\}\right].
\end{equation}
Note that this condition implies $\E[W^{+,\alpha}\P(\mathrm{Poi}(W^{-,\alpha}z)\geq C)\vert A=\alpha]<0$ for all $z$ small enough and hence indeed it implies resilience of the subsystem by Theorem \ref{thm:resilience}. In \cite{Detering2016} explicit capital requirements (i.\,e.~a formula for $C\vert_{A=\alpha}$ in dependence of $W^{-,\alpha}\vert_{A=\alpha}$) were derived for the case of Pareto distributed weights (which are typically observed in real networks) which ensure 
\eqref{eqn:capital:requirements:subsystems}. 

Now further assume that there exists a constant $K<\infty$ such that
\begin{equation}\label{eqn:bound:on:international:links}
W^{\pm,\beta}\vert_{A=\alpha} \leq KW^{\pm,\alpha}\vert_{A=\alpha}\quad\text{almost surely},
\end{equation}
for all $\alpha\neq\beta\in[T]$, i.\,e.~the tendency of institutions to develop links with institutions outside their subnetwork is bounded by a constant multiple of their tendency to develop links with institutions within their subnetwork. In particular this is the case if the external weights are bounded from above and the internal weights are bounded from below.

Replacing $W^{\pm,\beta}\vert_{A=\alpha}$ by $KW^{\pm,\alpha}\vert_{A=\alpha}$ only makes the system less resilient (if the weights increase, the number of links increases and hence the total exposure of each institution). Hence set $\tilde{W}^{\pm,\beta}\vert_{A=\alpha} = KW^{\pm,\alpha}\vert_{A=\alpha}$ for $\alpha\neq\beta$ and $\tilde{W}^{\pm,\alpha}\vert_{A=\alpha}=W^{\pm,\alpha}\vert_{A=\alpha}$. Now define $\bm{v}\in\R_+^{[T]\times[T]}$ by $v^{\alpha,\beta}=K^{\1\{\alpha\neq\beta\}}$, $\alpha,\beta\in[T]$. Then we derive that
\begin{align*}
&\E\Bigg[\tilde{W}^{+,\alpha}\Bigg(\sum_{\beta'\in[T]}v^{\alpha,\beta'}\tilde{W}^{-,\beta'}\Bigg)\P\Bigg(\mathrm{Poi}\Bigg(\sum_{\gamma\in[T]}\tilde{W}^{-,\gamma}z^{\alpha,\gamma}\Bigg)=C-1\Bigg)\1\{A=\alpha\}\Bigg]\\
&\hspace{0.5cm} = \E\Bigg[W^{+,\alpha}W^{-,\alpha}(1+K^2(T-1))\P\Bigg(\mathrm{Poi}\Bigg(W^{-,\alpha}\Bigg(z^{\alpha,\alpha}+K\sum_{\gamma\neq\alpha}z^{\alpha,\gamma}\Bigg)\Bigg)=C-1\Bigg)\1\{A=\alpha\}\Bigg]\\
&\hspace{0.5cm} < (1+K^2(T-1))\epsilon = v^{\alpha,\alpha}(1+K^2(T-1))\epsilon,
\end{align*}
for $z^{\alpha,\alpha}+K\sum_{\gamma\neq\alpha}z^{\alpha,\gamma}<z_\epsilon$, and
\begin{align*}
&\E\Bigg[\tilde{W}^{+,\alpha}\Bigg(\sum_{\beta'\in[T]}v^{\beta,\beta'}\tilde{W}^{-,\beta'}\Bigg)\P\Bigg(\mathrm{Poi}\Bigg(\sum_{\gamma\in[T]}\tilde{W}^{-,\gamma}z^{\beta,\gamma}\Bigg)=C-1\Bigg)\1\{A=\beta\}\Bigg]\\
&\hspace{0.24cm} = \E\Bigg[KW^{+,\beta}W^{-,\beta}(1+K^2(T-1))\P\Bigg(\mathrm{Poi}\Bigg(W^{-,\beta}\Bigg(z^{\beta,\beta}+K\sum_{\gamma\neq\beta}z^{\beta,\gamma}\Bigg)\Bigg)=C-1\Bigg)\1\{A=\beta\}\Bigg]\\
&\hspace{0.24cm} < K(1+K^2(T-1))\epsilon = v^{\alpha,\beta}(1+K^2(T-1))\epsilon,
\end{align*}
for $\alpha\neq\beta$ and $z^{\beta,\beta}+K\sum_{\gamma\neq\beta}z^{\beta,\gamma}<z_\epsilon$. If we now choose $\epsilon<(1+K^2(T-1))^{-1}$, then 
\[ \tilde{f}^{\alpha,\beta}(\delta\bm{v}):=\E\Bigg[\tilde{W}^{+,\alpha}\P\Bigg(\mathrm{Poi}\Bigg(\delta\sum_{\gamma\in[T]}\tilde{W}^{-,\gamma}v^{\beta,\gamma}\Bigg)\geq C\Bigg)\1\{A=\beta\}\Bigg] - \delta v^{\alpha,\beta} < 0 \]
for all $\delta>0$ small enough. It thus holds $\bm{z}^*\leq\lim_{\delta\to0+}\delta\bm{v}=\bm{0}$ and therefore $S_0=\{\bm{0}\}$. We can then apply Theorem \ref{thm:resilience} and obtain that the combined system is still resilient.

From a regulatory perspective it is hence enough to impose capital requirements 
described by \eqref{eqn:capital:requirements:subsystems} and to restrict links between different subsystems in the sense of \eqref{eqn:bound:on:international:links}.
\end{example}

In our first two examples we concentrated on the (non-)resilience of multi-type networks. For simplicity, we assumed that all edges carry the same exposure ($R=1$). 
Another interesting feature of our model, however, is that it allows for exposures that depend on the types of both the creditor and the debtor bank. The following example shows that this can indeed make a huge difference as compared to previous models in which exposures could only depend on the size/degree/type of the creditor bank. It considers two very similar financial systems whose only difference is that in one system exposures depend on both the creditor and debtor type and in the other system they depend on the type of the creditor bank only. As a consequence the first system will turn out to be non-resilient whereas the second one is resilient.

\begin{example}\label{ex:neighbor:dependent}
Consider a network of size $2\leq n\in\N$ in which (asymptotically) $p=1/3$ of all banks have type $1$ and the remaining $1-p=2/3$ banks have type $2$. That is, $T=2$. Further assume that for each pair of vertices $(i,j)\in[n]^2$ an edge from $i$ to $j$ shall be present with probability $4/n$. Edges between two banks of type $1$ shall carry exposure $2$ and all other edges exposure $1$. That is, if $\alpha_i=1$, then $w_i^{\pm,2,1}=w_i^{\pm,1,2}=2$ and $w_i^{\pm,1,1}=w_i^{\pm,2,2}=0$. If $\alpha_i=2$, then $w_i^{\pm,1,1}=w_i^{\pm,1,2}=2$ and $w_i^{\pm,2,1}=w_i^{\pm,2,2}=0$. Finally, all banks shall have capital $2$.

Then similarly as for Example \ref{ex:non-res:subsystem} the originally eight-dimensional system reduces to 
\begin{align*}
f^1(z^1,z^2) &= 2p\P\left(\mathrm{Poi}(2z^2)+2\mathrm{Poi}(2z^1)\geq 2\right) - z^1,\\
f^2(z^1,z^2) &= 2(1-p)\P\left(\mathrm{Poi}(2(z^1+z^2))\geq 2\right) - z^2.
\end{align*}
See Figure \ref{fig:neighbor:dependent} for an illustration of the root sets of $f^1$ and $f^2$. This figure already shows 
that $\bm{z}^*\neq\bm{0}$ and hence non-resilience by Theorem \ref{thm:non-resilience}. 
\begin{figure}[t]
    \hfill\subfigure[]{\includegraphics[width=0.45\textwidth]{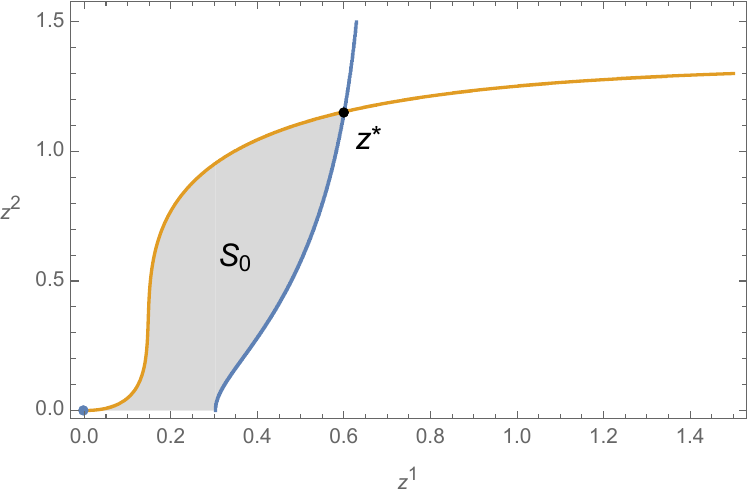}\label{fig:neighbor:dependent}}
    \hfill\subfigure[]{\includegraphics[width=0.45\textwidth]{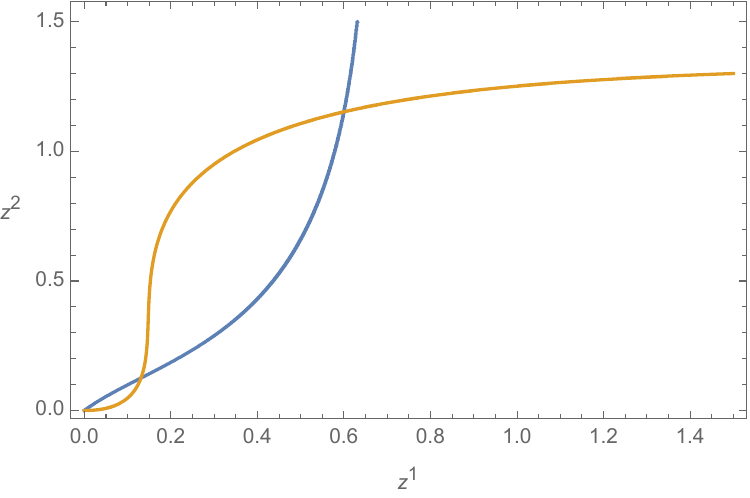}\label{fig:neighbor:independent}}\hfill
\caption{Plot of the root sets of the functions $f^1(z^1,z^2)$ (blue) and $f^2(z^1,z^2)$ (orange) for the system with (a) neighbor-dependent exposures respectively (b) neighbor-independent exposures.
}
\end{figure}
Also for $z^1,z^2\to0$, we can compute
\[ \frac{\partial f^1}{\partial z^1}(z^1,z^2) = 4p \P\left(\mathrm{Poi}(2z^2)+2\mathrm{Poi}(2z^1)\in\{0,1\}\right) - 1 \to 4p - 1 = \frac{1}{3} > 0, \]
which rigorously proves that the type-$1$ subnetwork and then also the whole system is non-resilient (cf.~Example \ref{ex:non-res:subsystem}). Numerically one derives that $\bm{z}^*\approx(0.601,1.153)$ and $g(\bm{z}^*)\approx 0.877$. In order to test this prediction, we performed $10^4$ simulations of financial networks of size $n=10^4$ with initial default probability $10^{-3}$. In only $5.32\%$ of the simulations, we observed a resilient nature in the sense that the simulated final default fraction was lower than $3\%$. All of the remaining simulations ended with a final default fraction within $[85.65\%,89.73\%]$ and are hence of a non-resilient nature. Averaging over the latter ones yields a mean final default fraction of $87.71\%$.


Now consider the following modified network: Instead of assigning exposure $2$ to all links between two type-$1$ banks and exposure $1$ to all other links, this time assign exposure $2$ with probability $p$ to any edge going to a type-$1$ bank (all other edges are assigned exposure $1$). That is, we keep the skeleton of the network but we redistribute the exposures in such a way that they do only depend on the creditor bank and not on the debtor bank. The total number of edges with exposure $2$ stays the same (note that in the first network the type-$1$ banks accounted to a fraction of $p$ of all the debtor banks of type-$1$ banks). This can be achieved by assigning the following new vertex-weights: $w_i^{+,1,1}=w_i^{+,2,1}=2$ for all $i\in[n]$. Further, if $\alpha_i=1$, then $w_i^{-,1,1}=w_i^{-,1,2}=2(1-p)$ and $w_i^{-,2,1}=w_i^{-,2,2}=2p$. All other vertex-weights shall stay the same. The new system then reduces to the following two functions, whose root sets are shown in Figure \ref{fig:neighbor:independent}:

\vspace*{-0.5cm}
\begin{align*}
f^1(z^1,z^2) &= 2p\P\left(\mathrm{Poi}(2(1-p)(z^1+z^2))+2\mathrm{Poi}(2p(z^1+z^2))\geq 2\right) - z^1,\\
f^2(z^1,z^2) &= 2(1-p)\P\left(\mathrm{Poi}(2(z^1+z^2))\geq 2\right) - z^2
\end{align*}

\vspace*{-0.1cm}
\noindent Figure \ref{fig:neighbor:independent} shows that the root set of $f^2$ is being shifted to the left, now starting off above the root set of $f^1$. One can hence already expect that $S_0=\{\bm{0}\}$ and the new system is resilient. Also more rigorously, as $z^1,z^2\to0$, we derive that

\vspace*{-0.6cm}
\begin{align*}
\frac{\partial f^1}{\partial z^1}(z^1,z^2) &= 4p(1-p)\P\left(\mathrm{Poi}(2(1-p)(z^1+z^2))+2\mathrm{Poi}(2p(z^1+z^2))=1\right)\\
&\hspace{3.15cm} + 4p^2\P\left(\mathrm{Poi}(2(1-p)(z^1+z^2))+2\mathrm{Poi}(2p(z^1+z^2))\in\{0,1\}\right) - 1\\
&\to 4p^2 - 1 = -\frac{5}{9},\\
\frac{\partial f^1}{\partial z^2}(z^1,z^2) &= 4p(1-p)\P\left(\mathrm{Poi}(2(1-p)(z^1+z^2))+2\mathrm{Poi}(2p(z^1+z^2))=1\right)\\
&\hspace{3.83cm} + 4p^2\P\left(\mathrm{Poi}(2(1-p)(z^1+z^2))+2\mathrm{Poi}(2p(z^1+z^2))\in\{0,1\}\right)\\
&\to 4p^2 = \frac{4}{9},\\
\frac{\partial f^2}{\partial z^1}(z^1,z^2) &= 4(1-p)\P\left(\mathrm{Poi}(2(z^1+z^2))=1\right) \to 0,\\
\frac{\partial f^2}{\partial z^2}(z^1,z^2) &= 4(1-p)\P\left(\mathrm{Poi}(2(z^1+z^2))=1\right) - 1 \to -1.
\end{align*}

\vspace*{-0.1cm}
\noindent The directional derivatives $D_{\bm{v}}f^1(\bm{0})$ and $D_{\bm{v}}f^2(\bm{0})$ thus exist for every $\bm{v}\in\R_+^V$. Choose then for example $\bm{v}=(v^1,v^2)=(1,1)$ such that $D_{\bm{v}}f^1(\bm{0})=-1/9$ and $D_{\bm{v}}f^2(\bm{0})=-1$. From Lemma \ref{lem:sufficient:criteria:z:star} we thus derive that $\bm{z}^*=\bm{0}$ and hence $S_0=\{\bm{0}\}$. This allows us to apply Theorem \ref{thm:resilience} and hence the modified system is indeed resilient. Again this can be validated numerically. On the same skeleton as for the previous simulation but with random exposures as described above the simulated final default fractions are now all within the interval $[0.11\%,0.63\%]$ with an average of $0.20\%$. The system is hence indeed of a resilient nature.
\end{example}

Although Example \ref{ex:neighbor:dependent} is too simple to model a real financial network, it still shows that counterparty-dependent exposures may have a significant impact on the stability of the system. In general it is also possible that they increase stability of the system, however.

\section{Proofs}\label{sec:proofs}
In this section we provide the proofs of 
our results in Sections \ref{sec:asymptotic:results} and \ref{sec:resilience}. Theorem \ref{thm:general:weights} will be proved in two steps. At this the underlying ideas are similar to \cite{Detering2015a} but at a considerable number of steps novel methods have to be used and we will particularly discuss them in detail.  We use the notation
\[ [\bm{a},\bm{b}]:=\bigcap_{(r,\alpha,\beta)\in V}\{\bm{z}\in\R^V\,:\,a^{r,\alpha,\beta}\leq z^{r,\alpha,\beta}\leq b^{r,\alpha,\beta}\} \]
for the cuboid spanned by the vectors $\bm{a}$ and $\bm{b}$ in $\R^V$ in the following. Further, let $\bm{\zeta}\in\R_{+,0}^V$ be defined by $\zeta^{r,\alpha,\beta}:=\E[W^{+,r,\alpha}\1\{A=\beta\}]$.

\subsection{Proofs of Lemmas \ref{lem:existence:hatz} and \ref{lem:sufficient:criteria:z:star}}\label{ssec:proofs:lemmas}
\begin{proof}[Proof of Lemma \ref{lem:existence:hatz}]
Existence of a smallest joint root $\hat{\bm{z}}\in[\bm{0},\bm{\zeta}]$ is ensured by the Knaster-Tarski fixed point theorem. We now construct a joint root $S_0\ni\bar{\bm{z}}\leq\hat{\bm{z}}$ which shows that $\hat{\bm{z}}=\bar{\bm{z}}\in S_0$, in particular:
It holds $f^{r,\alpha,\beta}(\hat{\bm{z}})=0$ for all $(r,\alpha,\beta)\in V$ and then $f^{r,\alpha,\beta}(\bm{z})\leq0$ for all $\hat{\bm{z}}\geq\bm{z}\in\R_{+,0}^V$ such that $z^{r,\alpha,\beta}=\hat{z}^{r,\alpha,\beta}$ 
by monotonicity of $f^{r,\alpha,\beta}$ from Lemma \ref{lem:properties:f}. 
Consider then the following sequence $(\bm{z}_n)_{n\in\N}\subset\R_{+,0}^V$:
\begin{itemize}
\item $\bm{z}_0=\bm{0}$
\item $\bm{z}_1=(z_1^{1,1,1},0,\ldots,0)$, where $z_1^{1,1,1}\geq0$ is the smallest possible value such that $f^{1,1,1}(\bm{z}_1)=0$. It is possible to find such $z_1^{1,1,1}$ by the intermediate value theorem since $f^{1,1,1}$ is continuous, $f^{1,1,1}(\bm{0})\geq 0$ and $f^{1,1,1}(\hat{z}^{1,1,1},0,\ldots,0)\leq0$. By Lemma \ref{lem:properties:f}, it then holds $f^{r,\alpha,\beta}(\bm{z}_1)\geq f^{r,\alpha,\beta}(\bm{0})\geq0$ for all $(1,1,1)\neq(r,\alpha,\beta)\in V$. In particular, $\bm{z}_1\in S_0$.
\item $\bm{z}_2=\bm{z}_1+(0,z_2^{1,1,2},0,\ldots,0)$, where $z_2^{1,1,2}\geq0$ is the smallest value such that $f^{1,1,2}(\bm{z}_2)=0$. Again it is possible to find such $z_2^{1,1,2}$ by the intermediate value theorem since $f^{1,1,2}$ is continuous, $f^{1,1,2}(\bm{z}_1)\geq0$ and $f^{1,1,2}(\bm{z}_1+(0,\hat{z}^{1,1,2},0,\ldots,0))\leq0$. Since $\bm{z}_1\in S_0$, by Lemma \ref{lem:properties:f} it then holds $f^{r,\alpha,\beta}(\bm{z}_2)\geq f^{r,\alpha,\beta}(\bm{z}_1)\geq0$ for all $(1,1,2)\neq(r,\alpha,\beta)\in V$. In particular, $\bm{z}_2\in S_0$.
\item $\bm{z}_i$, $i\in\{3,\ldots,RT^2\}$, are found analogously, changing only the corresponding coordinate.
\item $\bm{z}_{RT^2+1}=\bm{z}_{RT^2}+(z_{RT^2+1}^{1,1,1}-z_{RT^2}^{1,1,1},0,\ldots,0)$, where $z_{RT^2+1}^{1,1,1}\geq z_{RT^2}^{1,1,1}$ is the smallest value such that $f^{1,1,1}(\bm{z}_{RT^2+1})=0$, which is again possible by the intermediate value theorem. In particular, it still holds $z^{1,1,1}_{RT^2+1}\leq\hat{z}^{1,1,1}$. As before also $\bm{z}_{RT^2+1}\in S_0$.
\item Continue for $\bm{z}_i$, $i\geq RT^2+2$.
\end{itemize}
The sequence $(\bm{z}_n)_{n\in\N}$ constructed this way has the following properties: It is non-decreasing in each coordinate and $(\bm{z}_n)_{n\in\N}\subset S_0$. Further, it is bounded inside 
$[\bm{0},\hat{\bm{z}}]$. Hence by monotone convergence, each coordinate of $\bm{z}_n$ converges and so $\bar{\bm{z}}=\lim_{n\to\infty}\bm{z}_n$ exists. Now suppose there is $(r,\alpha,\beta)\in V$ such that $f^{r,\alpha,\beta}(\bar{\bm{z}})>0$. By continuity of $f^{r,\alpha,\beta}$ then also $f^{r,\alpha,\beta}(\bm{z}_n)>\epsilon$ for some $\epsilon>0$ and $n$ large enough. This, however, is in contradiction with the construction of the sequence $(\bm{z}_n)_{n\in\N}$ since $f^{r,\alpha,\beta}(\bm{z}_n)=0$ in every $RT^2$-th step. Hence $f^{r,\alpha,\beta}(\bar{\bm{z}})\leq 0$ for all $(r,\alpha,\beta)\in V$. Also $\bar{\bm{z}}\in S_0$, however, since this is a closed set
. Hence $f^{r,\alpha,\beta}(\bar{\bm{z}})\geq0$ for all $(r,\alpha,\beta)\in V$ and altogether this shows that $\bar{\bm{z}}$ is a joint root of all functions $f^{r,\alpha,\beta}$, $(r,\alpha,\beta)\in V$.


Now turn to the proof that $\bm{z}^*\in S_0$ and it is a joint root of all functions $f^{r,\alpha,\beta}$, $(r,\alpha,\beta)\in V$: First define the following sets for each $\epsilon>0$:
\[ S(\epsilon) := \bigcap_{(r,\alpha,\beta)\in V}\{\bm{z}\in\R_{+,0}^V\,:\,f^{r,\alpha,\beta}(\bm{z})\geq -\epsilon\}
\]
Further denote by $S_0(\epsilon)$ 
the connected component 
of $\bm{0}$ in $S(\epsilon)$
. By the same procedure as for $\hat{\bm{z}}$ above, we now derive that there exists a smallest (componentwise) point $\hat{\bm{z}}(\epsilon)\in S_0(\epsilon)$ such that $f^{r,\alpha,\beta}(\hat{\bm{z}}(\epsilon))=-\epsilon$ for all $(r,\alpha,\beta)\in V$. Clearly, $\hat{\bm{z}}(\epsilon)$ is non-decreasing in $\epsilon$ (componentwise) and hence 
$\tilde{\bm{z}}:=\lim_{\epsilon\to0+}\hat{\bm{z}}(\epsilon)$ exists (we will show 
that $\tilde{\bm{z}}=\bm{z}^*$ in fact).

Now by monotonicity of $S_0(\epsilon)$, we derive that $\hat{\bm{z}}(\delta)\in S_0(\delta)\subseteq S_0(\epsilon)$ for all $\delta\leq \epsilon$. Since $S_0(\epsilon)$ is a closed set, it must thus hold that also $\tilde{\bm{z}}=\lim_{\delta\to0+}\hat{\bm{z}}(\delta)\in S_0(\epsilon)$ for all $\epsilon>0$ and in particular, $\tilde{\bm{z}}\in\bigcap_{\epsilon>0}S_0(\epsilon)$. Further, by continuity of $f^{r,\alpha,\beta}$, $(r,\alpha,\beta)\in V$, we derive that $\bigcap_{\epsilon>0}S_0(\epsilon)\subseteq\bigcap_{\epsilon>0}S(\epsilon)\subseteq S$. Moreover, $\bigcap_{\epsilon>0}S_0(\epsilon)$ is the intersection of a chain of connected, compact sets in the Hausdorff space $\R^V$ and it is hence a connected, compact set itself. Since it further contains $\bm{0}$, we can then conclude that $\bigcap_{\epsilon>0}S_0(\epsilon)\subseteq S_0$ and thus $\tilde{\bm{z}}\in S_0$.

We now want to show that $\bm{z}\leq\tilde{\bm{z}}$ componentwise for arbitrary $\bm{z}\in S_0$. 
This clearly proves $\tilde{\bm{z}}=\bm{z}^*$. 
It thus suffices to show $S_0\subset[\bm{0},\hat{\bm{z}}(\epsilon)]$. Then $\bm{z}\leq\hat{\bm{z}}(\epsilon)$ and $\bm{z}\leq\lim_{\epsilon\to0+}\hat{\bm{z}}(\epsilon)=\tilde{\bm{z}}$. Hence assume that $S_0\not\subset[\bm{0},\hat{\bm{z}}(\epsilon)]$. By connectedness of $S_0$ we find $\bar{\bm{z}}\in S_0$ with $\bar{z}^{r,\alpha,\beta}\leq\hat{z}^{r,\alpha,\beta}(\epsilon)$ for all $(r,\alpha,\beta)\in V$ and equality for at least one coordinate. W.\,l.\,o.\,g.~let this coordinate be $(1,1,1)$. By monotonicity of $f^{1,1,1}$ with respect to $z^{r,\alpha,\beta}$ for every $(r,\alpha,\beta)\neq(1,1,1)$, we thus derive that
\[ f^{1,1,1}(\bar{\bm{z}})\leq f^{1,1,1}(\hat{\bm{z}}(\epsilon)) = -\epsilon. \]
However, we also assumed that $\bar{\bm{z}}\in S_0$ and hence $f^{1,1,1}(\bar{\bm{z}})\geq0$, a contradiction.

Finally, we obtain that $f^{r,\alpha,\beta}(\bm{z}^*)=f^{r,\alpha,\beta}(\tilde{\bm{z}}) = \lim_{\epsilon\to0+}f^{r,\alpha,\beta}(\hat{\bm{z}}(\epsilon)) = \lim_{\epsilon\to0+}(-\epsilon)=0$, by continuity of $f^{r,\alpha,\beta}$
. Hence $\bm{z}^*$ is in fact a joint root of all the functions $f^{r,\alpha,\beta}$, $(r,\alpha,\beta)\in V$.
\end{proof}
\begin{proof}[Proof of Lemma \ref{lem:sufficient:criteria:z:star}]
Note that it is sufficient to construct a sequence $(\bm{z}_n)_{n\in\N}\subset\R_+^V$ such that $\lim_{n\to\infty}\bm{z}_n=\bar{\bm{z}}$ and $f^{r,\alpha,\beta}(\bm{z}_n)<0$ for all $(r,\alpha,\beta)\in V$, $n\in\N$. By monotonicity of $f^{r,\alpha\beta}$ from Lemma \ref{lem:properties:f} it then follows that $\bm{z}^*\leq\bm{z}_n$ 
and hence $\bm{z}^*\leq\bar{\bm{z}}$. If condition 1.~is satisfied, we get $\lim_{n\to\infty}nf^{r,\alpha,\beta}\left(\bar{\bm{z}}+n^{-1}\bm{v}\right) = D_{\bm{v}}f^{r,\alpha,\beta}(\bar{\bm{z}}) < 0$ and we can hence choose $\bm{z}_n:=\bar{\bm{z}}+n^{-1}\bm{v}$.

If condition 2.~is satisfied, note that by Fubini's theorem for $n>\Delta^{-1}$ we derive
\begin{align*}
f^{r,\alpha,\beta}\bigg(\bar{\bm{z}}+\frac{1}{n}\bm{v}\bigg) &= \int_0^\frac{1}{n} \Bigg(-v^{r,\alpha,\beta} + \sum_{r'\in[R]}\E\Bigg[W^{+,r,\alpha}\Bigg(\sum_{\beta'\in[T]}v^{r',\beta,\beta'}W^{-,r',\beta'}\Bigg) \1\{A=\beta\}\\
&\hspace*{0.345cm}\times\P\Bigg(\sum_{s\in[R]}s\mathrm{Poi}\Bigg(\sum_{\gamma\in[T]}W^{-,s,\gamma}\Big(\bar{z}^{s,\beta,\gamma}+\delta v^{s,\beta,\gamma}\Big)\Bigg)\hspace*{-0.75ex}\in\hspace*{-0.5ex}\{C-r',\ldots,C-1\}\Bigg)\Bigg]\Bigg)\dd\delta\\
&\leq -n^{-1}(1-\kappa)v^{r,\alpha,\beta}
\end{align*}
and 
$\limsup_{n\to\infty}nf^{r,\alpha,\beta}\left(\bar{\bm{z}}+n^{-1}\bm{v}\right) \leq (1-\kappa)v^{r,\alpha,\beta} < 0$. 
Thus choose $\bm{z}_n:=\bar{\bm{z}}+n^{-1}\bm{v}$ again.
\end{proof}

\subsection{Proof of the Main Result for Finitary Weights}\label{ssec:proof:main:finitary}
In this section, we consider the special case that the vertex weights $w_i^{r,\alpha,\beta}$ and capitals $c_i$ can only take values in a finite set. To formalize this, consider the following definition:

\begin{definition}[Finitary regular vertex sequence]
A regular vertex sequence denoted by $(\bm{w}^{-,r,\alpha},\bm{w}^{+,r,\alpha},\bm{c},\bm{\alpha})$ is called finitary if there exist $J\in\N$ and $(\tilde{w}_j^{-,1,1},\ldots,\tilde{w}_j^{+,R,T})\in\R_{+,0}^{2RT}$, $j\in[J]$, as well as $c_\text{max}\in\N_0$ such that for all $n\in\N$ and $i\in[n]$, there exists $j=j(n,i)\in[J]$ such that $w_i^{\pm,r,\alpha}(n)=\tilde{w}_j^{\pm,r,\alpha}$ and $c_i(n)\in[c_\text{max}]\cup\{0,\infty\}$.
\end{definition}

That is, in a finitary system there is a partition of the set of all institutions into $TJ(c_\text{max}+2)$ 
sets. In particular, in this case all weights $w_i^{\pm,r,\alpha}$ are bounded from above by some constant $\overline{w}\in\R_+$ and hence by Dominated Convergence we can compute the partial derivatives of $f^{r,\alpha,\beta}$:
\begin{align*}
\frac{\partial f^{r,\alpha,\beta}}{\partial z^{r',\alpha',\beta'}}(\hat{\bm{z}}) &= -\delta_{r,r'}\delta_{\alpha,\alpha'}\delta_{\beta,\beta'} + \delta_{\beta,\alpha'}\E\Bigg[ W^{+,r,\alpha}W^{-,r',\beta'} \1\{A=\beta\}\\
&\hspace{3cm}\times\P\Bigg(\sum_{s\in[R]}s\mathrm{Poi}\Bigg(\sum_{\gamma\in[T]}W^{-,s,\gamma}z^{s,\beta,\gamma}\Bigg)\in\{C-r',\ldots,C-1\}\Bigg)\Bigg],
\end{align*}
where $\delta_{a,b}:=\1\{a=b\}$. Hence for any vector $\bm{v}\in\R^V$, the directional derivative of $f^{r,\alpha,\beta}$ in direction $\bm{v}$ is given by the following continuous expression:
\begin{align*}
D_{\bm{v}} f^{r,\alpha,\beta}(\hat{\bm{z}})&= 
- v^{r,\alpha,\beta} + \sum_{r'\in[R]} \E\Bigg[W^{+,r,\alpha}\Bigg(\sum_{\beta'\in[T]}v^{r',\beta,\beta'}W^{-,r',\beta'}\Bigg)\1\{A=\beta\}\\
&\hspace{3.6cm}\times\P\Bigg(\sum_{s\in[R]}s\mathrm{Poi}\Bigg(\sum_{\gamma\in[T]}W^{-,s,\gamma}z^{s,\beta,\gamma}\Bigg)\in\{C-r',\ldots,C-1\}\Bigg)\Bigg]
\end{align*}
We can then prove the following asymptotic results for the final default fraction in the network. 
\begin{proposition}\label{prop:finitary:weights}
Consider a financial system described by a finitary regular vertex sequence and let $\hat{\bm{z}}$ be the smallest joint root of the functions $\{f^{r,\alpha,\beta}\}_{(r,\alpha,\beta)\in V}$. Then it holds that $n^{-1}\vert\mathcal{D}_n\vert \geq g(\hat{\bm{z}}) + o_p(1)$. If additionally there exists 
$\bm{v}\in\R_+^V$ such that $D_{\bm{v}}f^{r,\alpha,\beta}(\hat{\bm{z}})<0$ for all $(r,\alpha,\beta)\in V$, then $n^{-1}\vert\mathcal{D}_n\vert = g(\hat{\bm{z}}) + o_p(1)$.
\end{proposition}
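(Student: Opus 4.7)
The plan is to adapt the sequential-exposure machinery used for the one-type unweighted case in Detering et al.\ to the present multi-type, multi-weight setting. Exploiting the finitary assumption, I would partition the institutions into the $TJ(c_{\max}+2)$ classes determined by the value of $(\tilde w_j,c_i,\alpha_i)$, so that within each class every bank has the same deterministic weights, capital and type. The central quantity to track is, for each $(r,\alpha,\beta)\in V$, the normalized cumulative out-weight of defaulted type-$\beta$ institutions,
\[ Z_n^{r,\alpha,\beta}(k):=n^{-1}\sum_{i\in\mathcal D_k,\ \alpha_i=\beta}w_i^{+,r,\alpha}, \]
where $\mathcal D_k$ is obtained by a step-by-step exploration in which one defaulted institution at a time is ``revealed'' and its outgoing edges exposed. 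The key input is a Poisson approximation: conditionally on $\bm Z_n(k)\approx\bm z$, the number of incoming weight-$s$ edges from the defaulted set to a still-solvent institution $j$ of type $\beta$ is asymptotically $\mathrm{Poi}\bigl(\sum_\gamma w_j^{-,s,\gamma} z^{s,\beta,\gamma}\bigr)$, with edges toward different solvent institutions essentially independent because the edge probabilities in \eqref{eqn:edge:prob} factorize.

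From this, the fraction of type-$\beta$ institutions defaulted by the time the process reaches the level $\bm z$ concentrates around $\E[\psi_C(\cdots)\1\{A=\beta\}]$, and summing over $\beta$ yields $g(\bm z)$. Differencing with the previous level gives that the increments of $\bm Z_n$ track the vector field whose zero set is exactly the locus $\{f^{r,\alpha,\beta}=0\}$: the process keeps moving as long as $f^{r,\alpha,\beta}(\bm Z_n(k))>0$ for some coordinate, with drift in the direction of the active coordinate. A martingale / Azuma-Hoeffding concentration bound (using the boundedness of the weights supplied by finitariness) would show that $\bm Z_n(\cdot)$ stays uniformly $o_p(1)$-close to this deterministic trajectory on any bounded time window.

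For the lower bound, I would argue that the discrete cascade cannot terminate before the trajectory first hits the zero set of all $f^{r,\alpha,\beta}$ inside $S_0$: by the construction of $\hat{\bm z}$ in the proof of Lemma \ref{lem:existence:hatz}, any $\bm z\in S_0$ with $\bm z\lneq \hat{\bm z}$ has $f^{r,\alpha,\beta}(\bm z)\ge 0$ and is not a joint root, so the cascade has strictly positive drift and, with high probability, advances past $\bm z$. Hence $\bm Z_n$ reaches a point at least componentwise $\ge\hat{\bm z}-o_p(1)$, and plugging into $g$ (which is continuous and monotone) delivers $n^{-1}|\mathcal D_n|\ge g(\hat{\bm z})+o_p(1)$.

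For the matching upper bound under the directional derivative assumption, the plan is to use the negativity of $D_{\bm v}f^{r,\alpha,\beta}(\hat{\bm z})$ to obtain a one-sided barrier at $\hat{\bm z}$. Concretely, for sufficiently small $\delta>0$ one has $f^{r,\alpha,\beta}(\hat{\bm z}+\delta\bm v)<0$ uniformly over all $(r,\alpha,\beta)\in V$; consequently, if $\bm Z_n$ were to exceed $\hat{\bm z}+\delta\bm v$, then, using the concentration from the previous step, the drift of every coordinate would already have become strictly negative, so the cascade must have halted earlier — contradiction. Letting $\delta\downarrow 0$ pins $\bm Z_n$ to $\hat{\bm z}+o_p(1)$, and continuity of $g$ yields the matching upper bound. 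The main technical obstacle is the second step: uniformly controlling the multi-dimensional fluctuations of $\bm Z_n$ simultaneously for all coordinates, because the process may approach the zero set tangentially along one coordinate while still being driven by another — this is precisely where the existence of a common direction $\bm v$ along which every directional derivative is negative becomes essential, ruling out pathological overshoots.
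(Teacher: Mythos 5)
Your plan for the lower bound matches the paper closely: both use a sequential (one-bank-per-step) exposure process, track cumulative out-weights of defaulted institutions, establish concentration around a deterministic trajectory (you via Azuma--Hoeffding, the paper by citing Wormald's theorem, which is essentially the same device packaged), and argue that the process cannot halt before the trajectory first hits the smallest joint root $\hat{\bm z}$ in $S_0$. Your shift from tracking the out-weight of \emph{unexposed} defaulted banks (the paper's $w^{r,\alpha,\beta}(t)$) to that of \emph{all} defaulted banks is a cosmetic reparametrization. So far, so good.

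The gap is in the upper bound. Your barrier argument is: if $\bm Z_n$ crossed $\hat{\bm z}+\delta\bm v$, the drift would by then have been negative, ``so the cascade must have halted earlier --- contradiction.'' This doesn't go through as stated, because the concentration you invoke only holds while the number of unexposed defaulted institutions $\sum_\alpha u^\alpha(t)$ is bounded away from zero --- i.e.\ strictly before the deterministic halting time $\hat\tau$. Precisely near the boundary, the process slows down, Wormald-type approximation breaks down, and ``negative expected drift'' does not by itself prevent the cascade from grinding forward by $o(n)$ or even $\Theta(n)$ additional defaults driven by fluctuations: the increment per step is $\pm O(1/n)$ and the drift per step is also $O(1/n)$, so the drift sign alone is not a contradiction. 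The paper has to do real work here: it stops the sequential process at time $\lfloor\tau_{\epsilon_n}n\rfloor$ where all $\mu^{r,\alpha,\beta}$ are still uniformly small but positive, then switches to round-by-round exposure and shows the \emph{remaining} number of defaults $R_n$ is $o_p(n)$. That step is a subcriticality (geometric decay) argument that decomposes subsequent defaults into ``direct'' and ``indirect'' ones, bounds the conditional expectation of the total out-weight $T_l^{r,\alpha,\beta}$ in round $l$ by $2\epsilon c^l v^{r,\alpha,\beta}$ with $c<1$, and this is exactly where the hypothesis $D_{\bm v}f^{r,\alpha,\beta}(\hat{\bm z})<0$ is consumed. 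You correctly intuit that the common direction $\bm v$ is ``essential, ruling out pathological overshoots,'' but you don't supply the mechanism by which it does so; without the endgame subcriticality argument, the proof of the upper half of the proposition is incomplete.
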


See Figure \ref{fig:one:joint:root} for an example where such $\bm{v}\in\R_+^V$ exists respectively Figure \ref{fig:two:joint:roots} for an example where it does not. Theorem \ref{thm:finitary:weights} below will analyze systems of the latter type as well.

\begin{proof}
We begin by proving the lower bound: As in \cite{Detering2015a} and \cite{Detering2016} we switch to a sequential default contagion process. The idea is to collect defaulted institutions and instead of exposing them all at once (as in \eqref{eqn:default:contagion}), only select one defaulted institution uniformly at random in each round $t\geq1$ and expose it to its neighbors (draw edges). Using the finitary assumption, it is then sufficient to keep track of the following sets and quantities during the default process:

\vspace*{-0.6cm}
\begin{align*}
U^\alpha(t) &:= \left\{i\in[n]\,:\,\alpha_i=\alpha\text{ and }i\text{ is defaulted but unexposed at time }t\right\},\\
S_{j,m,l}^\alpha(t) &:= \left\{ i\in[n]\,:\,\alpha_i=\alpha, w_i^{\pm,r,\beta}=\tilde{w}_j^{\pm,r,\beta},c_i=m\text{ and }i\text{ has total exposure of }l\text{ at time }t\right\},
\end{align*}

\vspace*{-1.0cm}
\begin{align*}
u^\alpha(t) &:= \left\vert U^\alpha(t)\right\vert, & c_{j,m,l}^\alpha(t) &:= \left\vert S_{j,m,l}^\alpha(t)\right\vert, & 
w^{r,\alpha,\beta}(t) &:= \sum_{i\in U^\beta(t)}w_i^{+,r,\alpha}.
\end{align*}

\vspace*{-0.35cm}
\noindent Let $h(t):=(u^\alpha(t),c_{j,m,l}^\alpha(t),
w^{r,\alpha,\beta}(t))$ and $H(t)=(h(s))_{s\leq t}$. Then for $n$ large enough such that all $p_{i,j}^r<R^{-1}$ (possible by finitary weights), the expected evolution of the system at 
time $t$ is 

\vspace*{-0.6cm}
\begin{align*}
&\E\left[\left.c_{j,m,l}^\alpha(t+1)-c_{j,m,l}^\alpha(t)\,\right\vert\, H(t)\right]\\
&\hspace{1.5cm}= \frac{1}{\sum_{\beta\in[T]}u^\beta(t)}\sum_{\beta\in[T]}\sum_{v\in U^\beta(t)} \sum_{r\in[R]}\left(\sum_{i\in S_{j,m,l-r}^\alpha(t)}\frac{w_v^{+,r,\alpha}w_i^{-,r,\beta}}{n} - \sum_{i\in S_{j,m,l}^\alpha(t)}\frac{w_v^{+,r,\alpha}w_i^{-,r,\beta}}{n}\right)\\
&\hspace{1.5cm}= \sum_{r\in[R]}\frac{\sum_{\beta\in[T]}w^{r,\alpha,\beta}(t) \tilde{w}_j^{-,r,\beta}}{\sum_{\beta\in[T]}u^\beta(t)}\frac{c_{j,m,l-r}^\alpha(t)-c_{j,m,l}^\alpha(t)}{n}.
\end{align*}

\vspace*{-0.7cm}
\begin{align*}
\E\left[\left.u^\alpha(t+1)-u^\alpha(t)\,\right\vert\, H(t)\right] &= -\frac{u^\alpha(t)}{\sum_{\beta\in[T]}u^\beta(t)} + \sum_{j,m} \sum_{r\in[R]}\frac{\sum_{\beta\in[T]}w^{r,\alpha,\beta}(t) \tilde{w}_j^{-,r,\beta}}{\sum_{\beta\in[T]}u^\beta(t)}\sum_{l=m-r}^{m-1}\frac{c_{j,m,l}^\alpha(t)}{n},
\end{align*}

\vspace*{-0.7cm}
\begin{align*}
&\E\left[\left.w^{r,\alpha,\beta}(t+1)-w^{r,\alpha,\beta}(t)\,\right\vert\, H(t)\right]\\
&\hspace{3.5cm}= -\frac{w^{r,\alpha,\beta}(t)}{\sum_{\gamma\in[T]}u^\gamma(t)} + \sum_{j,m} \tilde{w}_j^{+,r,\alpha} \sum_{s\in[R]}\frac{\sum_{\gamma\in[T]}w^{s,\beta,\gamma}(t)\tilde{w}_j^{-,s,\gamma}}{\sum_{\gamma\in[T]}u^\gamma(t)}\sum_{l=m-s}^{m-1}\frac{c_{j,m,l}^\beta(t)}{n}.
\end{align*}
The expressions on the right-hand side are all Lipschitz functions of $u^\alpha(t),w^{r,\alpha,\beta}(t),c_{j,m,l}^\alpha(t)$ as long as $\sum_{\beta\in[T]}u^\beta(t)$ is bounded away from zero. All the remaining conditions in Wormald’s theorem \cite{wormald1995} can be checked by similar means as in \cite{Detering2015a}. We can thus uniformly approximate 
\begin{align}
n^{-1}c_{j,m,l}^\alpha(t) &= \gamma_{j,m,l}^\alpha(n^{-1}t) + o_p(1),\label{eqn:approx:gamma}\\
n^{-1}u^\alpha(t) &= \nu^\alpha(n^{-1}t) + o_p(1),\label{eqn:approx:nu}\\
n^{-1}w^{r,\alpha,\beta}(t) &= \mu^{r,\alpha,\beta}(n^{-1}t) + o_p(1),\label{eqn:approx:mu}
\end{align}
where the functions $\gamma_{j,m,l}^\alpha(\tau)$, 
$\nu^\alpha(\tau)$ and $\mu^{r,\alpha,\beta}(\tau)$ are defined as the unique solution of
\begin{align}
\frac{\dd}{\dd \tau}\gamma_{j,m,l}^\alpha(\tau) &= \sum_{r\in[R]}\frac{\sum_{\beta\in[T]}\mu^{r,\alpha,\beta}(\tau)\tilde{w}_j^{-,r,\beta}}{\sum_{\beta\in[T]}\nu^\beta(\tau)}\left(\gamma_{j,m,l-r}^\alpha(\tau)-\gamma_{j,m,l}^\alpha(\tau)\right),\label{eqn:diff:gamma}\\
\frac{\dd}{\dd \tau} \nu^\alpha(\tau) &= -\frac{\nu^\alpha(\tau)}{\sum_{\beta\in[T]}\nu^\beta(\tau)}+\sum_{j,m}\sum_{r\in[R]}\frac{\sum_{\beta\in[T]}\mu^{r,\alpha,\beta}(\tau)\tilde{w}_j^{-,r,\beta}}{\sum_{\beta\in[T]}\nu^\beta(\tau)}\sum_{l=m-r}^{m-1}\gamma_{j,m,l}^\alpha(\tau),\label{eqn:diff:nu}\\
\frac{\dd}{\dd \tau}\mu^{r,\alpha,\beta}(\tau) &= -\frac{\mu^{r,\alpha,\beta}(\tau)}{\sum_{\gamma\in[T]}\nu^\gamma(\tau)}+\sum_{j,m}\tilde{w}_j^{+,r,\alpha}\sum_{s\in[R]}\frac{\sum_{\gamma\in[T]}\mu^{s,\alpha,\gamma}(\tau)\tilde{w}_j^{-,s,\gamma}}{\sum_{\gamma\in[T]}\nu^\gamma(\tau)}\sum_{l=m-r}^{m-1}\gamma_{j,m,l}^\alpha(\tau).\label{eqn:diff:mu}
\end{align}
Approximations \eqref{eqn:approx:gamma}-\eqref{eqn:approx:mu} hold uniformly for $t/n<\hat{\tau} := \inf\{\tau\in\R_{+,0}\,:\,\sum_{\beta\in[T]}\nu^\beta(\tau)=0\}$. For $z^{r,\alpha,\beta}(\tau) := \int_0^\tau\mu^{r,\alpha,\beta}(s)/\sum_{\gamma\in[T]}\nu^\gamma(s)\dd s$, an implicit solution of \eqref{eqn:diff:gamma}-\eqref{eqn:diff:mu} is given by
\begin{align*}
\gamma_{j,m,l}^\alpha(\tau) &= \P(W^{\pm,r,\alpha}=\tilde{w}_j^{\pm,r,\alpha}, C=m, A=\alpha) \P\Bigg(\sum_{s\in[R]}s\mathrm{Poi}\Bigg(\sum_{\beta\in[T]}\tilde{w}_j^{-,s,\beta}z^{s,\alpha,\beta}(\tau)\Bigg)=l\Bigg),\\
\nu^\alpha(\tau) &= \E\Bigg[\P\Bigg(\sum_{s\in[R]}s\mathrm{Poi}\Bigg(\sum_{\beta\in[T]}W^{-,s,\beta}z^{s,\alpha,\beta}(\tau)\Bigg)\geq C\Bigg)\1\{A=\alpha\}\Bigg] - \int_0^\tau\frac{\nu^\alpha(s)}{\sum_{\beta\in[T]}\nu^\beta(s)},
\end{align*}
\begin{align*}
\mu^{r,\alpha,\beta}(\tau) &= \E\Bigg[W^{+,r,\alpha}\P\Bigg(\sum_{s\in[R]}s\mathrm{Poi}\Bigg(\sum_{\gamma\in[T]}W^{-,s,\gamma}z^{s,\beta,\gamma}(\tau)\Bigg)\geq C\Bigg)\1\{A=\beta\}\Bigg] - z^{r,\alpha,\beta}(\tau).
\end{align*}


\noindent In particular, note that $\sum_{\alpha\in[T]}\nu^\alpha(\tau) = g(\bm{z}(\tau)) - \tau$ and $\mu^{r,\alpha,\beta}(\tau) = f^{r,\alpha,\beta}(\bm{z}(\tau))$. Thus for $\tau<\hat{\tau}$, it holds $f^{r,\alpha,\beta}(\bm{z}(\tau)) = n^{-1}w^{r,\alpha,\beta}(\lfloor\tau n\rfloor) + o_p(1) \geq 0 + o_p(1)$ 
and by letting $n\to\infty$, it follows $f^{r,\alpha,\beta}(\bm{z}(\tau))\geq0$. 
By continuity of $\bm{z}(\tau)$ and $\bm{z}(0)=0$, hence $\bm{z}(\tau)\in S_0$. 
Further,
\[ f^{r,\alpha,\beta}(\bm{z}(\tau)) = \mu^{r,\alpha,\beta}(\tau) = n^{-1}w^{r,\alpha,\beta}(\lfloor\tau n\rfloor) + o_p(1) \leq n^{-1} \overline{w} u^\beta(\lfloor\tau n\rfloor) + o_p(1) = \overline{w}\nu^\beta(\tau) + o_p(1) \]
and as $n\to\infty$, $f^{r,\alpha,\beta}(\bm{z}(\tau)) \leq \overline{w} \nu^\beta(\tau)$. 

As $\tau\to\hat{\tau}$, $\sum_{\beta\in[T]}\nu^\beta(\tau)\to0$ and hence by continuity of $\bm{z}(\tau)$, $f^{r,\alpha,\beta}(\bm{z}(\hat{\tau}))=0$ for all $(r,\alpha,\beta)\in V$. Again by continuity of $\bm{z}(\tau)$ and closedness of $S_0$, we then conclude that $\bm{z}(\hat{\tau})\geq\hat{\bm{z}}$. 
In particular,
\begin{equation}\label{eqn:tau:hat}
g(\hat{\bm{z}}) \leq g(\bm{z}(\hat{\tau})) = \lim_{\tau\to\hat{\tau}}g(\bm{z}(\tau)) =  \lim_{\tau\to\hat{\tau}} \sum_{\alpha\in[T]}\nu^\alpha(\tau)+\tau = \hat{\tau}.
\end{equation}
Hence, we are left with showing that $\hat{t}/n\geq\hat{\tau}+o_p(1)$ in order to prove the first part of the theorem, where $\hat{t}$ denotes the first time that $\sum_{\alpha\in[T]}u^\alpha(t)=0$. Define $X_n:=(\lfloor\hat{\tau}n\rfloor\wedge\hat{t})/n-\hat{\tau}$. Then $\hat{t}/n\geq\hat{\tau}+X_n$. Further, for $\epsilon>0$ and $n$ large enough such that $\hat{\tau}-\lfloor\hat{\tau} n\rfloor/n\leq\epsilon$, we obtain
\[ \P(\vert X_n\vert>\epsilon) = \P(\hat{\tau}-\hat{t}/n>\epsilon) \leq \P\Bigg(\sum_{\alpha\in[T]}\nu^\alpha(\hat{t}/n)>\frac{1}{2}\min_{\tau\in[0,\hat{\tau}-\epsilon]}\sum_{\alpha\in[T]}\nu^\alpha(\tau) , \hat{t}/n<\hat{\tau}\Bigg), \]
using continuity of $\sum_{\alpha\in[T]}\nu^\alpha(\tau)$. Let now $(Y_n)_{n\in\N}$ such that $\left\vert \sum_{\alpha\in[T]}u^\alpha(t)/n-\nu^\alpha(t/n)\right\vert\leq Y_n$ and $Y_n=o_p(1)$ (existence of $(Y_n)_{n\in\N}$ ensured by \eqref{eqn:approx:nu}). Since $\sum_{\alpha\in[T]}u^\alpha(\hat{t})=0$, we conclude that
\[ \P(\vert X_n\vert>\epsilon)\leq\P\Bigg(Y_n>\frac{1}{2}\min_{\tau\in[0,\hat{\tau}-\epsilon]}\sum_{\alpha\in[T]}\nu^\alpha(\tau)\Bigg)\to0,\quad\text{as }n\to\infty. \]

In order to prove the second part, we first want to show that the existence of $\bm{v}$ implies that in fact $\bm{z}(\hat{\tau})=\hat{\bm{z}}$. To this end, assume that $\bm{z}(\hat{\tau})\neq\hat{\bm{z}}$. Then there exists $(r,\alpha,\beta)\in V$ and $\delta>0$ such that $z^{r,\alpha,\beta}(\hat{\tau})>\hat{z}^{r,\alpha,\beta}+\delta v^{r,\alpha,\beta}$. Without loss of generality assume that $z^{r,\alpha,\beta}(\tau)$ is the first coordinate that reaches $\hat{z}^{r,\alpha,\beta}+\delta v^{r,\alpha,\beta}$, that is there exists $\tau_\delta\in[0,\hat{\tau}]$ such that $\bm{z}(\tau_\delta)\leq \hat{\bm{z}}+\delta\bm{v}$ componentwise and $z^{r,\alpha,\beta}(\tau_\delta)=\hat{z}^{r,\alpha,\beta}+\delta v^{r,\alpha,\beta}$. But by $D_{\bm{v}}f^{r,\alpha,\beta}(\hat{\bm{z}})<0$ and continuity of $D_{\bm{v}}f^{r,\alpha,\beta}(\bm{z})$, we then derive for $\delta>0$ small enough that $0 > f^{r,\alpha,\beta}(\hat{\bm{z}}+\delta\bm{v}) \geq f^{r,\alpha,\beta}(\bm{z}(\tau_\delta))$, where we used monotonicity of $f^{r,\alpha,\beta}$ from Lemma \ref{lem:properties:f}. This contradicts that $f^{r,\alpha,\beta}(\bm{z}(\tau))\geq0$ for all $\tau\in[0,\hat{\tau}]$ and hence it must hold that $\bm{z}(\hat{\tau})=\hat{\bm{z}}$. In particular, 
$g(\hat{\bm{z}})=\hat{\tau}$ (cf.~\eqref{eqn:tau:hat}).

The difficulty in the following is that the system is only described by the functions $\gamma_{j,m,l}^\alpha(\tau)$, $\nu^\alpha(\tau)$ and $\mu^{r,\alpha,\beta}(\tau)$ as long as $\tau<\hat{\tau}$, the first time at which $\sum_{\alpha\in[T]}\nu^\alpha(\tau)=0$. Wormald's theorem makes no statement about the system at or after $\hat{\tau}$, however. The idea is hence the following: We let $\tau_\epsilon$ be the first time at which $\mu^{r,\alpha,\beta}(\tau)\leq v^{r,\alpha,\beta}\epsilon$ for all $(r,\alpha,\beta)\in V$ and choose a sequence $(\epsilon_n)_{n\in\N}\subset\R_+$ such that $\epsilon_n\to0$ as $n\to\infty$. We then consider the cascade process as before for the first $\lfloor\tau_{\epsilon_n}n\rfloor$ steps and we show that the number of remaining defaults $R_n$ divided by $n$ converges to $0$ in probability as $n\to\infty$. In particular, this will show that

\vspace*{-0.6cm}
\[ n^{-1}\vert\mathcal{D}_n\vert = n^{-1}\hat{t} = n^{-1}(\lfloor\tau_{\epsilon_n}n\rfloor+R_n) \leq \tau_{\epsilon_n} + n^{-1}R_n \leq \hat{\tau} + o_p(1) = g(\hat{\bm{z}}) + o_p(1). \]

\vspace*{-0.1cm}
In order to show $n^{-1}R_n=o_p(1)$, we will expose the defaulted banks round by round as in \eqref{eqn:default:contagion}, i.\,e.~we expose the banks in $\bigcup_{\alpha\in[T]}U^\alpha(\lfloor\tau_{\epsilon_n}n\rfloor)$ at once and so on. However, banks with 
$w_i^{+,r,\alpha}=0$ for all $r\in[R]$ and $\alpha\in[T]$ will never infect any new banks. Thus, we only need to consider banks with $\sum_{r\in[R]}\sum_{\alpha\in[T]}w_i^{+,r,\alpha}>0$ in the following. Since we are in a finitary setting, this means that there exists $w_0>0$ such that $\sum_{r\in[R]}\sum_{\alpha\in[T]}w_i^{+,r,\alpha}\geq w_0$ for all banks. Taking into account also banks with total out-weight of \mbox{zero only causes an extra bounded factor for $R_n$}.

For each solvent bank at step $\lfloor\tau_{\epsilon_n}n\rfloor$ there are two possible ways to default: Either there is one exposure to a defaulted bank that is larger than the remaining capital at step $\lfloor\tau_{\epsilon_n}n\rfloor$ (the bank defaults \emph{directly}) or there are at least two exposures to defaulted banks that add up to an amount larger than the remaining capital (the bank defaults \emph{indirectly}). Therefore, for $\alpha\in[T]$ and $l\geq1$ we define the following sets:

\vspace*{-0.5cm}
\begin{align*}
\mathcal{D}_l^\alpha &= \mathcal{D}_l^\alpha(\tau_{\epsilon_n}) = \{i\in[n]\,:\,\alpha_i=\alpha\text{ and }i\text{ defaults directly in the }l\text{-th round after step }\lfloor\tau_{\epsilon_n}n\rfloor\}\\
\mathcal{I}_l^\alpha &= \mathcal{I}_l^\alpha(\tau_{\epsilon_n}) = \{i\in[n]\,:\,\alpha_i=\alpha\text{ and }i\text{ defaults indirectly in the }l\text{-th round after step }\lfloor\tau_{\epsilon_n}n\rfloor\}
\end{align*}
Further, let $\mathcal{T}^\alpha_l = \mathcal{D}^\alpha\cup\mathcal{I}^\alpha$. In particular, $R_n=\sum_{\alpha\in[T]}\sum_{l\geq1}\left\vert\mathcal{T}_l^\alpha(\tau_{\epsilon_n})\right\vert$. Further, the following quantities will play an important role:
\[ D_l^{r,\alpha,\beta} = \sum_{i\in\mathcal{D}_l^\beta}w_i^{+,r,\alpha}, I_l^{r,\alpha,\beta} = \sum_{i\in\mathcal{I}_l^\beta}w_i^{+,r,\alpha}\quad\text{and}\quad T_l^{r,\alpha,\beta}=\sum_{i\in\mathcal{T}_l^\beta}w_i^{+,r,\alpha},\quad l\geq 1, (r,\alpha,\beta)\in V \]

We now exploit again the assumption that $D_{\bm{v}}f^{r,\alpha,\beta}(\hat{\bm{z}})<0$ for all $(r,\alpha,\beta)\in V$. Also recall that the expression $D_{\bm{v}}f^{r,\alpha,\beta}(\bm{z})$ is continuous in $\bm{z}$ since the weights are assumed finitary. Further $\bm{z}(\tau)$ is continuous in $\tau$. Hence for $\epsilon>0$ small enough (i.\,e.~$\bm{z}(\tau_\epsilon)$ close to $\hat{\bm{z}}$), it holds
\begin{align*}
0 &> D_{\bm{v}}f^{r,\alpha,\beta}(\bm{z}(\tau_\epsilon))\\
&= \sum_{r'\in[R]} \E\Bigg[W^{+,r,\alpha}\Bigg(\sum_{\beta'\in[T]}v^{r',\beta,\beta'}W^{-,r',\beta'}\Bigg)\\
&\hspace{1.76cm}\times\P\Bigg(\sum_{s\in[R]}s\mathrm{Poi}\Bigg(\sum_{\gamma\in[T]}W^{-,s,\gamma}z^{s,\beta,\gamma}(\tau_\epsilon)\Bigg)\in\{C-r',\ldots,C-1\}\Bigg)\1\{A=\beta\}\Bigg] - v^{r,\alpha,\beta}\\
&= \sum_{j,m}\tilde{w}_j^{+,r,\alpha}\sum_{r'\in[R]}\Bigg(\sum_{\beta'\in[T]}v^{r',\beta,\beta'}\tilde{w}_j^{-,r',\beta'}\Bigg)\sum_{s=1}^{r'}\gamma_{j,m,m-s}^\beta(\tau_\epsilon) - v^{r,\alpha,\beta}.
\end{align*}
We can hence find $c_1<1$ such that for all $(r,\alpha,\beta)\in V$ it holds
\[ \sum_{j,m}\tilde{w}_j^{+,r,\alpha}\sum_{r'\in[R]}\Bigg(\sum_{\beta'\in[T]}v^{r',\beta,\beta'}\tilde{w}_j^{-,r',\beta'}\Bigg)\sum_{s=1}^{r'}\gamma_{j,m,m-s}^\beta(\tau_\epsilon) \leq c_1v^{r,\alpha,\beta}. \]
By \eqref{eqn:approx:gamma} and possibly slightly increasing $c_1$, we then derive that
\[ \sum_{j,m}\tilde{w}_j^{+,r,\alpha}\sum_{r'\in[R]}\Bigg(\sum_{\beta'\in[T]}v^{r',\beta,\beta'}\tilde{w}_j^{-,r',\beta'}\Bigg)\sum_{s=1}^{r'}\frac{c_{j,m,m-s}^\beta(\lfloor \tau_\epsilon n\rfloor)}{n} \leq c_1v^{r,\alpha,\beta} \]
on a $\sigma(h(\lfloor\tau_\epsilon n\rfloor))$-measurable set $\Omega_n^\epsilon$ such that $\lim_{n\to\infty}\P(\Omega_n^\epsilon)=1$ for every $\epsilon>0$. Further, by the definition of $\tau_\epsilon$ and \eqref{eqn:approx:mu}, we can choose $\Omega_n^\epsilon$ in such a way that $n^{-1}w^{r,\alpha,\beta}(\lfloor\tau_\epsilon n\rfloor)\leq2\epsilon v^{r,\alpha,\beta}$ holds on $\Omega_n^\epsilon$ for all $(r,\alpha,\beta)\in V$. We can then compute on $\Omega_n^\epsilon$

\vspace*{-0.255cm}
\begin{align*}
n^{-1}\E\Big[D_1^{r,\alpha,\beta}\,\Big\vert\,h(\lfloor\tau_\epsilon n\rfloor)\Big] 
&\leq \sum_{j,m} \tilde{w}_j^{+,r,\alpha}  \sum_{l=0}^{m-1}\frac{c_{j,m,l}^\beta(\lfloor\tau_\epsilon n\rfloor)}{n} \sum_{\beta'\in[T]}\sum_{v\in U^{\beta'}(\lfloor\tau_\epsilon n\rfloor)}\sum_{r'=m-l}^R\frac{w_v^{r',+,\beta}\tilde{w}_j^{-,r',\beta'}}{n}\\
&\leq 2\epsilon\sum_{j,m} \tilde{w}_j^{+,r,\alpha} \sum_{l=0}^{m-1}\frac{c_{j,m,l}^\beta(\lfloor\tau_\epsilon n\rfloor)}{n} \sum_{\beta'\in[T]}\sum_{r'=m-l}^R v^{r',\beta,\beta'}\tilde{w}_j^{-,r',\beta'}
\leq 2\epsilon c_1v^{r,\alpha,\beta},\\
n^{-1}\E\Big[I_1^{r,\alpha,\beta}\,\Big\vert\,h(\lfloor\tau_\epsilon n\rfloor)\Big] 
&\leq n^{-1}\sum_{j,m}\sum_{l=0}^{m-1}\sum_{i\in S_{j,m,l}^\beta(\lfloor\tau_\epsilon n\rfloor)} \overline{w} \Bigg( \sum_{\beta'\in[T]}\sum_{v\in U^{\beta'}(\lfloor\tau_\epsilon n\rfloor)}\sum_{r'=m-l}^R\frac{w_v^{r',+,\beta}w_i^{-,r',\beta'}}{n} \Bigg)^2\\
&\leq \overline{w}\Bigg( \sum_{\beta'\in[T]}\sum_{r'=0}^R\frac{w^{r',\beta,\beta'}(\lfloor\tau_\epsilon n\rfloor)}{n}\overline{w} \Bigg)^2
\leq 
C \epsilon^2,
\end{align*}
where $C:=4T^2(R+1)^2\overline{w}^3\Vert\bm{v}\Vert_\infty^2$. In particular, for $\epsilon>0$ small enough we find $c_2\in(0,1-c_1)$ such that $C\epsilon^2\leq 2\epsilon c_2 v^{r,\alpha,\beta}$ 
for all $(r,\alpha,\beta)\in V$ and hence on $\Omega_n^\epsilon$ it holds that
\[ n^{-1}\E\Big[T_1^{r,\alpha,\beta}\,\Big\vert\,h(\lfloor\tau_\epsilon n\rfloor)\Big] = n^{-1}\E\Big[D_1^{r,\alpha,\beta}\,\Big\vert\,h(\lfloor\tau_\epsilon n\rfloor)\Big] + n^{-1}\E\Big[I_1^{r,\alpha,\beta}\,\Big\vert\,h(\lfloor\tau_\epsilon n\rfloor)\Big] \leq 2\epsilon (c_1+c_2)v^{r,\alpha,\beta}. \]
Let then $c:=c_1+c_2\in(0,1)$. We continue inductively: Assume that on $\Omega_n^\epsilon$ it holds for $l\geq1$ that $n^{-1}\E\left[\left.T_l^{r,\alpha,\beta}\,\right\vert\,h(\lfloor\tau_\epsilon n\rfloor)\right] \leq 2\epsilon c^l v^{r,\alpha,\beta}$. We then derive on $\Omega_n^\epsilon$ that
\begin{align*}
n^{-1}\E\left[\left.D_{l+1}^{r,\alpha,\beta}\,\right\vert\,h(\lfloor\tau_\epsilon n\rfloor)\right] 
&= n^{-1}\sum_{j,m}\sum_{l=0}^{m-1}\sum_{i\in S_{j,m,l}^\beta(\lfloor\tau_\epsilon n\rfloor)} w_i^{+,r,\alpha}\P\left(\left.i\in\mathcal{D}_{l+1}^\beta\,\right\vert\,h(\lfloor\tau_\epsilon n\rfloor)\right)\\
&\hspace*{-1.65cm}\leq \sum_{j,m} \tilde{w}_j^{+,r,\alpha}  \sum_{l=0}^{m-1}\frac{c_{j,m,l}^\beta(\lfloor\tau_\epsilon n\rfloor)}{n} \E\left[\left.\sum_{\beta'\in[T]}\sum_{v\in \mathcal{T}_l^{\beta'}}\sum_{r'=m-l}^R\frac{w_v^{r',+,\beta}\tilde{w}_j^{-,r',\beta'}}{n}\,\right\vert\,h(\lfloor\tau_\epsilon n\rfloor)\right]\\
&\hspace*{-1.65cm}= \sum_{j,m}\tilde{w}_j^{+,r,\alpha}\sum_{l=0}^{m-1}\frac{c_{j,m,l}^\beta(\lfloor\tau_\epsilon n\rfloor)}{n}\sum_{\beta'\in[T]}\sum_{r'=m-l}^R\tilde{w}_j^{-,r',\beta'}n^{-1}\E\left[\left.T_l^{r',\beta,\beta'}\,\right\vert\,h(\lfloor\tau_\epsilon n\rfloor)\right]\\
&\hspace*{-1.65cm}\leq 
2\epsilon c^l \sum_{j,m}\tilde{w}_j^{+,r,\alpha}\sum_{r'\in[R]}\Bigg(\sum_{\beta'\in[T]}v^{r',\beta,\beta'}\tilde{w}_j^{-,r',\beta'}\Bigg)\sum_{s=1}^{r'}\frac{c_{j,m,m-s}^\beta(\lfloor\tau_\epsilon n\rfloor)}{n}
\leq 2\epsilon c^l c_1v^{r,\alpha,\beta},
\end{align*}
\begin{align*}
n^{-1}\E\left[\left.I_{l+1}^{r,\alpha,\beta}\,\right\vert\,h(\lfloor\tau_\epsilon n\rfloor)\right] 
&= n^{-1}\sum_{j,m}\sum_{l=0}^{m-1}\sum_{i\in S_{j,m,l}^\beta(\lfloor\tau_\epsilon n\rfloor)} w_i^{+,r,\alpha}\P\left(\left.i\in\mathcal{I}_{l+1}^\beta\,\right\vert\,h(\lfloor\tau_\epsilon n\rfloor)\right)\\
&\hspace{-1.3cm}\leq 
\sum_{j,m} \tilde{w}_j^{+,r,\alpha}  \sum_{l=0}^{m-1}\frac{c_{j,m,l}^\beta(\lfloor\tau_\epsilon n\rfloor)}{n} \left(\sum_{\beta'\in[T]}\sum_{r'=m-l}^R \tilde{w}_j^{-,r',\beta'}n^{-1}\E\left[\left.T_l^{r',\beta,\beta'}\,\right\vert\,h(\lfloor\tau_\epsilon n\rfloor)\right]\right)\\
&\hspace{2.0cm}\times \left(\sum_{\beta'\in[T]}\sum_{r'=m-l}^R\tilde{w}_j^{-,r',\beta'}n^{-1}\sum_{k\leq l}\E\left[\left.T_k^{r',\beta,\beta'}\,\right\vert\,h(\lfloor\tau_\epsilon n\rfloor)\right]\right)\\
&\hspace{-1.3cm}\leq \overline{w}\left(2\epsilon T(R+1)\overline{w} c^l v^{r',\beta,\beta'}\right)\Bigg(2\epsilon T(R+1)\overline{w}\sum_{k\leq l} c^k v^{r',\beta,\beta'}\Bigg) \leq C c^l \frac{1}{1-c} \epsilon^2
\end{align*}
Now choose $\epsilon>0$ small enough such that even $\frac{C}{1-c}\epsilon^2 \leq 2\epsilon c_2v^{r,\alpha,\beta}$ and conclude that on $\Omega_n^\epsilon$
\begin{align*}
&n^{-1}\E\left[\left.T_{l+1}^{r,\alpha,\beta}\,\right\vert\,h(\lfloor\tau_\epsilon n\rfloor)\right] = n^{-1}\E\left[\left.D_{l+1}^{r,\alpha,\beta}\,\right\vert\,h(\lfloor\tau_\epsilon n\rfloor)\right] + n^{-1}\E\left[\left.I_{l+1}^{r,\alpha,\beta}\,\right\vert\,h(\lfloor\tau_\epsilon n\rfloor)\right] \leq 2\epsilon c^{l+1} v^{r,\alpha,\beta},
\end{align*}
\begin{align*}
n^{-1}\sum_{\alpha\in[T]}\sum_{l\geq1}\E\left[\left.\vert\mathcal{T}_l^\alpha\vert\,\right\vert\,h(\lfloor\tau_\epsilon n\rfloor)\right] &\leq n^{-1}\sum_{\alpha\in[T]}\sum_{l\geq1}\E\Bigg[\sum_{i\in\mathcal{T}_l^\alpha} \frac{\sum_{r\in[R]}\sum_{\gamma\in[T]}w_i^{+,r,\gamma}}{w_0}\,\Bigg\vert\,h(\lfloor\tau_\epsilon n\rfloor)\Bigg]\\
&=w_0^{-1}n^{-1}\sum_{r\in[R]}\sum_{\alpha,\gamma\in[T]}\sum_{l\geq1}\E\left[\left.T_l^{r,\gamma,\alpha}\,\right\vert\,h(\lfloor\tau_\epsilon n\rfloor)\right]\\
&\leq 2\epsilon w_0^{-1}(R+1)T^2\frac{1}{1-c}\Vert\bm{v}\Vert_\infty.
\end{align*}
Consider now again the sequence $(\epsilon_n)_{n\in\N}$ from before and let $\epsilon,\delta>0$ arbitrary. For $n$ large enough such that $\epsilon_n\leq\epsilon$, we derive (using Markov's inequality in the penultimate step)
\begin{align*}
\P\left(n^{-1}R_n\geq\delta\right) &
\leq \P\Bigg(n^{-1} \sum_{\alpha\in[T]}\sum_{l\geq1}\vert\mathcal{T}_l^\alpha(\tau_\epsilon)\vert\geq\delta\Bigg)
\leq \E\Bigg[\delta^{-1}n^{-1}\sum_{\alpha\in[T]}\sum_{l\geq1}\E\left[\vert\mathcal{T}_l^\alpha(\tau_\epsilon)\vert\,\big\vert\,h(\lfloor\tau_\epsilon n\rfloor)\right]\Bigg]\\
&\leq 2\epsilon \delta^{-1}w_0^{-1}(R+1)T^2\frac{1}{1-c}\Vert\bm{v}\Vert_\infty + (1-\P(\Omega_n^\epsilon)).
\end{align*}
Choosing $\epsilon$ small enough and $n$ large enough, this quantity becomes arbitrarily small. Hence, $n^{-1}R_n=o_p(1)$ and this finishes the proof as explained above. 
\end{proof}

\begin{theorem}\label{thm:finitary:weights}
Consider a financial system described by a finitary regular vertex sequence and let $\hat{\bm{z}}$ and $\bm{z}^*$ be the smallest resp.~largest joint root in $S_0$ of the functions $\{f^{r,\alpha,\beta}\}_{(r,\alpha,\beta)\in V}$. Then $g(\hat{\bm{z}}) + o_p(1) \leq n^{-1}\vert\mathcal{D}_n\vert \leq g(\bm{z}^*) + o_p(1)$. In particular, if $\hat{\bm{z}}=\bm{z}^*$, then $n^{-1}\vert\mathcal{D}_n\vert = g(\hat{\bm{z}}) + o_p(1)$.
\end{theorem}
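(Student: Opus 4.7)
The lower bound $g(\hat{\bm{z}}) + o_p(1) \leq n^{-1}|\mathcal{D}_n|$ has already been established in Proposition \ref{prop:finitary:weights}, so the remaining content is the matching upper bound $n^{-1}|\mathcal{D}_n| \leq g(\bm{z}^*) + o_p(1)$. The plan is to invoke Proposition \ref{prop:finitary:weights} not for the original system but for an auxiliary, slightly more distressed system whose smallest joint root is pushed up to $\bm{z}^*$, and then to pass to the limit.

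\textbf{Perturbation and coupling.} For every $\delta\in(0,1)$ let $\{m_{\delta,i}\}_{i\in[n]}$ be i.i.d.\ Bernoulli variables, independent of everything else, with $\P(m_{\delta,i}=0)=\delta$, and consider the perturbed system obtained by replacing each $c_i$ by $c_i m_{\delta,i}$. The extended vertex sequence is again finitary regular, so Proposition \ref{prop:finitary:weights} may be applied to it. Let $f_\delta^{r,\alpha,\beta}$, $g_\delta$, $\hat{\bm{z}}_\delta$ and $\mathcal{D}_n^\delta$ denote the corresponding objects. Since $\psi_0\equiv 1$, conditioning on $m_{\delta,i}$ yields
\[ f_\delta^{r,\alpha,\beta}(\bm{z}) = (1-\delta)f^{r,\alpha,\beta}(\bm{z}) + \delta\bigl(\zeta^{r,\alpha,\beta} - z^{r,\alpha,\beta}\bigr), \qquad g_\delta(\bm{z}) = (1-\delta)g(\bm{z}) + \delta, \]
and running the two cascade processes on the same realization of the uniform variables $U_{i,j}$ from Section \ref{sec:default:fin} gives $\mathcal{D}_n\subseteq\mathcal{D}_n^\delta$, hence $n^{-1}|\mathcal{D}_n|\leq n^{-1}|\mathcal{D}_n^\delta|$.

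\textbf{Perturbed smallest joint root and Proposition \ref{prop:finitary:weights}.} The inequality $(z^*)^{r,\alpha,\beta}\leq\zeta^{r,\alpha,\beta}$, which follows from $f^{r,\alpha,\beta}(\bm{z}^*)=0$ and the trivial bound $\E[W^{+,r,\alpha}\psi_C(\cdot)\1\{A=\beta\}]\leq\zeta^{r,\alpha,\beta}$, yields $f_\delta^{r,\alpha,\beta}(\bm{z}^*)\geq 0$ for every coordinate. Conversely, the definition of $\bm{z}^*$ as the componentwise supremum of $S_0$ implies that every neighborhood of $\bm{z}^*$ contains a point $\bm{z}$ with $f^{r_0,\alpha_0,\beta_0}(\bm{z})<0$ for some coordinate, and hence with $f_\delta^{r_0,\alpha_0,\beta_0}(\bm{z})<0$ for all $\delta$ sufficiently small. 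Combining these observations with $\bm{0}\in\{f_\delta\geq 0\}$ and a connectedness argument shows that $\hat{\bm{z}}_\delta\to\bm{z}^*$ (from above) as $\delta\to 0^+$. To apply Proposition \ref{prop:finitary:weights} to the perturbed system it remains to find $\bm{v}\in\R_+^V$ with $D_{\bm{v}}f_\delta^{r,\alpha,\beta}(\hat{\bm{z}}_\delta)<0$ for all $(r,\alpha,\beta)$. Using the identity $D_{\bm{v}}f_\delta^{r,\alpha,\beta}(\bm{z})=(1-\delta)D_{\bm{v}}f^{r,\alpha,\beta}(\bm{z})-\delta v^{r,\alpha,\beta}$, the continuity of the derivatives provided by the finitary assumption, and the convergence $\hat{\bm{z}}_\delta\to\bm{z}^*$, this reduces to exhibiting $\bm{v}\in\R_+^V$ with $D_{\bm{v}}f^{r,\alpha,\beta}(\bm{z}^*)\leq 0$ for every coordinate; existence of such a $\bm{v}$ follows from the maximality of $\bm{z}^*$ in $S_0$ by a Farkas-type separation argument in $\R^V$. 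Proposition \ref{prop:finitary:weights} then gives $n^{-1}|\mathcal{D}_n^\delta| = g_\delta(\hat{\bm{z}}_\delta) + o_p(1)$ as $n\to\infty$, where the $o_p(1)$ term depends on $\delta$.

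\textbf{Diagonalization and main obstacle.} Combining the preceding paragraphs, for each fixed $\delta>0$ one has $n^{-1}|\mathcal{D}_n|\leq g_\delta(\hat{\bm{z}}_\delta)+o_p(1)$ as $n\to\infty$. Continuity of $g$ from Lemma \ref{lem:properties:f} together with $\hat{\bm{z}}_\delta\to\bm{z}^*$ yields $g_\delta(\hat{\bm{z}}_\delta)\to g(\bm{z}^*)$ as $\delta\to 0^+$, and a standard diagonal argument picking $\delta=\delta_n\to 0$ slowly enough then produces the desired bound $n^{-1}|\mathcal{D}_n|\leq g(\bm{z}^*)+o_p(1)$. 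The main obstacle is the Farkas step: in degenerate situations such as those in Figure \ref{fig:two:joint:roots}, where the zero sets of several $f^{r,\alpha,\beta}$ meet tangentially at $\bm{z}^*$, the required direction $\bm{v}\in\R_+^V$ with $D_{\bm{v}}f^{r,\alpha,\beta}(\bm{z}^*)\leq 0$ for every coordinate may fail to exist as a classical gradient condition, and one then has to fall back on the weaker integral condition of Lemma \ref{lem:sufficient:criteria:z:star}(b) or combine the $M_\delta$-perturbation with an additional infinitesimal perturbation of the vertex weights that breaks the tangency.
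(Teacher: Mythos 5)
The skeleton of your argument — perturb by an independent Bernoulli shock with parameter $\delta$, apply Proposition~\ref{prop:finitary:weights} to the perturbed system, then let $\delta\to 0$ — coincides with the paper's strategy, and your computations of $f_\delta$, $g_\delta$, the coupling $\mathcal{D}_n\subseteq\mathcal{D}_n^\delta$, and the claim $\hat{\bm{z}}_\delta\to\bm{z}^*$ are all correct. The issue is exactly the one you flag at the end, and it is fatal to the argument as written rather than a mere degenerate edge case.

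First, the Farkas step does not deliver what you need. Maximality of $\bm{z}^*$ in $S_0$ does not yield a strictly positive $\bm{v}\in\R_+^V$ with $D_{\bm{v}}f^{r,\alpha,\beta}(\bm{z}^*)\le 0$ for every coordinate: precisely when several zero sets meet tangentially at $\bm{z}^*$ (the analogue of Figure~\ref{fig:two:joint:roots} occurring at the maximal rather than the minimal root), the relevant cone of admissible directions may have empty interior, and Farkas/Perron--Frobenius gives at best a nonnegative $\bm{v}$ with some zero components — which is not in $\R_+^V$ as the paper uses that symbol. Second, and independently, even if such a fixed $\bm{v}$ did exist, the non-strict inequality $D_{\bm{v}}f^{r,\alpha,\beta}(\bm{z}^*)\le 0$ would not imply $D_{\bm{v}}f_\delta^{r,\alpha,\beta}(\hat{\bm{z}}_\delta)<0$ for small $\delta$: the gap $\hat{\bm{z}}_\delta-\bm{z}^*$ need not be $O(\delta)$, so the term $(1-\delta)D_{\bm{v}}f^{r,\alpha,\beta}(\hat{\bm{z}}_\delta)$, which can be positive, may dominate $-\delta v^{r,\alpha,\beta}$. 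In short, one needs a direction $\bm{v}$ that adapts to the scale at which $\hat{\bm{z}}_\delta$ separates from $\bm{z}^*$, and a single fixed $\bm{v}$ cannot do this.

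The paper resolves exactly this by choosing a $\delta$-dependent direction. Lemma~\ref{lem:z^*:epsilon} establishes that $\epsilon\mapsto\bm{z}^*(\epsilon)$ is componentwise monotone, right-continuous, and hence differentiable almost everywhere with $\int_0^\epsilon(\bm{z}^*)'(\xi)\,\dd\xi\le\bm{z}^*(\epsilon)-\bm{z}^*<\infty$. Integrability lets one extract a sequence $\epsilon_n\to 0$ along which $(\bm{z}^*)'(\epsilon_n)$ is controlled by $\epsilon_n^{-1}(\pmb\zeta-\bm{z}^*-\delta\mathbf{1})$. Taking $\bm{v}(\epsilon_n):=(\bm{z}^*)'(\epsilon_n)$ and differentiating the identity $f_{\epsilon}^{r,\alpha,\beta}(\bm{z}^*(\epsilon))=0$ in $\epsilon$ then produces $D_{\bm{v}(\epsilon_n)}f_{\epsilon_n}^{r,\alpha,\beta}(\bm{z}^*(\epsilon_n))<0$ directly, with strict positivity of $\bm{v}(\epsilon_n)$ also falling out of the implicit-function relation. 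A final micro-perturbation $\delta_n$ handles the case $\hat{\bm{z}}(\epsilon_n)\ne\bm{z}^*(\epsilon_n)$ by using right-continuity of $\bm{z}^*(\cdot)$. This is not a fallback to Lemma~\ref{lem:sufficient:criteria:z:star}(b) — it is a genuinely different construction of the direction vector, and it is the piece your proposal is missing. Everything else in your outline (the coupling, the convergence $\hat{\bm{z}}_\delta\to\bm{z}^*$ via $\bm{z}^*\le\hat{\bm{z}}_\delta\le\bm{z}^*(\delta)$, the continuity of $g$, and the final diagonalization) is sound and matches the paper.
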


The idea for the proof of Theorem \ref{thm:finitary:weights} is to apply a small further shock to the financial system such that the second statement in Proposition \ref{prop:finitary:weights} becomes applicable. That is, we let each solvent bank in the system default independently with probability $\epsilon>0$ and denote the analogues of $f^{r,\alpha,\beta}$, $g$, $\hat{\bm{z}}$ and $\bm{z}^*$ by $f_\epsilon^{r,\alpha,\beta}$, $g_\epsilon$, $\hat{\bm{z}}(\epsilon)$ respectively $\bm{z}^*(\epsilon)$. That is,
\begin{align*}
f_\epsilon^{r,\alpha,\beta}(\bm{z}) &= \epsilon\left(\E\left[W^{+,r,\alpha}\1\{A=\beta\}\right] - z^{r,\alpha,\beta}\right) + (1-\epsilon)f^{r,\alpha,\beta}(\bm{z}),\hspace{1cm} & g_\epsilon(\bm{z}) &= \epsilon + (1-\epsilon)g(\bm{z}).
\end{align*}
We can assume in the following that $\E\left[W^{+,r,\alpha}\1\{A=\beta\}\right]>0$ and hence $f_\epsilon^{r,\alpha,\beta}(\bm{z})>f^{r,\alpha,\beta}(\bm{z})$ for all $\bm{z}$, since otherwise $f^{r,\alpha,\beta}(\bm{z})=-z^{r,\alpha,\beta}$ and we can simply leave out the $(r,\alpha,\beta)$-component in the proof. The following lemma describes $\bm{z}^*(\epsilon)$ for small $\epsilon$:
\begin{lemma}\label{lem:z^*:epsilon}
The function $\bm{z}^*:\R_{+,0}\to\R_{+,0}^V$ is right-continuous and monotonically increasing in each component. In particular, the derivative $(\bm{z}^*)'(\epsilon)$ exists for Lebesgue-almost every $\epsilon>0$ and $\bm{z}^*(\epsilon)-\bm{z}^*\geq\int_0^\epsilon(\bm{z}^*)'(\xi)\dd\xi$ componentwise.
\end{lemma}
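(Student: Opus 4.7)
The plan is to establish the two structural properties of $\bm{z}^*(\epsilon)$ separately --- monotonicity first, then right-continuity --- and then invoke the classical theorem on almost-everywhere differentiability of monotone functions on $\R_{+,0}$ to obtain the integral bound.

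For monotonicity, I would start from the observation that $f_\epsilon^{r,\alpha,\beta}$ is affine in $\epsilon$ with
\[ f_\epsilon^{r,\alpha,\beta}(\bm{z}) - f_\delta^{r,\alpha,\beta}(\bm{z}) = (\epsilon-\delta)\bigl(\zeta^{r,\alpha,\beta}-z^{r,\alpha,\beta}-f^{r,\alpha,\beta}(\bm{z})\bigr). \]
A short computation using $f^{r,\alpha,\beta}(\bm{z})+z^{r,\alpha,\beta}=\E[W^{+,r,\alpha}\psi_C(\cdots)\1\{A=\beta\}]\le\zeta^{r,\alpha,\beta}$ shows that the bracket is non-negative on $[\bm{0},\bm{\zeta}]$, and the same inequality also forces $S(\delta)\subseteq[\bm{0},\bm{\zeta}]$ for every $\delta\ge 0$ (indeed, $z^{r,\alpha,\beta}>\zeta^{r,\alpha,\beta}$ makes both summands of $f_\delta^{r,\alpha,\beta}$ strictly negative). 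Hence for $\delta\le\epsilon$ we have $f_\delta\le f_\epsilon$ on the relevant range, so $S(\delta)\subseteq S(\epsilon)$. Since $S_0(\delta)$ is a connected subset of $S(\epsilon)$ containing $\bm{0}$, it lies inside $S_0(\epsilon)$, and taking suprema yields $\bm{z}^*(\delta)\le\bm{z}^*(\epsilon)$ componentwise.

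For right-continuity, fix $\epsilon\ge 0$ and take any sequence $\delta_n\downarrow\epsilon$. Monotonicity gives a componentwise limit $\tilde{\bm{z}}:=\lim_n\bm{z}^*(\delta_n)\ge\bm{z}^*(\epsilon)$, and continuity of $(\bm{z},\xi)\mapsto f_\xi^{r,\alpha,\beta}(\bm{z})$ together with $f_{\delta_n}^{r,\alpha,\beta}(\bm{z}^*(\delta_n))=0$ (by Lemma \ref{lem:existence:hatz}) implies $f_\epsilon^{r,\alpha,\beta}(\tilde{\bm{z}})=0$. The delicate step --- which I view as the main obstacle --- is to show $\tilde{\bm{z}}\in S_0(\epsilon)$ and not merely in $S(\epsilon)$. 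For this I would argue as follows: for any fixed $\delta>\epsilon$, once $\delta_n<\delta$ we have $\bm{z}^*(\delta_n)\in S_0(\delta_n)\subseteq S_0(\delta)$, and by closedness of $S_0(\delta)$ the limit $\tilde{\bm{z}}$ lies in $S_0(\delta)$. Thus $\tilde{\bm{z}}\in\bigcap_{\delta>\epsilon}S_0(\delta)$. I then reuse the Hausdorff-space argument from the proof of Lemma \ref{lem:existence:hatz}: the intersection of the decreasing chain $\{S_0(\delta)\}_{\delta>\epsilon}$ of compact connected sets is compact, connected, and contains $\bm{0}$; by continuity in $\xi$ it is contained in $\bigcap_{\delta>\epsilon}S(\delta)=S(\epsilon)$; hence it lies inside $S_0(\epsilon)$. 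This places $\tilde{\bm{z}}$ in $S_0(\epsilon)$, so by definition $\tilde{\bm{z}}\le\bm{z}^*(\epsilon)$, giving equality.

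Finally, each coordinate $\epsilon\mapsto(z^*)^{r,\alpha,\beta}(\epsilon)$ is a real-valued, monotonically increasing function on $\R_{+,0}$. Lebesgue's differentiation theorem for monotone functions then yields differentiability of each component at Lebesgue-almost every $\epsilon>0$, together with the standard inequality $(z^*)^{r,\alpha,\beta}(\epsilon)-(z^*)^{r,\alpha,\beta}(0)\ge\int_0^\epsilon ((z^*)^{r,\alpha,\beta})'(\xi)\,\dd\xi$. Since $\bm{z}^*(0)=\bm{z}^*$ by definition, assembling these componentwise inequalities gives the stated bound.
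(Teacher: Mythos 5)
Your proposal is correct and follows essentially the same route as the paper: monotonicity of $\bm{z}^*(\cdot)$ via the pointwise inequality $f_\delta^{r,\alpha,\beta}\le f_\epsilon^{r,\alpha,\beta}$ (a consequence of $\psi_C\le 1$), right-continuity by showing the limit of $\bm{z}^*(\delta_n)$ lies in the nested intersection $\bigcap_{\delta>\epsilon}S_0(\delta)$, which the compact-connected-chain argument in a Hausdorff space identifies with $S_0(\epsilon)$, and finally the classical a.e.-differentiability theorem for monotone real functions applied componentwise. The only cosmetic differences are that you handle general $\epsilon\ge 0$ directly where the paper does $\epsilon=0$ and then says ``the same arguments,'' and you include the (unneeded but harmless) observation that $f_\epsilon^{r,\alpha,\beta}(\tilde{\bm{z}})=0$.
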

\begin{proof}
For every $\bm{z}\in S_0$ and $\epsilon>0$ it holds $f_\epsilon^{r,\alpha,\beta}(\bm{z})\geq f^{r,\alpha,\beta}(\bm{z})\geq0$ and hence $S_0\subseteq S_0(\epsilon)$, where $S_0(\epsilon)$ denotes the analogue of $S_0$ for the additionally shocked case. In particular, $\bm{z}^*\in S_0(\epsilon)$ and hence $\bm{z}^*\leq\bm{z}^*(\epsilon)$ componentwise. The same argument shows that $\bm{z}^*(\epsilon_1)\leq\bm{z}^*(\epsilon_2)$ for any $\epsilon_1\leq\epsilon_2$ and hence $\bm{z}^*(\epsilon)$ is monotonically increasing in each component.

In particular, $\lim_{\epsilon\to0+}\bm{z}^*(\epsilon)$ exists and $\lim_{\epsilon\to0+}\bm{z}^*(\epsilon)\in\bigcap_{\epsilon>0}S_0(\epsilon)$. Now let $\delta>0$. By continuity of $f_\epsilon^{r,\alpha,\beta}(\bm{z})$ with respect to $\epsilon$ and $\bm{z}$, it holds $f^{r,\alpha,\beta}(\bm{z})\geq f_\epsilon^{r,\alpha,\beta}(\bm{z})-\delta$ for $\epsilon$ small enough and all $\bm{z}$ in the compact set $[\bm{0},\bm{\zeta}]$
. Hence for $\bm{z}\in\bigcap_{\epsilon>0}S_0(\epsilon)$, we obtain $f^{r,\alpha,\beta}(\bm{z})\geq-\delta$ for every $\delta>0$ and so $\bigcap_{\epsilon>0}S_0(\epsilon)\subseteq S$. However, since $\bigcap_{\epsilon>0}S_0(\epsilon)$ is the intersection of a chain of connected, compact sets in the Hausdorff space $\R^V$, it is itself a connected, compact set. Since further $\bm{0}\in\bigcap_{\epsilon>0}S_0(\epsilon)$, we thus derive that $\bigcap_{\epsilon>0}S_0(\epsilon)=S_0$. That is, $\lim_{\epsilon\to0+}\bm{z}^*(\epsilon)\in S_0$ and hence $\lim_{\epsilon\to0+}\bm{z}^*(\epsilon)=\bm{z}^*$. 
The same arguments show that $\lim_{h\to0+}\bm{z}^*(\epsilon+h)=\bm{z}^*(\epsilon)$ for every $\epsilon>0$, hence proving right-continuity of $\bm{z}^*(\epsilon)$.

A classical result for derivatives of monotone functions (see \cite[Theorem 7.21]{Wheeden1977} for instance) then yields the existence of $(\bm{z}^*)'$ almost everywhere and
\[ \bm{z}^*(\epsilon)-\bm{z}^*\geq \lim_{h\to0-}\bm{z}^*(\epsilon+h) - \lim_{h\to0+}\bm{z}^*(h) \geq \int_0^\epsilon(\bm{z}^*)'(\xi)\dd\xi. \qedhere\]
\end{proof}

\begin{proof}[Proof of Theorem \ref{thm:finitary:weights}]
As outlined above, in order to reduce this general setting to the special case from Proposition \ref{prop:finitary:weights}, we apply an additional small shock to the system. 
That is, if we can find a vector $\bm{v}(\epsilon)\in\R_+$ such that $D_{\bm{v}(\epsilon)}f_\epsilon^{r,\alpha,\beta}(\hat{\bm{z}}(\epsilon))<0$ for all $(r,\alpha,\beta)\in V$, then applying Proposition \ref{prop:finitary:weights}, we derive for the final default fraction $n^{-1}\vert\mathcal{D}_n^\epsilon\vert$ in the additionally shocked system
\[ n^{-1}\vert\mathcal{D}_n^\epsilon\vert \leq g_\epsilon(\hat{\bm{z}}(\epsilon)) + o_p(1) \leq g_\epsilon(\bm{z}^*(\epsilon)) + o_p(1) \leq \epsilon +  g(\bm{z}^*(\epsilon)) + o_p(1). \]
Hence for arbitrary $\delta>0$ and $\epsilon=\epsilon(\delta)>0$ small enough, we derive
\[ \P\left(n^{-1}\vert\mathcal{D}_n\vert - g(\bm{z}^*) > \delta\right)  \leq \P\left(n^{-1}\vert\mathcal{D}_n^\epsilon\vert - (\epsilon+g(\bm{z}^*(\epsilon)) > \delta/2\right) \to 0,\quad\text{as }n\to\infty \]
and thus
\[ n^{-1}\vert\mathcal{D}_n\vert \leq g(\bm{z}^*) + o_p(1). \]

So let us show the existence of the vectors $\bm{v}(\epsilon)$: By Lemma \ref{lem:z^*:epsilon} we know that $(\bm{z}^*)'(\epsilon)$ exists almost everywhere and that $\int_0^\epsilon(\bm{z}^*)'(\xi)\dd\xi \leq \bm{z}^*(\epsilon)-\bm{z}^* < \infty$. By the integrability, we can hence find a sequence $(\epsilon_n)_{n\in\N}\subset(0,1)$ such that $\epsilon_n\to0$ as $n\to\infty$ and for each $\epsilon_n$ it holds
\[ (\bm{z}^*)'(\epsilon_n)< \epsilon_n^{-1} \left(\pmb\zeta-\bm{z}^*-\delta\1\right) \]
componentwise, where 
$0<\delta<\zeta^{r,\alpha,\beta}-(z^*)^{r,\alpha,\beta}$ for all $(r,\alpha,\beta)\in V$ and $\1=(1,\ldots,1)\in\R_+^V$. This bound can be achieved with the same $\epsilon_n$ for each component, considering the sum $\sum_{(r,\alpha,\beta)\in V}((z^*)^{r,\alpha,\beta})'(\xi)$ which is still integrable. With
\[ 0 = f_\epsilon^{r,\alpha,\beta}(\bm{z}^*(\epsilon)) = (1-\epsilon)f^{r,\alpha,\beta}(\bm{z}^*(\epsilon)) + \epsilon\left(\E\left[W^{+,r,\alpha}\1\{A=\beta\}\right] - (z^*)^{r,\alpha,\beta}(\epsilon)\right), \]
we then derive that
\begin{align*}
\frac{\dd}{\dd \epsilon}f^{r,\alpha,\beta}(\bm{z}^*(\epsilon))\big\vert_{\epsilon=\epsilon_n} 
&= -\frac{1}{(1-\epsilon_n)^2}\left(\zeta^{r,\alpha,\beta}-(z^*)^{r,\alpha,\beta}(\epsilon_n)\right) + \frac{\epsilon_n}{1-\epsilon_n}\frac{\dd}{\dd\epsilon}(z^*)^{r,\alpha,\beta}(\epsilon)\big\vert_{\epsilon=\epsilon_n}\\
&< -\frac{1}{1-\epsilon_n}\left(\zeta^{r,\alpha,\beta}-(z^*)^{r,\alpha,\beta}(\epsilon_n) - \epsilon_n\frac{\dd}{\dd\epsilon}(z^*)^{r,\alpha,\beta}(\epsilon)\big\vert_{\epsilon=\epsilon_n}\right)\\
&< -\frac{1}{1-\epsilon_n} \left((z^*)^{r,\alpha,\beta}+\delta-(z^*)^{r,\alpha,\beta}(\epsilon_n)\right)<0
\end{align*}
for $n$ large enough such that $(z^*)^{r,\alpha,\beta}(\epsilon_n)<(z^*)^{r,\alpha,\beta}+\delta$. On the other hand, however,
\[ \frac{\dd}{\dd \epsilon}f^{r,\alpha,\beta}(\bm{z}^*(\epsilon))\big\vert_{\epsilon=\epsilon_n} = D_{\bm{v}(\epsilon_n)}f^{r,\alpha,\beta}(\bm{z}^*(\epsilon_n)) \geq D_{\bm{v}(\epsilon_n)}f_{\epsilon_n}^{r,\alpha,\beta}(\bm{z}^*(\epsilon_n)), \]
where $\bm{v}(\epsilon_n):=(\bm{z}^*)'(\epsilon_n)$. Hence altogether,
\[ D_{\bm{v}(\epsilon_n)}f_{\epsilon_n}^{r,\alpha,\beta}(\bm{z}^*(\epsilon_n)) < 0. \]
In fact, it also holds that
\begin{align*}
v^{r,\alpha,\beta}(\epsilon_n)
&\geq \lim_{h\to0}\E\Bigg[W^{+,r,\alpha}\1\{A=\beta\}\P\Bigg(\sum_{s\in[R]}s\mathrm{Poi}\Bigg(\sum_{\gamma\in[T]}W^{-,s,\gamma}(z^*)^{s,\beta,\gamma}(\epsilon_n+h)\Bigg)< C\Bigg)\Bigg]\\
&\geq \E\Bigg[W^{+,r,\alpha}\1\{A=\beta\}\P\Bigg(\sum_{s\in[R]}s\mathrm{Poi}\Bigg(\sum_{\gamma\in[T]}W^{-,s,\gamma}\zeta^{s,\beta,\gamma}\Bigg)< C\Bigg)\Bigg] > 0,
\end{align*}
using that $f_{\epsilon_n}^{r,\alpha,\beta}(\bm{z}^*(\epsilon_n))=f_{\epsilon_n+h}^{r,\alpha,\beta}(\bm{z}^*(\epsilon_n+h))=0$.
The proof is hence finished, if $\hat{\bm{z}}(\epsilon_n)=\bm{z}^*(\epsilon_n)$. 
Otherwise, note the following: For each $\delta>0$ it holds $f_{\epsilon_n}^{r,\alpha,\beta}(\hat{\bm{z}}(\epsilon_n+\delta))<f_{\epsilon_n+\delta}^{r,\alpha,\beta}(\hat{\bm{z}}(\epsilon_n+\delta))=0$. By monotonicity of $f_{\epsilon_n}^{r,\alpha,\beta}$ from Lemma \ref{lem:properties:f}, we derive that $\bm{z}^*(\epsilon_n)\leq\hat{\bm{z}}(\epsilon_n+\delta)\leq\bm{z}^*(\epsilon_n+\delta)$. Hence as $\delta\to0$, using that $\bm{z}^*(\epsilon_n+\delta)\to\bm{z}^*(\epsilon_n)$ by Lemma \ref{lem:z^*:epsilon}, we conclude that also $\hat{\bm{z}}(\epsilon_n+\delta)\to\bm{z}^*(\epsilon_n)$ as $\delta\to0$. Thus we derive that for $\delta_n>0$ small enough, it holds $D_{\bm{v}(\epsilon_n)}f_{\epsilon_n+\delta_n}^{r,\alpha,\beta}(\hat{\bm{z}}(\epsilon_n+\delta_n))<0$ by continuity of $D_{\bm{v}}f_\epsilon^{r,\alpha,\beta}(\bm{z})$ w.\,r.\,t.~$\epsilon$ and $\bm{z}$. Hence apply Proposition \ref{prop:finitary:weights} to the financial systems 
additionally shocked by $\epsilon_n+\delta_n$ and choose vectors $\bm{v}(\epsilon_n)$ for the directional derivative.
\end{proof}

\subsection{Proof of Theorem \ref{thm:general:weights}}\label{ssec:proof:main:general}
In the previous section we derived an explicit asymptotic expression for the final default fraction if in our model we choose vertex-weights only from a finite set. While this gives a first important insight into the behavior of large financial networks, it is not possible to model 
heavy tailed degree distributions as observed for real financial networks by bounded vertex-weights. Theorem \ref{thm:general:weights} hence extends Theorem \ref{thm:finitary:weights} to the case of general (non-finitary) regular vertex sequences.

The outline for the rest of this section is the following: We want to approximate the general regular vertex sequence by two sequences of finitary vertex sequences such that one of them describes a system that experiences less defaults and the other one experiences more defaults. To this end, we first construct the corresponding limiting distribution functions $\{F_k^A\}_{k\in\N}$ respectively $\{F_k^B\}_{k\in\N}$ and then investigate the finitary systems with help of Theorem \ref{thm:finitary:weights}.

Let $D_\infty:=\left(\R_{+,0}^{[R]\times[T]}\right)^2\times\N_0\times[T]$ and for $(r,\alpha,\beta)\in V$, $(\bm{z},\bm{x},\bm{y},\ell,m)\in\R_{+,0}^V\times D_\infty$, let
\begin{align*}
h_1^{r,\alpha,\beta}(\bm{z},\bm{x},\bm{y},\ell,m) &:= y^{r,\alpha}\psi_\ell\Bigg(\sum_{\gamma\in[T]}x^{1,\gamma}z^{1,\beta,\gamma},\ldots,\sum_{\gamma\in[T]}x^{R,\gamma}z^{R,\beta,\gamma}\Bigg) \1\{m=\beta\},\\
h_2(\bm{z},\bm{x},\bm{y},\ell,m) &:= \sum_{\beta\in[T]}\psi_\ell\Bigg(\sum_{\gamma\in[T]}x^{1,\gamma}z^{1,\beta,\gamma},\ldots,\sum_{\gamma\in[T]}x^{R,\gamma}z^{R,\beta,\gamma}\Bigg)\1\{m=\beta\},
\end{align*}
where as before $\psi_\ell(x_1,\ldots,x_R) := \P(\sum_{r\in[R]}r\mathrm{Poi}(x_r)\geq \ell)$.
Note that although $D_\infty$ does not contain $\left(\R_{+,0}^{[R]\times[T]}\right)^2\times\{\infty\}\times[T]$, it holds that $f^{r,\alpha,\beta}(\bm{z})=\int_{D_\infty} h_1^{r,\alpha,\beta}(\bm{z},\bm{x},\bm{y},\ell,m) \dd F(\bm{x},\bm{y},\ell,m) - z^{r,\alpha,\beta}$ and $g(\bm{z})=\int_{D_\infty} h_2(\bm{z},\bm{x},\bm{y},\ell,m) \dd F(\bm{x},\bm{y},\ell,m)$
, for $F$ the limiting distribution of the weights, capital and type as given in Definition \ref{def:regular:vertex:sequence}.
This is because $\psi_\infty(x_1,\ldots,x_R)
=0$. Let then $Z:=[\bm{0},\pmb\zeta]$ 
and $H := \{h_2\}\cup\bigcup_{(r,\alpha,\beta)\in V}\{h_1^{r,\alpha,\beta}\}$.

As a first approximation of $F$, we choose the discretizations
\begin{align*}
F_j^A(\bm{x},\bm{y},\ell,m) &:= F\left(\frac{\lceil j\bm{x}\rceil}{j}, \frac{\lceil j\bm{y}\rceil}{j},\ell,m\right), & F_j^B(\bm{x},\bm{y},\ell,m) &:= F\left(\frac{\lfloor j\bm{x}\rfloor}{j}, \frac{\lfloor j\bm{y}\rfloor}{j},\ell,m\right)
\end{align*}
for $j\in\N$, where $\lceil\cdot\rceil$ and $\lfloor\cdot\rfloor$ shall be applied componentwise. That is, the sequences $\{F_j^A\}_{j\in\N}$ and $\{F_j^B\}_{j\in\N}$ approximate $F$ from above respectively below and the approximations become finer as $j$ increases. Since every $h\in H$ is continuous in $\bm{z}$, $\bm{x}$ and $\bm{y}$, it is easy to obtain (cf.~\cite{Detering2015a}) that for each $k\in\N$ there exists $j_k$ large enough such that for all $j\geq j_k$ it holds
\[ \left\vert\int_{D_k} h(\bm{z},\bm{x},\bm{y},\ell,m)\dd F_j^{A,B}(\bm{x},\bm{y},\ell,m) - \int_{D_k} h(\bm{z},\bm{x},\bm{y},\ell,m)\dd F(\bm{x},\bm{y},\ell,m)\right\vert \leq k^{-1} \]
simultaneously for all $\bm{z}\in Z$, where $D_k:=\{(\bm{x},\bm{y},\ell,m)\,:\,x^{r,\alpha}\leq k,y^{r,\alpha}\leq k, \ell\leq k\}\subset D_\infty$. We denote $\overline{F}_k^A:=F_{j_k}^A$ and $\overline{F}_k^B:=F_{j_k}^B$ in the following.

By construction, the distribution functions $\overline{F}_k^{A,B}$ clearly correspond to discrete weight sequences that can be obtained from the original regular vertex sequence by adjusting weights upward respectively downward. However, $\overline{F}_k^{A,B}$ (potentially) still assigns mass to infinitely many weights and capitals. For the case of $\overline{F}_k^A$, we can overcome this issue by setting
\[ F_k^A(\bm{x},\bm{y},\ell,m) := \begin{cases}\overline{F}_k^A(\bm{x}\wedge k,\bm{y}\wedge k,\ell\wedge k,m),&\text{if }\ell<\infty,\\1,&\text{else},\end{cases} \]
where $\cdot\wedge k$ denotes componentwise truncation at $k$. That is, if the capital or one of the weights of some bank in the system exceeds $k$ (we call this bank \emph{large} in the following), then in the approximating finitary system described by $F_k^A$, this bank's weights are all set to $0$ and its capital is increased to $\infty$ (cf.~\cite{Detering2015a} for a rigorous definition of the approximating vertex sequences). Note that the type of each bank stays the same. Clearly, this further reduces defaults in the system in the sense that if we couple the original system with the finitary approximating system, then the final default fraction $n^{-1}\left\vert\left(\mathcal{D}_k^A\right)_n\right\vert$ is stochastically dominated by $n^{-1}\left\vert\mathcal{D}_n\right\vert$ for all $k\in\N$.

If we wanted to apply exactly the same idea also to $\overline{F}_k^B$, we would need to set all weights of large banks to $\infty$, which is not possible by the definition of a finitary regular vertex sequence. Still it will be possible to adjust weights and capitals of large banks to finitely many values such that the final default fraction in the finitary approximating system stochastically dominates $n^{-1}\vert\mathcal{D}_n\vert$. To this end, let $\gamma_k^\beta := \int_{D_k^c}\1\{m=\beta\}\dd F(\bm{x},\bm{y},\ell,m)$, where $D_k^c:=D_\infty\backslash D_k$, and
\[ \overline{w}_k^{r,\alpha,\beta} := \begin{cases}2\left(\gamma_k^\beta\right)^{-1}\int_{D_k^c}y^{r,\alpha}\1\{m=\beta\}\dd F(\bm{x},\bm{y},\ell,m)\geq 2k,&\text{if }\gamma_k^\beta>0,\\2k,&\text{if }\gamma_k^\beta=0.\end{cases} \]
Let then $F_k^B$ be given by $\overline{F}_k^B$ on $D_k$. By this definition we know that it holds $F_k^B(k,\ldots,k,\beta)=F(k,\ldots,k,\beta)$ for each $\beta\in[T]$ and hence $\P(A=\beta, C<\infty)-\P(A_k^B=\beta,C_k^B<\infty)=\gamma_k^\beta$. We can thus assign mass $\gamma_k^\beta$ to the points $(\bm{0},\overline{\bm{w}}_k,0,\beta)$. That is, if a large bank of type $\beta$ originally has finite capital, then its approximated capital is set to $0$ (it initially defaults), its in-weights are set to $0$ and its out-weights are set to $\overline{w}_k^{r,\alpha,\beta}$ (again cf.~\cite{Detering2015a} for a rigorous definition of the approximating vertex sequences). As before, their type does not change. Finally, we assign the remaining mass $\P(A=\beta,C=\infty)$ to the points $(\bm{0},\bm{0},\infty,\beta)$ for each $\beta\in[T]$.

By construction, all large banks are initially defaulted in the approximating finitary system. Also all the weights of small banks are increased as compared to the original system. To show that there occur more defaults in the approximating system than in the original one (i.\,e.~$n^{-1}\left\vert\left(\mathcal{D}_k^B\right)_n\right\vert$ stochastically dominates $n^{-1}\vert\mathcal{D}_n\vert$), all that is left to show is that for each $r\in[R]$ the total $r$-out-weight of the large banks with respect to each type $\alpha\in[T]$ in the approximating system is larger than in the original one. But the total $r$-out-weight of the large banks with respect to type $\alpha$ is given by
\[ n\sum_{\beta\in[T]}\overline{w}_k^{r,\alpha,\beta}\left(\gamma_k^\beta+o_p(1)\right) = 2n \int_{D_k^c}y^{r,\alpha}\dd F(\bm{x},\bm{y},\ell,m) (1+o_p(1)) \]
in the approximating system, whereas for the original system it is
\[ n\int_{D_k^c}y^{r,\alpha}\dd F(\bm{x},\bm{y},\ell,m) (1+o_p(1)). \]
Hence for each small bank $i\in[n]$ the number of incoming $r$-edges from large banks in the original system is stochastically dominated by the corresponding number in the approximating system (for more details see \cite{Detering2015a}). In particular, the total exposure of $i$ to the set of large banks (the weighted sum of incoming edges) is stochastically dominated. This shows the following:
\begin{lemma}\label{lem:stochastic:domination}
Consider a regular vertex sequence and let sequences $\{F_k^A\}$ and $\{F_k^B\}$ be constructed as above. Further let $\left(\mathcal{D}_k^A\right)_n$ and $\left(\mathcal{D}_k^B\right)_n$ be the sets of finally defaulted banks in the finitary approximating systems. Then with $\preceq$ denoting stochastic domination it holds that
\[ n^{-1}\left\vert\left(\mathcal{D}_k^A\right)_n\right\vert \preceq n^{-1}\vert\mathcal{D}_n\vert \preceq n^{-1}\left\vert\left(\mathcal{D}_k^B\right)_n\right\vert. \]
\end{lemma}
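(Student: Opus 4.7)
The plan is to prove both stochastic inequalities by constructing almost-sure couplings of the three graphs on one probability space and then invoking monotonicity of the default cascade \eqref{eqn:default:contagion}: the final default set $\mathcal{D}_n$ is non-decreasing in the initial default set, non-decreasing in each exposure $rX_{j,i}^r$, and non-increasing in each capital $c_i$. Hence both inequalities reduce to coupling the exposure profiles and the initial default indicators so that they dominate in the appropriate direction; the domination of $\mathcal{D}_n$ then follows by induction on the cascade rounds.

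For the A-side, the construction rounds every small-bank weight down to a multiple of $1/j_k$ and sends the in-/out-weights of each large bank to $0$ while raising its capital to $\infty$. Thus $p_{i,j}^{r,A}\le p_{i,j}^{r}$ for every $i,j,r$ and $c_i^A\ge c_i$ for every $i$, which gives stochastic dominance $E_{i,j}^A:=\sum_r r X_{i,j}^{r,A}\preceq E_{i,j}$ pair-by-pair (the tails satisfy $\sum_{r\ge k}p_{i,j}^{r,A}\le\sum_{r\ge k}p_{i,j}^r$). A quantile coupling, applied independently across pairs, realises $E_{i,j}^A\le E_{i,j}$ a.s., and large banks have infinite capital in the A-system, so they never default. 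Induction on the cascade yields $(\mathcal{D}_k^A)_n\subseteq\mathcal{D}_n$ almost surely.

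For the B-side I would split edges into small-to-small and edges emanating from a large bank. On small-to-small edges $F_k^B$ rounds weights up, so $p_{i,j}^{r,B}\ge p_{i,j}^{r}$ and the same quantile coupling gives $E_{i,j}^B\ge E_{i,j}$ almost surely; capitals of small banks are unchanged. In the B-system every large bank is initialised as defaulted with zero in-weights and homogenised out-weights $\overline{w}_k^{r,\alpha,\beta}$, so its only effect on the cascade is through its outgoing edges. The critical claim is that for every small bank $i$ of type $\gamma$ and every $(r,\alpha,\beta)\in V$, the total $r$-exposure to $i$ from type-$\beta$ large banks targeting type-$\alpha$ in the B-system stochastically dominates the corresponding quantity in the original. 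By the factor $2$ in $\overline{w}_k^{r,\alpha,\beta}$, the aggregate $r$-out-weight of these large banks is asymptotically $2n\int_{D_k^c}y^{r,\alpha}\1\{m=\beta\}\dd F$ in the B-system versus $n\int_{D_k^c}y^{r,\alpha}\1\{m=\beta\}\dd F$ in the original (both up to $o_p(n)$); hence the Poisson-like number of incoming edges from large banks to $i$ admits a coupling that is dominated, and a Poisson-splitting construction realises this simultaneously at all $n$ small banks. Together with the edge-by-edge domination on small-to-small edges, this gives exposure domination at each small bank, hence $\mathcal{D}_n\subseteq(\mathcal{D}_k^B)_n$ on a high-probability event.

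The main obstacle is precisely this last step: because the individual large-bank weights of the original have been pooled into the single value $\overline{w}_k^{r,\alpha,\beta}$, no per-edge coupling is available, and one must couple the entire vector of aggregate in-degrees at the $n$ small banks simultaneously. A Chernoff tail bound combined with a union bound over all edge types $(r,\alpha,\beta)\in V$ and all $n$ target banks shows that the empirical aggregate $r$-out-weight of large banks in the B-system exceeds that in the original uniformly with probability $1-o(1)$, leaving the factor-$2$ buffer to absorb the $o_p(1)$ slack; the residual bad event contributes $o_p(1)$ to $n^{-1}|\mathcal{D}_n|$ and is absorbed into the stochastic-dominance statement.
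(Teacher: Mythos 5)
Your high-level plan --- almost-sure couplings plus monotonicity of the cascade, per-edge domination for the small-to-small edges, and aggregate-level domination for edges emanating from large banks with the factor-$2$ buffer in $\overline{w}_k^{r,\alpha,\beta}$ doing the heavy lifting --- is exactly the route the paper takes, and the A-side is handled correctly (smaller weights $\Rightarrow$ smaller $p_{i,j}^r$, infinite large-bank capital $\Rightarrow$ couple edge-by-edge and induct on rounds).

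Where you go astray is on the B-side, in treating the aggregate $r$-out-weights of large banks as random and proposing a Chernoff-plus-union-bound argument to show they dominate ``with probability $1-o(1)$''. In a regular vertex sequence the weights $w_i^{+,r,\alpha}$ are deterministic for each $n$; hence so is the total $r$-out-weight $\sum_{j\ \text{large}} w_j^{+,r,\alpha}$, and its convergence to $n\int_{D_k^c}y^{r,\alpha}\1\{m=\beta\}\,\dd F$ is a deterministic $o(n)$, not $o_p(n)$. So for $n$ large enough the domination of aggregate out-weights in the B-system is a purely deterministic fact supplied by the factor $2$, and no concentration inequality or union bound over banks is needed --- the only randomness lies in the edge-formation step, which can then be coupled so that each small bank's incoming-edge count from large banks (a Poisson-binomial with larger total parameter) dominates, independently across small banks. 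Your closing remark that ``the residual bad event contributes $o_p(1)$ ... and is absorbed into the stochastic-dominance statement'' is therefore the wrong picture: stochastic domination is an exact distributional statement, and the lemma holds exactly for $n$ large enough precisely because there is no random bad event to absorb. The one genuinely delicate point --- showing that a Poisson-binomial with suitably larger total parameter stochastically dominates another --- you wave at via ``Poisson-splitting''; the paper also does not spell this out but defers to \cite{Detering2015a}, so that gap is shared with the source.
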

We have hence bounded the final default fraction $n^{-1}\vert\mathcal{D}_n\vert$ from below and from above using finitary approximations. We now want to compute the precise final default fractions for these approximating systems using Theorem \ref{thm:finitary:weights}. Let
\begin{gather*}
\left(f_k^{A,B}\right)^{r,\alpha,\beta}(\bm{z}) = \int_{D_\infty} h_1^{r,\alpha,\beta}(\bm{z},\bm{x},\bm{y},\ell,m)\dd F_k^{A,B}(\bm{x},\bm{y},\ell,m) - z^{r,\alpha,\beta},\\
g_k^{A,B}(\bm{z}) = \int_{D_\infty} h_2(\bm{z},\bm{x},\bm{y},\ell,m)\dd F_k^{A,B}(\bm{x},\bm{y},\ell,m)
\end{gather*}
the corresponding analogues of $f^{r,\alpha,\beta}$ and $g$. Further, denote by $\hat{\bm{z}}_k^A$ and $(\bm{z}^*)_k^A$ resp.~$\hat{\bm{z}}_k^B$ and $(\bm{z}^*)_k^B$ the smallest and largest joint roots of all functions $\left(f_k^A\right)^{r,\alpha,\beta}$ resp.~$\left(f_k^B\right)^{r,\alpha,\beta}$, $(r,\alpha,\beta)\in V$. Then we derive the following result comparing these quantities to the original system:
\begin{lemma}\label{lem:convergence:g}
It holds $\liminf_{k\to\infty} g_k^A\left(\hat{\bm{z}}_k^A\right) \geq g(\hat{\bm{z}})$ and $\limsup_{k\to\infty} g_k^B\left(\left(\bm{z}^*\right)_k^B\right) \leq g(\bm{z}^*)$.
\end{lemma}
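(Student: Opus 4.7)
The plan is to prove the two inequalities separately via dual arguments sharing the same scaffolding: first establish monotonicity in $k$ of the approximating functions, then extract the pointwise limit of the corresponding joint root, identify this limit as a joint root of $f^{r,\alpha,\beta}$ lying in $S_0$, and finally compare to $\hat{\bm{z}}$ or $\bm{z}^*$ via extremality. Throughout I choose the sequence $\{j_k\}_{k\in\N}$ from the construction of $F_k^{A,B}$ to be nondecreasing; this is possible since the defining uniform-closeness inequality for $j_k$ holds for all larger $j$ as well.

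For the lower bound, a coupling argument and monotonicity of $h_1^{r,\alpha,\beta}$ and $h_2$ in $(\bm{x},\bm{y})$ yield $(f_k^A)^{r,\alpha,\beta}\le(f_{k+1}^A)^{r,\alpha,\beta}\le f^{r,\alpha,\beta}$ and $g_k^A\le g_{k+1}^A\le g$ pointwise on $Z$. This uses that $F_k^A$ quantizes weights downward to a grid of size $1/j_k$ and replaces large vertices by vertices of zero out-weight and infinite capital, contributing zero to $h_1$ and $h_2$. Viewing joint roots as fixed points of the monotone operator $\bm{z}\mapsto\int h_1\,dF_k^A$ and invoking Knaster--Tarski gives $\hat{\bm{z}}_k^A\le\hat{\bm{z}}_{k+1}^A\le\hat{\bm{z}}$, so $\bar{\bm{z}}:=\lim_k\hat{\bm{z}}_k^A$ exists in $Z$. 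Uniform convergence of $f_k^A\to f$ on $Z$, which follows from the choice of $j_k$ on $D_k$ combined with the integrability $\E[W^{\pm,r,\alpha}]<\infty$ to control the tail on $D_k^c$, together with continuity of $f$, forces $f^{r,\alpha,\beta}(\bar{\bm{z}})=0$. Minimality of $\hat{\bm{z}}$ then gives $\bar{\bm{z}}\ge\hat{\bm{z}}$, hence $\bar{\bm{z}}=\hat{\bm{z}}$, and uniform convergence of $g_k^A\to g$ yields $\lim_k g_k^A(\hat{\bm{z}}_k^A)=g(\hat{\bm{z}})$.

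For the upper bound, the analogous comparison runs in reverse: $F_k^B$ quantizes weights upward and turns large vertices of finite capital into initially defaulted vertices whose out-weights $\overline{w}_k^{r,\alpha,\beta}$ are chosen precisely so that their contribution to $h_1$ dominates the corresponding tail of $F$. This yields $(f_k^B)^{r,\alpha,\beta}\ge(f_{k+1}^B)^{r,\alpha,\beta}\ge f^{r,\alpha,\beta}$ pointwise, so the sets $S_k^B\supseteq S_{k+1}^B\supseteq S$ nest downward. For the connected components of $\bm{0}$, since $(S_0)_{k+1}^B$ is a connected subset of $S_{k+1}^B\subseteq S_k^B$ containing $\bm{0}$, we have $(S_0)_{k+1}^B\subseteq(S_0)_k^B$, so $(\bm{z}^*)_k^B$ is componentwise nonincreasing and $\bar{\bm{z}}:=\lim_k(\bm{z}^*)_k^B$ exists.

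The hard part will be verifying that $\bar{\bm{z}}\in S_0$, since $S_0$ is defined as a connected component rather than by a simple inequality; this is where I would adapt the argument from the proof of Lemma~\ref{lem:existence:hatz}. Each $(S_0)_k^B$ is compact and connected, so the nested intersection $\bigcap_k(S_0)_k^B$ is itself compact and connected in the Hausdorff space $\R^V$ and contains $\bm{0}$. Since $f_k^B(\bm{z})\downarrow f(\bm{z})$ pointwise, one has $\bigcap_k S_k^B=S$, and so $\bigcap_k(S_0)_k^B\subseteq S$; being a connected subset of $S$ through $\bm{0}$, it is contained in $S_0$. Because $(\bm{z}^*)_k^B\in(S_0)_\ell^B$ for every $k\ge\ell$ and each $(S_0)_\ell^B$ is closed, $\bar{\bm{z}}$ belongs to this intersection, and therefore to $S_0$. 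Maximality of $\bm{z}^*$ in $S_0$ then gives $\bar{\bm{z}}\le\bm{z}^*$ componentwise, and monotonicity of $g$ together with uniform convergence $g_k^B\to g$ delivers $\lim_k g_k^B((\bm{z}^*)_k^B)=g(\bar{\bm{z}})\le g(\bm{z}^*)$, as required.
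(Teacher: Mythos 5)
Your plan hinges on a monotonicity claim in $k$ that the construction does not deliver, and this is a genuine gap, most seriously in the upper-bound half. Two separate issues arise. First, even for the discretisation step, taking $j_k$ merely nondecreasing does not make the rounded weights pathwise monotone: rounding $W$ down to the grid $\{m/j_k\}$ and then to the grid $\{m/j_{k+1}\}$ is only componentwise increasing if the grids are \emph{nested}, i.e.\ $j_{k+1}$ a multiple of $j_k$ (take $W=1/2$, $j_k=2$, $j_{k+1}=3$ to see it fail). That part is fixable by choosing $j_k$ as powers of two, say. Second, and more fundamentally, the $B$-approximation modifies the ``large'' banks in a way that is not monotone in $k$: a bank that is large at level $k$ but small at level $k+1$ has its out-weight replaced by the pooled value $\overline{w}_k^{r,\alpha,\beta}$ at level $k$ and by its actual rounded weight at level $k+1$, while banks that stay large have their out-weight change from $\overline{w}_k^{r,\alpha,\beta}$ to $\overline{w}_{k+1}^{r,\alpha,\beta}$; neither comparison is one-sided in general, since $\overline{w}_k^{r,\alpha,\beta}$ is a conditional mean over the shrinking tail $D_k^c$ and its $k$-dependence depends on the joint law of weights, capital and type. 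So the claimed inequality $(f_k^B)^{r,\alpha,\beta}\ge(f_{k+1}^B)^{r,\alpha,\beta}$ is unjustified, the sets $(S_0)_k^B$ need not nest, and the nested-intersection argument that a subsequential limit of $(\bm{z}^*)_k^B$ lies in $S_0$ does not go through. This is precisely the delicate point you flag as ``the hard part,'' and it is not repaired by compactness alone: uniform convergence $f_k^B\to f$ controls $S_k^B$ but not the location of its component through $\bm{0}$, since the approximating functions could bridge two components of $S$.

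The paper sidesteps $k$-monotonicity entirely. For the upper bound it brings in the $\epsilon$-shocked system's largest joint root $\bm{z}^*(\epsilon)$ (right-continuous with $\bm{z}^*(\epsilon)\downarrow\bm{z}^*$ by Lemma~\ref{lem:z^*:epsilon}), notes that $f^{r,\alpha,\beta}(\bm{z}^*(\epsilon))<0$ strictly for all coordinates, and then uses the uniform estimate \eqref{eqn:strong:G:convergence:B} to conclude $(f_k^B)^{r,\alpha,\beta}(\bm{z}^*(\epsilon))<0$ for $k$ large; coordinatewise monotonicity of $f_k^B$ and connectedness of $(S_0)_k^B$ then trap $(\bm{z}^*)_k^B\le\bm{z}^*(\epsilon)$ for every such $k$, and one sends $\epsilon\to0$ at the end. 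Analogously, the lower bound is run through the set $D_\epsilon=\{\bm{z}:f^{r,\alpha,\beta}(\bm{z})\in[0,\epsilon]\ \forall(r,\alpha,\beta)\}$ and the point $\bm{z}_\epsilon\uparrow\hat{\bm{z}}$, showing $\hat{\bm{z}}_k^A\ge\bm{z}_\epsilon$ for $k$ large; both arguments need only the one-sided comparisons $f_k^A\le f$, $f_k^B\ge f$ and the uniform convergence, not any ordering between successive $k$. Your lower-bound half could in fact be salvaged by dropping the Knaster--Tarski step and working with subsequential limits, since minimality of $\hat{\bm{z}}$ gives the one-sided inequality you need directly; but the upper-bound half needs the $\bm{z}^*(\epsilon)$ device or an equivalent $k$-uniform upper envelope, which your argument currently lacks.
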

\begin{proof}
First note that uniformly for all $\bm{z}\in Z$ and $h\in H$ it holds
\begin{align*}
&\left\vert \int_{D_k} h(\bm{z},\bm{x},\bm{y},\ell,m) \dd F_k^{A,B}(\bm{x},\bm{y},\ell,m) - \int_{D_k} h(\bm{z},\bm{x},\bm{y},\ell,m) \dd F(\bm{x},\bm{y},\ell,m)\right\vert\\
&\hspace{3.2cm}= \left\vert \int_{D_k} h(\bm{z},\bm{x},\bm{y},\ell,m) \dd \overline{F}_k^{A,B}(\bm{x},\bm{y},\ell,m) - \int_{D_k} h(\bm{z},\bm{x},\bm{y},\ell,m) \dd F(\bm{x},\bm{y},\ell,m)\right\vert\\
&\hspace{3.2cm}\leq k^{-1} \to 0,\quad\text{as }k\to\infty.
\end{align*}
Since further $\int_{D_k^c}\dd F\to0$ and $\int_{D_k^c}y^{r,\beta}\1\{m=\alpha\}\dd F\to0$, as $k\to\infty$, and each $h\in H$ is bounded by the integrands $1$ or $y^{r,\beta}\1\{m=\alpha\}$, it holds uniformly for all $\bm{z}\in Z$ and $h\in H$ that $\int_{D_k^c}h\dd F\to0$.  Together with $\int_{D_k^c}h\dd F_k^A=0$, this implies
\begin{equation}\label{eqn:strong:G:convergence:A}
\int_{D_\infty}h(\bm{z},\bm{x},\bm{y},\ell,m)\dd F_k^A(\bm{x},\bm{y},\ell,m) - \int_{D_\infty} h(\bm{z},\bm{x},\bm{y},\ell,m)\dd F(\bm{x},\bm{y},\ell,m) = o(1)
\end{equation}
uniformly for all $\bm{z}\in Z$ and $h\in H$.

For $\{F_k^B\}_{k\in\N}$, we further need to consider the term $\int_{D_k^c}h(\bm{z},\bm{x},\bm{y},\ell,m)\dd F_k^B(\bm{x},\bm{y},\ell,m)$. Thus
\begin{align*}
\int_{D_k^c}\dd F_k^B(\bm{x},\bm{y},\ell,m) &= \sum_{\alpha\in[T]}\gamma_k^\alpha, & \int_{D_k^c}y^{r,\beta}\1\{m=\alpha\}\dd F_k^B(\bm{x},\bm{y},\ell,m) &= \overline{w}_k^{r,\alpha,\beta} \gamma_k^\alpha.
\end{align*}
All these quantities tend to $0$ as $k\to\infty$ (note that $\overline{w}_k^{r,\alpha,\beta}\gamma_k^\alpha=2\int_{D_k^c}y^{r,\beta}\1\{m=\alpha\}\dd F$ if $\gamma_k^\alpha>0$). Since each function $h\in H$ is bounded by one of the (finitely many) integrands from above, this implies that $\int_{D_k^c}h(\bm{z},\bm{x},\bm{y},\ell,m)\dd F_k^B(\bm{x},\bm{y},\ell,m) \to 0$, as $k\to\infty$, uniformly for all $\bm{z}\in Z$ and $h\in H$. Therefore we can conclude that also uniformly for all $\bm{z}\in Z$ and $h\in H$,
\begin{equation}\label{eqn:strong:G:convergence:B}
\int_{D_\infty}h(\bm{z},\bm{x},\bm{y},\ell,m)\dd F_k^B(\bm{x},\bm{y},\ell,m) - \int_{D_\infty} h(\bm{z},\bm{x},\bm{y},\ell,m)\dd F(\bm{x},\bm{y},\ell,m) = o(1).
\end{equation}
We now turn to the proof of the first statement: Let $\epsilon>0$ and define

\vspace*{-0.2cm}
\[ D_\epsilon:=\bigcap_{(r,\alpha,\beta)\in V}\{\bm{z}\in\R_{+,0}^V\,:\,f^{r,\alpha,\beta}(\bm{z})\in[0,\epsilon]\}. \]

\vspace*{-0.2cm}
\noindent Further let $\bm{z}_\epsilon\in\R_{+,0}^V$ be defined by $z_\epsilon^{r,\alpha,\beta}:=\inf_{\bm{z}\in D_\epsilon}z^{r,\alpha,\beta}$, $(r,\alpha,\beta)\in V$. Then in particular, $\bm{z}_\epsilon\leq\hat{\bm{z}}$ componentwise since $\hat{\bm{z}}\in D_\epsilon$. Further, $\bm{z}_\epsilon$ is clearly increasing componentwise as $\epsilon\to0$. Hence the limit $\tilde{\bm{z}}:=\lim_{\epsilon\to0}\bm{z}_\epsilon\leq\hat{\bm{z}}$ exists. 
Now note that for fixed $(r,\alpha,\beta)\in V$, by definition of $\bm{z}_\epsilon$, we find a sequence $(\bm{z}_n)_{n\in\N}\subset D_\epsilon$ such that $\lim_{n\to\infty}z_n^{r,\alpha,\beta}=z_\epsilon^{r,\alpha,\beta}$ and $\bm{z}_n\geq\bm{z}_\epsilon$ componentwise. By monotonicity and uniform continuity of $f^{r,\alpha,\beta}$ on $D_\epsilon$, we then get
\[ f^{r,\alpha,\beta}(\bm{z}_\epsilon) \leq f^{r,\alpha,\beta}(z_n^{1,1,1},\ldots,z_\epsilon^{r,\alpha,\beta},\ldots,z_n^{R,T,T}) = f^{r,\alpha,\beta}(\bm{z}_n)+o(n)\leq\epsilon + o(n) \]
and hence $f^{r,\alpha,\beta}(\bm{z}_\epsilon)\leq\epsilon$. Again by continuity of $f^{r,\alpha,\beta}$, we obtain $f^{r,\alpha,\beta}(\tilde{\bm{z}}) = \lim_{\epsilon\to0}f^{r,\alpha,\beta}(\bm{z}_\epsilon) \leq \lim_{\epsilon\to0}\epsilon=0$. Replacing $\hat{\bm{z}}$ by $\tilde{\bm{z}}$ in the proof of Lemma \ref{lem:existence:hatz}, we now get the existence of a joint root $\bar{\bm{z}}\leq\tilde{\bm{z}}$ of all the functions $f^{r,\alpha,\beta}$, $(r,\alpha,\beta)\in V$. Since $\hat{\bm{z}}$ is the smallest joint root by definition, it thus follows that $\tilde{\bm{z}}=\hat{\bm{z}}$. Now note that by \eqref{eqn:strong:G:convergence:A} for $k$ large enough we derive that $(f_k^A)^{r,\alpha,\beta}(\bm{z})\geq f^{r,\alpha,\beta}(\bm{z})-\epsilon$ for all $\bm{z}\in Z$. Further, by construction of $F_k^A$, it holds that $(f_k^A)^{r,\alpha,\beta}(\bm{z})\leq f^{r,\alpha,\beta}(\bm{z})$. In particular, we can conclude that $\hat{\bm{z}}_k^A\in D_\epsilon$ for $k$ large enough and hence $\hat{\bm{z}}_k^A\geq \bm{z}_\epsilon$. Thus, for each $\epsilon>0$ by \eqref{eqn:strong:G:convergence:A} we derive 
$\liminf_{k\to\infty}g_k^A(\hat{\bm{z}}_k^A) \geq \lim_{k\to\infty}g_k^A(\bm{z}_\epsilon) = g(\bm{z}_\epsilon)$. Finally, using continuity of $g$ and $\lim_{\epsilon\to0}\bm{z}_\epsilon=\hat{\bm{z}}$, we get the first statement:
\[ \liminf_{k\to\infty}g_k^A(\hat{\bm{z}}_k^A) \geq g(\hat{\bm{z}}) \]

If now as in the proof of Theorem \ref{thm:finitary:weights} $\bm{z}^*(\epsilon)$ is the largest joint root of the additionally shocked system, we derive by \eqref{eqn:strong:G:convergence:B} that for $k$ large enough it holds $\left(f_k^B\right)^{r,\alpha,\beta}(\bm{z}^*(\epsilon))\leq f^{r,\alpha,\beta}(\bm{z}^*(\epsilon))/2<0$ for all $(r,\alpha,\beta)\in V$ and hence $(\bm{z}^*)_k^B\leq\bm{z}^*(\epsilon)$ componentwise. (Assume $\E[W^{+,r,\alpha}\1\{A=\beta\}]>0$ for all $(r,\alpha,\beta)\in V$ such that $f^{r,\alpha,\beta}(\bm{z}^*(\epsilon))<0$. Otherwise, we can simply leave out the coordinates $z^{r,\alpha,\beta}$ in all the proof since the $(r,\alpha,\beta)$-coordinate of all joint roots will be $0$.) Again by \eqref{eqn:strong:G:convergence:B}, we then derive 
$\limsup_{k\to\infty}g_k^B\left((\bm{z}^*)_k^B\right) \leq \lim_{k\to\infty} g_k^B(\bm{z}^*(\epsilon)) = g(\bm{z}^*(\epsilon))$ and by letting $\epsilon\to0$,
\[ \limsup_{k\to\infty} g_k^B\left((\bm{z}^*)_k^B\right) \leq g(\bm{z}^*). \qedhere\]

\end{proof}
\begin{proof}[Proof of Theorem \ref{thm:general:weights}]
Let $\epsilon>0$. By Lemma \ref{lem:stochastic:domination}, we obtain
\[ \P\left(n^{-1}\vert\mathcal{D}_n\vert-g(\hat{\bm{z}})<-\epsilon\right)\leq \P\left(n^{-1}\left\vert\left(\mathcal{D}_k^A\right)_n\right\vert-g(\hat{\bm{z}})<-\epsilon\right). \]
Further, by Lemma \ref{lem:convergence:g}, for $k$ large enough, we have $g_k^A(\hat{z}_k^A)>g(\hat{\bm{z}})-\epsilon/2$ and hence
\[ \P\left(n^{-1}\vert\mathcal{D}_n\vert-g(\hat{\bm{z}})<-\epsilon\right) \leq \P\left(n^{-1}\left\vert\left(\mathcal{D}_k^A\right)_n\right\vert-g_k^A(\hat{\bm{z}}_k^A)<-\epsilon/2\right). \]
Applying Theorem \ref{thm:finitary:weights} to the finitary system, as $n\to\infty$ we derive $\P\left(n^{-1}\vert\mathcal{D}_n\vert-g(\hat{\bm{z}}\right)<-\epsilon) \to 0$, which shows the first part of the theorem.

Similarly, for the second part, by Lemma \ref{lem:stochastic:domination}
\[ \P\left(n^{-1}\vert\mathcal{D}_n\vert-g(\bm{z}^*)>\epsilon\right) \leq \P\left(n^{-1}\left\vert\left(\mathcal{D}_k^B\right)_n\right\vert-g(\bm{z}^*)>\epsilon\right) \]
and by Lemma \ref{lem:convergence:g}, for $k$ large enough it holds that $g_k^B((\bm{z}^*)_k^B)<g(\bm{z}^*)+\epsilon/2$. Hence an application of Theorem \ref{thm:finitary:weights} yields that

\vspace*{-0.5cm}
\begin{align*}
\P\left(n^{-1}\vert\mathcal{D}_n\vert-g(\bm{z}^*)>\epsilon\right) &\leq \P\left(n^{-1}\left\vert\left(\mathcal{D}_k^B\right)_n\right\vert-g_k^B((\bm{z}^*)_k^B)>\frac{\epsilon}{2}\right) \to 0,\quad\text{as }n\to\infty. \qedhere
\end{align*}
\end{proof}

\subsection{Proofs for Section \ref{sec:resilience}}\label{ssec:proofs:resilience}
\begin{proof}[Proof of Theorem \ref{thm:resilience}]
Let $\gamma\in(0,1)$ and define $(f^\gamma)^{r,\alpha,\beta}(\bm{z}):=(1-\gamma)f^{r,\alpha,\beta}(\bm{z})+\gamma\left(\zeta^{r,\alpha,\beta}-z\right)$. Further, let $S^\gamma:=\bigcap_{(r,\alpha,\beta)\in V}\{\bm{z}\in\R_{+,0}^V\,:\,(f^\gamma)^{r,\alpha,\beta}(\bm{z})\geq0\}$ and denote by $S_0^\gamma$ the largest connected component of $S^\gamma$ containing $\bm{0}$. Finally, define $\bm{z}^*(\gamma)$ by $(z^*)^{r,\alpha,\beta}(\gamma):=\sup_{\bm{z}\in S_0^\gamma}z^{r,\alpha,\beta}$. By the proof of Lemma \ref{lem:z^*:epsilon} we know that $\bm{z}^*(\gamma)\to\bm{0}$, as $\gamma\to0$, and hence also $g(\bm{z}^*(\gamma))\to0$, using continuity of $g$. Choose now $\gamma>0$ small enough such that $g(\bm{z}^*(\gamma))\leq\epsilon/3$ and $\delta>0$ small enough such that $(f^M)^{r,\alpha,\beta}(\bm{z})<(f^\gamma)^{r,\alpha,\beta}(\bm{z})$ uniformly for all $\bm{0}\leq\bm{z}\leq\bm{\zeta}$ componentwise and $\P(M=0)<\delta$. 
Then in particular $(\bm{z}^*)^M\leq\bm{z}^*(\gamma)$ and $g((\bm{z}^*)^M)\leq g(\bm{z}^*(\gamma)) \leq\epsilon/3$.

If we now possibly decrease $\delta$ such that $\delta\leq\epsilon/3$, then by Theorem \ref{thm:general:weights}, we derive for the final fraction of defaulted banks in the shocked system $n^{-1}\vert\mathcal{D}_n^M\vert$ that w.\,h.\,p.
\[ n^{-1}\vert\mathcal{D}_n^M\vert \leq g^M((\bm{z}^*)^M) + \epsilon/3 \leq g((\bm{z}^*)^M) + 2\epsilon/3 \leq \epsilon. \qedhere \]
\end{proof}

\begin{proof}[Proof of Lemma \ref{lem:z0}]
Let $S_0(\epsilon,I)$ denote the largest connected subset of
\[ S(\epsilon,I) := \bigcap_{(r,\alpha,\beta)\in V}\left\{\bm{z}\in\R_{+,0}^V\,:\,f^{r,\alpha,\beta}(\bm{z})\geq -\epsilon\1\{(r,\alpha,\beta)\in I\}\right\} \]
containing $\bm{0}$. Then by replacing $S_0$ in the proof of Lemma~\ref{lem:existence:hatz} with $S_0(\epsilon,I)$, we obtain existence of a smallest (componentwise) point $\hat{\bm{z}}(\epsilon,I)\in\R_{+,0}^V$ such that $f^{r,\alpha,\beta}(\hat{\bm{z}}(\epsilon,I))=-\epsilon$ for $(r,\alpha,\beta)\in I$ and $f^{r\alpha,\beta}(\hat{\bm{z}}(\epsilon,I))=0$ for $(r,\alpha,\beta)\in V\backslash I$. 
In particular, $\hat{\bm{z}}(\epsilon,I)\in S_0(\epsilon,I)$. Let now
\[ T(\epsilon,I) := \bigcap_{(r,\alpha,\beta)\in V}\left\{\bm{z}\in\R_{+,0}^V\,:\,f^{r,\alpha,\beta}(\bm{z})\leq -\epsilon\1\{(r,\alpha,\beta)\in I\}\right\}. \]
Then clearly $\hat{\bm{z}}(\epsilon,I)\in T(\epsilon,I)$. Further, in the proof of the existence of $\hat{\bm{z}}(\epsilon,I)$ we can 
use any upper bound $\bm{z}\in T(\epsilon,I)$, which shows that $\hat{\bm{z}}(\epsilon,I)\leq\bm{z}$ componentwise. In particular, 
$\hat{\bm{z}}(\epsilon,I)$ is monotone in $\epsilon$ and therefore $\tilde{\bm{z}}(I):=\lim_{\epsilon\to0+}\hat{\bm{z}}(\epsilon,I)$ exists.

Let now $\bar{\bm{z}}\in T(I)$ arbitrary. Then there exists a sequence $(\bm{z}_k)_{k\in\N}\subset\R_{+,0}^V$ with $f^{r,\alpha,\beta}(\bm{z}_k)<0$ for $(r,\alpha,\beta)\in I$ respectively $f^{r,\alpha,\beta}(\bm{z}_k)\leq0$ for $(r,\alpha,\beta)\in V\backslash I$ such that $\lim_{k\to\infty}\bm{z}_k=\bar{\bm{z}}$. By finiteness of $I$, we can then find $\epsilon_k>0$ such that $f^{r,\alpha,\beta}(\bm{z}_k)\leq-\epsilon_k\1\{(r,\alpha,\beta)\in I\}$ for any $(r,\alpha,\beta)\in V$ and $k\in\N$. In particular, $\bm{z}_k\in T(\epsilon_k,I)$ and hence $\bm{z}_k\geq \hat{\bm{z}}(\epsilon_k,I)\geq\tilde{\bm{z}}$. As $k\to\infty$, we can thus conclude that $\tilde{\bm{z}}\leq\bar{\bm{z}}$ for any $\bar{\bm{z}}\in T(I)$ and hence $\tilde{\bm{z}}\leq\bm{z}_0(I)$. On the other hand, $\tilde{\bm{z}}(I)\in T(I)$ by definition and therefore $\tilde{\bm{z}}(I)=\bm{z}_0(I)$.

Finally, note that $\bm{z}_0(I)=\lim_{\epsilon\to0+}\hat{\bm{z}}(\epsilon,I) \in \bigcap_{\epsilon>0}S_0(\epsilon,I)=S_0$, where the last equality follows from $\bigcap_{\epsilon>0}S_0(\epsilon,I)\subset S$ and that $\bigcap_{\epsilon>0}S_0(\epsilon,I)$ must be a connected set containing $\bm{0}$ since $S_0(\epsilon,I)$ is a chain of connected, compact sets containing $\bm{0}$.
\end{proof}

\begin{proof}[Proof of Theorem \ref{thm:non-resilience}]
Let $\hat{\bm{z}}^M$ denote the analogue of $\hat{\bm{z}}$ for the ex post shocked system. Then 
\begin{align*}
&f^{r,\alpha,\beta}(\hat{\bm{z}}^M) + \E\Bigg[W^{+,r,\alpha}\P\Bigg(\sum_{s\in[R]}s\mathrm{Poi}\Bigg(\sum_{\gamma\in[T]}W^{-,s,\gamma}(\hat{z}^M)^{s,\beta,\gamma}\Bigg)\leq C-1\Bigg)\1\{A=\beta\}\1\{M=0\}\Bigg]\\
&= (f^M)^{r,\alpha,\beta}(\hat{\bm{z}}^M) = 0
\end{align*}
and hence $f^{r,\alpha,\beta}(\hat{\bm{z}}^M)\leq0$ with equality if and only if $\E[W^{+,r,\alpha}\1\{A=\beta\}\1\{M=0\}]=0$. Define now $\epsilon:=-\max_{(r,\alpha,\beta)\in I}f^{r,\alpha,\beta}(\hat{\bm{z}}^M)>0$, so that $\hat{\bm{z}}^M\in T(\epsilon,I)$, where $T(\epsilon,I)$ as in the proof of Lemma \ref{lem:z0}. 
In the construction of $\hat{\bm{z}}(\epsilon,I)$ (see Lemma \ref{lem:z0}), we can then use the upper bound $\hat{\bm{z}}^M$ and obtain that 
$\hat{\bm{z}}(\epsilon,I)\leq\hat{\bm{z}}^M$ and hence $\hat{\bm{z}}^M\geq\bm{z}_0(I)$. Since now
\begin{align*}
\delta &:= g^M(\hat{\bm{z}}^M) - g(\bm{z}_0(I)) \geq g^M(\bm{z}_0(I)) - g(\bm{z}_0(I))\\
&\hphantom{:}=\sum_{\beta\in[T]}\E\Bigg[\P\Bigg(\sum_{s\in[R]}s\mathrm{Poi}\Bigg(\sum_{\gamma\in[T]}W^{-,s,\gamma}z_0^{s,\beta,\gamma}(I)\Bigg) \leq C-1\Bigg)\1\{A=\beta\}\1\{M=0\}\Bigg] >0,
\end{align*}
we can apply Theorem \ref{thm:general:weights} to conclude that
\[ \lim_{n\to\infty}\P\left(n^{-1}\vert\mathcal{D}_n^M\vert < g(\bm{z}_0(I))\right) = \lim_{n\to\infty} \P\left(n^{-1}\vert\mathcal{D}_n^M\vert < g^M(\hat{\bm{z}}^M) - \delta\right) = 0 \]
and hence $n^{-1}\vert\mathcal{D}_n^M\vert \geq g(\bm{z}_0(I))$ w.\,h.\,p.

Now assume that $\bm{z}_0(I)\neq\bm{0}$. Since $\bm{z}_0(I)$ is a joint root of the functions $f^{r,\alpha,\beta}$, $(r,\alpha,\beta)\in V$, we then derive that there exists $(r,\alpha,\beta)\in V$ such that
\[ \E\Bigg[W^{+,r,\alpha}\P\Bigg(\sum_{s\in[R]}s\mathrm{Poi}\Bigg(\sum_{\gamma\in[T]}W^{-,s,\gamma}z_0^{s,\beta,\gamma}(I)\Bigg)\geq C\Bigg) \1\{A=\beta\}\Bigg] = z_0^{r,\alpha,\beta}(I) > 0 \]
and hence
\[ \E\Bigg[\P\Bigg(\sum_{s\in[R]}s\mathrm{Poi}\Bigg(\sum_{\gamma\in[T]}W^{-,s,\gamma}z_0^{s,\beta,\gamma}(I)\Bigg)\geq C\Bigg) \1\{A=\beta\}\Bigg] > 0. \]
In particular, $g(\bm{z}_0(I))>0$.
\end{proof}

\begin{proof}[Proof of Lemma \ref{lem:z0:equals:z*}]
By Lemma \ref{lem:z0} clearly $\bm{z}_0(\tilde{V})\leq\bm{z}^*$. Assume now that $\bm{z}_0(\tilde{V})\lneq\bm{z}^*$. Then for some $(r,\alpha,\beta)\in\tilde{V}$ it must hold $z_0^{r,\alpha,\beta}(\tilde{V})<(z^*)^{r,\alpha,\beta}$ and by the construction of $\bm{z}_0(\tilde{V})$ in the proof of Lemma \ref{lem:z0} we can find $\epsilon>0$ such that $z_0^{r,\alpha,\beta}(\tilde{V})\leq\hat{z}^{r,\alpha,\beta}(\epsilon,\tilde{V})<(z^*)^{r,\alpha,\beta}$. Now by the definition of $\bm{z}^*$ and connectedness of $S_0$, we find $S_0\ni\tilde{\bm{z}}\leq\hat{\bm{z}}(\epsilon,\tilde{V})$ such that $\tilde{z}^{r,\alpha,\beta}=\hat{z}^{r,\alpha,\beta}(\epsilon,\tilde{V})$. But then $f^{r,\alpha,\beta}(\tilde{\bm{z}})\leq f^{r,\alpha,\beta}(\hat{\bm{z}}(\epsilon,\tilde{V}))=-\epsilon<0$ which contradicts $\tilde{\bm{z}}\in S_0$.
\end{proof}

{\begin{multicols}{2}
\footnotesize
\bibliography{finance}
\bibliographystyle{abbrv}
\end{multicols}}

\end{document}